\newcommand{\Q}{\mathbb{Q}}
\newcommand{\C}{\mathbb{C}}
\newcommand{\R}{\mathbb{R}}
\newcommand{\F}{\mathbb{K}}
\newtheorem{theorem}{Theorem}[section]
\newtheorem{corollary}{Corollary}[theorem]
\newtheorem{lemma}[theorem]{Lemma}
\newtheorem{definition}[theorem]{Definition}
\newtheorem{proposition}[theorem]{Proposition}
\newtheorem{remark}[theorem]{Remark}
\newenvironment{example}[2][Example]{\begin{trivlist}
\item[\hskip \labelsep {\bfseries #1}]}{\end{trivlist}}
\newcommand{\noop}[1]{}
\title{Subfile Example}
\author{Team Learn Overleaf}
\date{ }
\begin{document}
\title{Black Box Absolute Reconstruction\\ for  Sums of Powers of Linear Forms }

\author{Pascal Koiran and Subhayan Saha\thanks{Univ Lyon, EnsL, UCBL, CNRS,  LIP, F-69342, LYON Cedex 07, France.
  Email: firstname.lastname@ens-lyon.fr.}}
\maketitle
\begin{abstract}
We study the decomposition of multivariate polynomials as sums of powers of linear forms. We give a randomized algorithm for the following problem: If a homogeneous polynomial $f \in K[x_1 , . . . , x_n]$ (where $K \subseteq \mathbb{C}$) of degree $d$ is given as a blackbox, decide whether it can be written as a linear combination of $d$-th powers of linearly independent complex linear forms. %We refer to this problem as the polynomial equivalence to some polynomial in $\mathcal{P}_d = \{\sum_{i =1 }^n \alpha_ix_i^d|\alpha_i \neq 0 \text{ for all } i\}$.
The main novel features of the algorithm are:
\begin{itemize}
    \item For $d = 3$, we improve by a factor of $n$ on the running time from the algorithm in \cite{koiran2020derandomization}. The price to be paid for this improvement though is that the algorithm now has two-sided error.
    \item For $d > 3$, we provide the first randomized blackbox algorithm for this problem that runs in time $\text{poly}(n,d)$ (in an algebraic model where only arithmetic operations and equality tests are allowed). Previous algorithms for this problem \cite{kayal11} as well as most of the existing reconstruction algorithms for other classes appeal to a polynomial factorization subroutine. This requires extraction of complex polynomial roots at unit cost and in standard models such as the unit-cost RAM or the Turing machine this approach does not yield polynomial time algorithms.
    \item For $d > 3$, when $f$ has rational coefficients (i.e. $K = \mathbb{Q}$), the running time of the blackbox algorithm is polynomial in $n,d$ and the maximal bit size of any coefficient of $f$. This yields the first algorithm for this problem over $\mathbb{C}$ with polynomial running time in the bit model of computation. 
\end{itemize}
These results are true even when we replace $\mathbb{C}$ by $\mathbb{R}$.
We view the problem as a tensor decomposition problem and use linear algebraic methods such as checking the simultaneous diagonalisability of the slices of a tensor. The number of such slices is exponential in $d$. But surprisingly, we show that after a random change of variables, computing just $3$ special slices is enough. We also show that our approach can be extended to the computation of the actual decomposition. %(this step relies on matrix diagonalisation which is not an algebraic step over $\mathbb{C}$).
In forthcoming work we plan to extend these results to overcomplete decompositions, i.e., decompositions in more than $n$ powers of linear forms.

\end{abstract}
\section{Introduction}

Lower bounds and polynomial identity testing are two fundamental problems about arithmetic circuits.
In this paper we consider another fundamental problem: arithmetic circuit reconstruction.
For an input polynomial $f$, typically given by a black box, the goal is to find the smallest circuit computing $f$ within some class $\cal C$ of arithmetic circuits. 
This problem can be divided in two subproblems: a decision problem (can $f$ be computed by a circuit of size $s$ from the class 
$\cal C$?) and the reconstruction problem proper (the actual construction of the smallest circuit for $f$).
In this paper we are interested in  {\em absolute reconstruction}, namely, in the case where $\cal C$ is a class of circuits over the field of complex numbers.  The name is borrowed from {\em absolute factorization}, a well-studied problem in computer algebra (see e.g. \cite{cheze05,ChezeLecerf07,gao03,shaker09}). 
Most of the existing reconstruction algorithms appeal to a polynomial factorization subroutine, 
see e.g.~\cite{GKP17ISSAC,GKP18,karnin09,kayal11,kayal18,kayal19,shpilka09}.
%\footnote{Some exceptions include...} 
This typically yields polynomial
time algorithms over finite fields or the field of rational numbers. However,  in standard models of computation such as the unit-cost RAM or the Turing machine %we do {\em not} obtain 
this approach does {\em not} yield polynomial time algorithms for absolute reconstruction. This is true even for the decision version of this problem. 
In the Turing machine model, the difficulty is as follows. We are given an input polynomial $f$, say with rational coefficients, and want to decide if there is a small circuit $C \in \cal C$ for $f$, where 
$C$ may have complex coefficients. 
After applying a polynomial factorization subroutine, a reconstruction algorithm will manipulate polynomials with coefficients
in a field extension of $\Q$. If this extension is of exponential degree,  the remainder of the algorithm will not
run in polynomial time. This point is explained in more detail in~\cite{koiran2020derandomization} on the example of a reconstruction algorithm 
due to Neeraj Kayal~\cite{kayal11}. 
One way out of this difficulty is to work in a model where polynomial roots can be extracted at unit cost, as suggested in a footnote of~\cite{garg19}. We will work instead in more standard models, namely, the Turing machine  model 
or the unit-cost RAM over~$\C$ with arithmetic operations only (an appropriate formalization
is provided by the { Blum-Shub-Smale} model of computation~\cite{BCSS,BSS89}).
Before presenting our results, we present the class of circuits studied in this paper.

\subsection{Sums of powers of linear forms} \label{sec:powersums}

Let $f(x_1,\ldots,x_n)$ be a homogeneous polynomial of degree $d$.
%, also called a {\em degree $d$ form.}  
In this paper we study decompositions of the type:
\begin{equation} \label{eq:linind}
  f(x_1,\ldots,x_n)=\sum_{i=1}^r l_i(x_1,\ldots,x_n)^d
\end{equation}
where the $l_i$ are linear forms. Such a decomposition is sometimes called
a Waring decomposition, or a symmetric tensor decomposition. The smallest possible value of $r$ is the symmetric tensor rank of $f$, and it is NP-hard to compute already for $d=3$~\cite{shitov16}. One can nevertheless obtain polynomial time algorithms by restricting to a constant value of $r$~\cite{bhargava21}. In this paper we assume instead that the linear forms $l_i$ are linearly independent (hence $r \leq n$). This setting was already studied by Kayal~\cite{kayal11}. 
It turns out that such a decomposition is unique when it exists, up to a permutation of the~$l_i$ and multiplications by $d$-th roots of unity. This follows for instance from Kruskal's uniqueness theorem. For a more elementary proof, see~\cite[Corollary 5.1]{kayal11} and~\cite[Section 3.1]{koiran2020derandomization}.

Under this assumption of linear independence, the case $r=n$ is of particular interest. In this case, $f$ is {\em equivalent} to
the sum of $d$-th powers polynomial 
\begin{equation} \label{eq:pd}
P_d(x)=x_1^d+x_2^d+\cdots+x_n^d
\end{equation}
in the sense that $f(x)=P_d(Ax)$ where $A$ is invertible.
A test of equivalence to $P_d$ was provided in~\cite{kayal11}.
The resulting algorithm provably runs in polynomial time over the field of rational numbers, but this is not the case over $\C$ due to the appeal to polynomial factorization. The first equivalence test to $P_d$ running in polynomial time over the field complex numbers was given in~\cite{koiran2020derandomization} for $d=3$. We will extend this result to arbitrary degree in this paper.
In the general case $r \leq n$ we can first compute the number of essential variables of $f$~\cite{carlini06,kayal11}. 
Then we can do a change of variables to obtain a polynomial depending only 
on its first $r$ variables~\cite[Theorem 4.1]{kayal11},
and conclude with a test of equivalence to $P_r$ (see \cite[Proposition 44]{koiran2020derandomization} for details).

Equivalence and reconstruction algorithms over $\Q$ are number-theoretic in nature in the sense 
that their behavior is highly sensitive to number-theoretic properties of the coefficients of the input polynomial. This point is clearly illustrated by an example from~\cite{koiran2020derandomization}:
 \begin{example}  %\label{ex:norat}
 
  Consider the rational  polynomial 
$$f(x_1,x_2)=(x_1+\sqrt{2}x_2)^3+(x_1-\sqrt{2}x_2)^3 = 2x_1^3+12x_1x_2^2.$$
This polynomial is equivalent to $P_3(x_1,x_2)=x_1^3+x_2^3$ over $\R$ and $\C$
but not over $\Q$.
\end{example}
 By contrast, equivalence and reconstruction algorithms over $\R$ and $\C$ are of a more geometric nature.
\subsection{Sums of cubes} \label{sec:cubes}

For $d=3$, the first test of equivalence to $P_d$ running in polynomial time over $\C$ and over $\R$ was given in~\cite{koiran2020derandomization}. There, the problem was treated as a tensor decomposition problem which was then solved by methods from linear algebra. We briefly outline this approach since the present paper improves on it and extends it to higher degree. Let $f \in \F[x_1,\ldots,x_n]$ be the input polynomial, where $\F$ is the field of real or complex numbers. We can form with the coefficients of $f$ a symmetric tensor\footnote{Recall that a tensor of order $d$ is symmetric it is invariant under all $d!$ permutations of its indices.} of order three $T=(T_{ijk})_{1 \leq i,j,k \leq n}$ so that
$$f(x_1,\ldots,x_n)=\sum_{i,j,k=1}^n T_{ijk} x_i x_j x_k.$$
This tensor can be cut into $n$ slices $T_1,\ldots,T_n$ where $T_k = (T_{ijk})_{1 \leq i,j \leq n}$. Each slice is a symmetric matrix of size $n$. By abuse of language we also say that $T_1,\ldots,T_n$ are the slices of $f$.
The equivalence test to $P_3$ proposed in~\cite{koiran2020derandomization} works as follows.
\begin{enumerate}
\item On input $f \in \F[x_1,\ldots,x_n]$, pick a random matrix $R \in M_n(\F)$ and set $h(x)=f(Rx)$.
\item Let $T_1,\ldots,T_n$ be the slices of $h$. If $T_1$ is singular, reject. Otherwise, compute $T'_1=T_1^{-1}$.
\item If  the matrices $T'_1T_k$ commute and are all diagonalizable over $\F$, accept. Otherwise, reject.
\end{enumerate}
This simple randomized algorithm has one sided error: it can fail (with low probability) only when $f$ is equivalent to $P_3$.
Its analysis is based on the following characterization \cite[Section 3.2]{koiran2020derandomization}: 
\begin{theorem}\label{thm:socchar}
A degree $3$ homogeneous polynomial $f \in \F[x_1,...,x_n]$ is equivalent to $P_3$ iff its slices $T_1,...,T_n$ span a non-singular matrix space and the slices are simultaneously diagonalisable by congruence, i.e., there exists an invertible matrix $Q \in M_n(\F)$ such that $Q^TT_iQ$ is diagonal for all $i \in [n]$.
\end{theorem}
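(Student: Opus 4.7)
The plan is to exploit a direct transformation rule: if $f(x) = g(Ax)$ with $g$ having slices $S_1, \ldots, S_n$, then the slices of $f$ satisfy
\[
T_c = \sum_{k=1}^n A_{kc}\, A^T S_k A.
\]
Applied to $g = P_3$, whose slices are the elementary matrices $E_{kk}$, this reduces to a rank-one expansion $T_c = \sum_k A_{kc}\, a_k a_k^T$ where $a_k$ denotes the $k$-th row of $A$ written as a column. For the forward direction I would then take $Q = A^{-1}$ and observe that $(A^{-1})^T a_k = e_k$, so each $Q^T T_c Q$ becomes the diagonal matrix with entries $A_{1c}, \ldots, A_{nc}$. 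Non-singularity of the span of the $T_c$ follows from choosing coefficients $\lambda = A^{-1}\mathbf{1}$: the combination $\sum_c \lambda_c T_c$ is then congruent to $I$ and hence invertible.

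For the converse, assume $Q^T T_c Q = D_c$ is diagonal for every $c$ and $\sum_c \mu_c T_c$ is invertible for some $\mu \in \F^n$. I would introduce the polynomial $h(y) := f(Qy)$ and compute its slices by the same formula, obtaining $\tilde T_c = \sum_k Q_{kc} D_k$, which is still diagonal. The central step is to combine this diagonality with the full symmetry of the tensor $\tilde T$: diagonality of the $c$-th slice gives $\tilde T_{abc} = 0$ whenever $a \ne b$, and applying $\tilde T_{abc} = \tilde T_{acb}$ together with diagonality of the $b$-th slice forces $\tilde T_{abc} = 0$ whenever any two indices differ. Consequently $h(y) = \sum_i c_i y_i^3$ with $c_i = \tilde T_{iii}$.

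To finish I need every $c_i$ to be nonzero, and this is where the non-singularity hypothesis enters. Congruence by $Q$ preserves the property of spanning an invertible matrix, so $\sum_c \mu_c \tilde T_c$ is an invertible diagonal matrix; its $i$-th diagonal entry equals $\mu_i c_i$ by the previous step, forcing all $c_i \ne 0$. Choosing cube roots $\alpha_i$ of $c_i$ and letting $C = \mathrm{diag}(\alpha_1, \ldots, \alpha_n)$ yields $h(y) = P_3(Cy)$, hence $f(x) = P_3(CQ^{-1}x)$, which is the desired equivalence.

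I expect the main obstacle to be the interplay between simultaneous diagonalization by congruence and tensor symmetry: neither condition alone pins down $\tilde T$, but together they force the tensor to be supported only on the diagonal $a=b=c$. The non-singularity condition plays the complementary role of guaranteeing that no $c_i$ vanishes; without it we would at best obtain equivalence to $P_3$ in fewer than $n$ variables, rather than an honest invertible change of variables witnessing equivalence to $P_3$ on all $n$ variables.
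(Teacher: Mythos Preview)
Your proof is correct and follows essentially the same route as the paper: the paper derives Theorem~\ref{thm:socchar} from the slice transformation rule (Theorem~\ref{thm:P3structural}) and Lemma~\ref{lem:subspace}, and its degree-$d$ generalization (Theorem~\ref{thm:Pdchar}) is proved by exactly your argument of passing to $h(y)=f(Qy)$, using tensor symmetry to force $h=\sum_i c_i y_i^d$ (this is Lemma~\ref{lem:charconverse}), and invoking non-singularity of the span to get $c_i\neq 0$. One small notational slip: the coefficients witnessing invertibility of $\sum_c \mu_c \tilde T_c$ are not the same $\mu$ as for $\sum_c \mu_c T_c$ (since $\tilde T_c = \sum_k Q_{kc}\, Q^T T_k Q$), but as you correctly note the span is preserved under congruence, so some choice of coefficients works and the conclusion $c_i\neq 0$ stands.
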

\subsection{Connection to Tensor Decomposition}\label{subsec:tensorconn}
Using the relation between tensors and polynomials, we can see that a homogeneous degree-$d$ polynomial $f \in \mathbb{K}[x_1,...,x_n]$ can be written as a sum of $d$-th of linear forms over $\mathbb{K}$ if and only if there exist $v_i \in \mathbb{K}$ such that the corresponding symmetric tensor $T_f$ can be decomposed as $T_f = \sum_{i} v_i^{\otimes d}$. This is often referred to as the tensor decomposition problem for the given tensor~$T$. 
\par
Most tensor decompositions algorithms are numerical.  Without any attempt at exhaustivity, one may cite the ALS method \cite{Kolda2009TensorDA} (which lacks a good complexity analysis), tensor power iteration \cite{JMLR:v15:anandkumar14b} (for orthogonal tensor decomposition) or Jennrich's algorithm~\cite{Harshman1970FoundationsOT,moitra2018algorithmic} for ordinary tensors.
Unlike the above algorithm from~\cite{koiran2020derandomization}, these numerical algorithms do not provide any decision procedure.
The algebraic algorithm from~\cite{koiran2020derandomization} seems closest in spirit to Jennrich's: they both rely on simultaneous diagonalization and on linear independence assumptions on the vectors involved in the tensor decomposition.
Algorithms for symmetric tensor decomposition can be found in the algebraic literature, see e.g. \cite{brachat10,bernardi11}.
These two papers do not provide any complexity analysis for their algorithms.

\subsection{Results and methods} \label{sec:results}

Our main contributions are as follows. Recall that $P_d$ is the sum of $d$-th powers polynomials~(\ref{eq:pd}),
and let us assume that the input $f \in \C[x_1,\ldots,x_n]$ is a homogeneous polynomial of degree $d$.
\begin{itemize}
\item[(i)] For $d=3$, we improve by a factor of~$n$ on the running time of the test of equivalence to $P_3$ from~\cite{koiran2020derandomization} presented in Section~\ref{sec:cubes}. 
The price to be paid for this improvement is that the algorithm now has two-sided error.

\item[(ii)] For $d>3$, we provide the first blackbox algorithm for equivalence to~$P_d$ with running time polynomial in $n$ and $d$,
in an algebraic model where only arithmetic operations and equality tests are allowed (i.e., computation of polynomial roots are {\em not allowed}). 
\item[(iii)] For $d>3$, when $f$ has rational coefficients this blackbox algorithm runs in polynomial time in the bit model of computation. More precisely, the running time is polynomial in $n,d$ and the maximal bit size of any coefficient of $f$.
This yields the first test of equivalence to $P_d$ over $\C$ with polynomial running time in the bit model of computation.
\end{itemize}
As outlined in Section~\ref{sec:powersums}, these results have application to decomposition into sums of powers of linearly 
independent linear forms over $\C$. Namely, we can decide whether the input polynomial admits such a decomposition, and if it 
does we can compute the number of terms $r$ in such a decomposition. The resulting algorithm runs in polynomial time
in the algebraic model of computation, as in item (ii) above; when the input has rational coefficients it runs in polynomial time in the bit model of computation, as in (iii) (refer to Appendix \ref{app:dthpowers} for a detailed complexity analysis). This is the first algorithm with these properties. 
It can be viewed as an algebraic, high order, black box version of Jennrich's algorithm. 
\par
Using the relation to tensor decomposition problem mentioned in Section \ref{subsec:tensorconn}, if an order $d$-tensor $T \in K^{n \times ... \times n}$ is given as a blackbox, we give an algorithm that runs in time $\text{poly}(n,d)$ to check if there exist linearly independent vectors $v_i \in \mathbb{K}^n$ such that $T = \sum_{i = 1}^t \alpha_i v_i^{\otimes d}$ for some $t \leq n$. Note here that $K \subseteq \mathbb{C}$ and $\mathbb{K} = \mathbb{C} \text{ or } \mathbb{R}$.
\par
As an intermediate result, we obtain a new randomized algorithm for checking that $k$ input matrices commute 
(see Lemma~\ref{lem:comm} towards the end of this section).
\par
Finally, we show that our linear algebraic approach can be extended to the computation of the actual  decomposition.
For instance, when $f \in \C[x_1,\ldots,x_n]$ is equivalent to $P_d$, 
we can compute an invertible matrix $A$ such that $f(x)=P_d(Ax)$.
We emphasize that for this result we must step out of our usual algebraic model, and allow the computation of polynomial
roots. The matrix $A$ is indeed not computable from $f$ with arithmetic operations only, as shown %for instance 
by the example in Section~\ref{sec:powersums}. We therefore obtain an alternative to the algorithm from~\cite{kayal11} for the computation of~$A$. That algorithm relies on multivariate polynomial factorization, whereas our algorithm relies on matrix diagonalization
(this is not an algebraic task since diagonalizing a matrix requires the computation of its eigenvalues).

{\bf Real versus complex field.} For $\mathbb{K} = \R$ and even degree there is obviously a difference between sums of $d$-th powers of linear forms and linear combinations of $d$-th powers. In this paper we wish to allow arbitrary
linear combinations. For this reason, in the treatment of the high order case ($d>3$) we are not  interested in equivalence to $P_d$ only. Instead, we would like to know whether the input is equivalent to some polynomial of the form 
$\sum_{i=1}^n \alpha_i x_i^d$ with $\alpha_i \neq 0$ for all $i$. We denote by ${\cal P}_d$ this class of polynomials
(one could even assume that $\alpha_i = \pm 1$ for all $i$). At first reading, there is no harm in assuming that 
$\mathbb{K} = \C$. In this case, one can assume without loss of generality that $\alpha_i=1$ for all $i$. 
For $\mathbb{K}=\R$, having to deal with the whole of~${\cal P}_d$ slightly complicates notations, but the proofs are not
 significantly more complicated than for $\mathbb{K}=\C$. For this reason, in all of our results we give a  unified treatment of
 the two cases $\mathbb{K} = \C$ and $\mathbb{K} = \R$.

{\bf Methods.} In order to extend the approach of Section~\ref{sec:cubes} to higher order, we associate to a homogeneous
polynomial of degree $d$ the (unique) symmetric tensor $T$ of order $d$ such that
$$f(x_1,\ldots,x_n)=\sum_{i_1,\ldots,i_d=1}^n T_{i_1...i_d} x_{i_1} x_{i_2} \ldots x_{i_d}.$$
We show in Section \ref{sec:equivd} that Theorem~\ref{thm:socchar} can be generalized as follows:
\begin{theorem}\label{thm:Pdcharintro}
A degree $d$ homogeneous polynomial   $f \in \mathbb{C}[x_1,\ldots,x_n]$ is equivalent to %$P_d = \sum_i \alpha_i x_i^d$ 
$P_d = \sum_{i=1}^n x_i^d$ 
%such that $\alpha_i \neq 0$
 if and only if its slices %$T_{11...1},...,T_{nn...n}$ 
 span a nonsingular matrix space and the slices are simultaneously diagonalizable by congruence, i.e., there exists an invertible matrix  $Q \in M_n(\C)$ %$Q \in M_n(\mathbb{K})$ 
 such that %the matrices $Q^T T_{i_1...i_{d-2}} Q$ are diagonal for all $i_1...i_{d-2} \in [n]$.
for every slice $S$ of~$f$, the matrix $Q^TSQ$ is diagonal.
\end{theorem}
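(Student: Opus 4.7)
The plan is to mirror the structure of the proof of Theorem~\ref{thm:socchar}, with one genuinely new ingredient: a tensor-symmetry argument that upgrades ``every slice is diagonal'' to ``the tensor is supported on the diagonal.''

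I would start with the easy direction. If $f(x)=P_d(Ax)$, then the associated symmetric tensor is $T=\sum_{k=1}^n a_k^{\otimes d}$, where $a_1,\ldots,a_n$ are the rows of $A$. Fixing the last $d-2$ indices, the slice is
$$S_{i_3\ldots i_d}=\sum_{k=1}^n (a_k)_{i_3}\cdots(a_k)_{i_d}\,a_k a_k^T = A^T D_{i_3\ldots i_d}\,A,$$
with $D_{i_3\ldots i_d}$ the diagonal matrix whose $(k,k)$-entry is $(a_k)_{i_3}\cdots(a_k)_{i_d}$. Hence $Q=A^{-1}$ simultaneously diagonalizes every slice by congruence. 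For the nonsingular span, note that for $f=P_d$ itself the slices $S_{k\cdots k}$ are $e_k e_k^T$ and sum to $I$; after conjugating by $A$, the corresponding linear combination of slices of $f$ is $A^T A$, which is invertible.

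For the converse, assume the slices of $T$ are simultaneously diagonalizable by congruence via some $Q$ and their span is nonsingular. Setting $g(x)=f(Qx)$ with tensor $T'$, a direct index-by-index computation shows that every slice of $T'$ equals $Q^T$ (a linear combination of slices of $T$) $Q$, hence is diagonal by hypothesis. The key step is then to propagate this slicewise diagonality to the full tensor. Because every slice is diagonal in its first two indices, $T'_{i_1 i_2 i_3\ldots i_d}=0$ whenever $i_1\neq i_2$. Using the full $S_d$-symmetry of $T'$ (inherited from $T$), any two indices can be transposed into positions $1$ and $2$, so $T'_{i_1\ldots i_d}$ vanishes as soon as some two of the indices differ. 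Consequently $T'=\sum_k \alpha_k\, e_k^{\otimes d}$, i.e.\ $g(x)=\sum_k \alpha_k x_k^d$.

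To finish, I would use the nonsingular-span hypothesis. The slices of $T'$ are either zero or of the form $\alpha_k e_k e_k^T$, so the congruence-transported span contains a nonsingular matrix only if every $\alpha_k$ is nonzero. Over $\mathbb{C}$ one may choose $\beta_k$ with $\beta_k^d=\alpha_k$ and set $D=\mathrm{diag}(\beta_1,\ldots,\beta_n)$; then $g(x)=P_d(Dx)$ and therefore $f(x)=P_d(DQ^{-1}x)$, giving the equivalence to $P_d$ with witness $A=DQ^{-1}$. I expect the only subtle point to be the symmetry argument in the previous paragraph: one must verify cleanly that for every $(i_1,\ldots,i_d)$ with two unequal entries, an appropriate permutation places those entries in the first two positions and produces a slice that is known to be diagonal. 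The rest is routine linear algebra.
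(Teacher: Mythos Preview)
Your proposal is correct and follows essentially the same route as the paper: the paper proves this as Theorem~\ref{thm:Pdchar}, using the change-of-variables formula for slices (Theorem~\ref{thm:Pdslicestucture}) for the forward direction, and for the converse setting $g(x)=f(Qx)$, invoking the same tensor-symmetry argument you describe (packaged there as Lemma~\ref{lem:charconverse}) to conclude $g=\sum_k \alpha_k x_k^d$, and then using nonsingularity of the span to force every $\alpha_k\neq 0$. The only cosmetic difference is that the paper works with the class $\mathcal P_d$ and leaves the extraction of $d$-th roots over~$\mathbb{C}$ implicit, whereas you make that last step explicit.
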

This characterization is satisfactory from a purely structural point of view, but not from an algorithmic point of view because the number of slices of a tensor of order $d$ is exponential in $d$.
Recall indeed that a  tensor of size $n$ and order $d$ has $\frac{d(d-1)}{2} n^{d-2}$ slices:  a slice is obtained by
fixing the values of $d-2$ indices. The tensors encountered in this paper are all symmetric since they originate from
homogeneous polynomials. Taking the symmetry constraints into consideration reduces the number of distinct slices 
to ${n+d-3 \choose d-2}$ at most: this is the number of multisets of size $d-2$ in a set of $n$ elements, or equivalently
the number of monomials of degree $d-2$ in the variables $x_1,\ldots,x_n$.
This number remains much too large to reach our goal of a complexity polynomial in $n$ and $d$. 
This problem has a surprisingly simple solution: our equivalence algorithm needs to work with 3 slices only!
This is true already for $d=3$, and is the reason why we can save a factor of~$n$ compared to the algorithm of Section~\ref{sec:cubes}.
More precisely, we can replace the loop at line 3 of that algorithm by the following test: check that $T'_1 T_2$ is diagonalizable, and commutes with $T'_1 T_3$ (recall that $T'_1=T_1^{-1}$). It may be surprising at first sight that we can work with the first 3 slices only of a tensor 
with $n$ slices. To give some plausibility to this claim, note that $T_1,T_2,T_3$ are not slices of the input $f$, but slices
of the polynomial $h(x)=f(Rx)$ obtained by a random change of variables. As a result, each slice of $h$  contains some
information on {\em all} of the $n$ slices of $f$.
The algorithm for order $d>3$ is of a similar flavor, but one must be careful in the choice of the 3 slices from~$h$.
Our algorithms are therefore quite simple (and the equivalence algorithm for $d=3$ is even somewhat simpler
than the algorithm from Section~\ref{sec:cubes}); but their analysis is not so simple and 
forms the bulk of this paper.

As a byproduct of our analysis of the degree 3 case, we obtain a randomized algorithm for testing the
commutativity of a family of matrices $A_1,\ldots,A_k$. The naive algorithm for this would check that $A_iA_j = A_j A_i$ 
for all $i \neq j$. Instead, we propose to test the commutativity of two random linear combinations of the $A_i$.
The resulting algorithm has one sided-error, and its probability of error can be bounded as follows:
\begin{lemma} \label{lem:comm}
Let $A_1,...,A_k \in M_n(\mathbb{K})$. We take two random linear combinations $A_{\alpha} = \sum_{i \in [k]} \alpha_i A_i$ and $A_{\beta} = \sum_{i \in [k]} \beta_i A_i$, where the $\alpha_i$ and $\beta_i$ are picked independently and uniformly at random from a finite set $S \subset \mathbb{K}$. If $\{A_i\}_{i \in [k]}$ is not a commuting family, then the two matrices $A_{\alpha}, A_{\beta}$ commute with probability at most $\frac{2}{|S|}$.
\end{lemma}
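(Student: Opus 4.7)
The plan is to reduce the statement to a single application of the Schwartz--Zippel lemma, with the commutator playing the role of the witness polynomial. Since the Lie bracket $[\cdot,\cdot]$ is bilinear, we can expand
\[
[A_\alpha, A_\beta] \;=\; \sum_{i,j=1}^{k} \alpha_i \beta_j \, [A_i, A_j],
\]
so that each of the $n^2$ entries of the matrix $[A_\alpha,A_\beta]$ is a polynomial in the $2k$ variables $\alpha_1,\dots,\alpha_k,\beta_1,\dots,\beta_k$ of total degree at most $2$ (degree $1$ in the $\alpha$'s and degree $1$ in the $\beta$'s).

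Next I would exhibit one such entry that is a nonzero polynomial, using the assumption that $\{A_i\}_{i\in[k]}$ is not a commuting family. Pick indices $i_0,j_0$ with $[A_{i_0},A_{j_0}] \neq 0$, and then an entry $(p,q)$ where $[A_{i_0},A_{j_0}]_{p,q} \neq 0$. The corresponding entry of the commutator is
\[
P(\alpha,\beta) \;=\; \sum_{i,j=1}^{k} \alpha_i \beta_j \, [A_i,A_j]_{p,q},
\]
and the coefficient of the monomial $\alpha_{i_0}\beta_{j_0}$ in $P$ is precisely $[A_{i_0},A_{j_0}]_{p,q} \neq 0$. In particular $P$ is a nonzero polynomial of total degree at most~$2$.

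Finally I apply the Schwartz--Zippel lemma to $P$ on the product set $S^{2k}$: the probability that $P(\alpha,\beta)=0$ when $(\alpha,\beta)$ is drawn uniformly from $S^{2k}$ is at most $(\deg P)/|S| \leq 2/|S|$. If $A_\alpha$ and $A_\beta$ commute then $[A_\alpha,A_\beta]=0$, which forces $P(\alpha,\beta)=0$ in particular, so the probability that they commute is bounded by the same quantity $2/|S|$, yielding the claim.

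There is no real obstacle: the only conceptual step is recognizing that the commutator of two generic linear combinations is itself a bilinear (hence total degree $2$) expression in the coefficients, after which Schwartz--Zippel gives the optimal bound immediately. The $2/|S|$ bound, rather than $1/|S|$, reflects exactly the fact that the polynomial has two ``factors'' worth of variables.
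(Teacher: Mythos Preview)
Your proof is correct and is essentially the same as the paper's: both expand the commutator bilinearly, pick an entry $(p,q)$ where some $[A_{i_0},A_{j_0}]$ is nonzero, and apply Schwartz--Zippel to the resulting degree-$2$ polynomial in the $2k$ scalars. The only cosmetic difference is that you read off the coefficient of $\alpha_{i_0}\beta_{j_0}$ directly, whereas the paper evaluates the polynomial at $(e_{i_0},e_{j_0})$ to witness that it is not identically zero --- which amounts to the same thing.
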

The resulting algorithm is extremely simple and natural, but we could not find in the literature on commutativity testing.
Commutativity testing has been studied in particular in the setting of black box groups, in the classical~\cite{Pak12} and quantum models~\cite{magniez07}.
Pak's algorithm~\cite{Pak12} is based on the computation of random subproducts of the $A_i$. 
In its instantiation to matrix groups~\cite[Theorem~1.5]{Pak12}, Pak suggests as a speedup to apply Freivald's technique~\cite{freivalds79} for the verification
of matrix products. This can be done in the same manner for Lemma~\ref{lem:comm}.
We stress that Pak's algorithm applies only to groups rather than semigroups; in particular, for the application to commutativity of matrices this means that the $M_i$ must be invertible.\footnote{Pak's result is definitely stated only for groups, and it appears that its correctness proof actually uses the invertibility hypothesis.} Note that there is no such assumption in Lemma~\ref{lem:comm};
compared to~\cite{Pak12} we therefore obtain a randomized algorithm for testing {\em matrix semigroup commutativity}.
We also note that the idea of testing commutativity on random linear combinations is akin to the general technique for 
the verification of identities in~\cite{schulman00}. 
However, in the case of commutativity testing that paper does not obtain any improvement over the trivial deterministic algorithm (see Theorem~3.1 in~\cite{schulman00}).
In order to analyze the higher order case $d >3$, we will derive an appropriate generalization of Lemma~\ref{lem:comm} (to matrices satisfying certain symmetry properties).

\subsection{Organization of the paper}
In Section \ref{sec:fastercubes}, we present a faster algorithm for equivalence to sum of cubes. We give a detailed complexity analysis of our algorithm in Appendix \ref{app:cubes} and compare it to that of \cite{koiran2020derandomization}.
In Section \ref{sec:dthpowers}, we extend our ideas for the degree-$3$ case to the arbitrary degree-$d$ case and give an algorithm for equivalence to sum of $d$-th powers (Algorithm \ref{alg:pd}). In fact our algorithm can test if the input polynomial is equivalent to some linear combination of $d$-th powers (As explained in Section \ref{sec:results}, these notions are different over $\mathbb{R}$ when $d$ is even). In Appendix \ref{app:pdalgebraic}, we give a detailed complexity analysis of Algorithm \ref{alg:pd}.
%we show that if the polynomial is given as a blackbox, Algorithm \ref{alg:pd} makes $\text{poly}(n,d)$ many calls to the blackbox and performs $\text{poly}(n,d)$ many arithmetic operations to decide equivalence to some polynomial in $\mathcal{P}_d$. 
In Appendix \ref{sec:bitcomplexity}, we show that when the input polynomial has rational coefficients, Algorithm~\ref{alg:pd} runs in polynomial time in the bit model of computation, as well. In Section \ref{sec:minvars}, we give an algorithm to check whether the input polynomial can be decomposed into a linear combination of $d$-th powers of at most $n$ many linearly independent linear forms.  In Appendix \ref{app:minvars}, we compute the number of blackbox calls and arithmetic operations performed by this algorithm.
In Section \ref{sec:reconstruction}, we show how we can modify our decision algorithm to give an algorithm that actually computes the linear forms and their corresponding coefficients. 
 
\subsection{Notations}\label{sec:notations}

We work in a field $\mathbb{K}$ which may be the field of real numbers or the field of complex numbers.
Some of our intermediate results (in particular, Lemma~\ref{lem:comm}) apply to other fields as well.
We denote by $\mathbb{K}[x_1,\ldots,x_n]_d$ the space of homogeneous polynomials of degree $d$ in $n$ variables
with coefficients in $\mathbb{K}$. A homogeneous polynomial of degree $d$ is also called a {\em degree-$d$ form}.
We denote by $P_d$ the polynomial $\sum_{i=1}^n x_i^d$, and we say that a degree $d$ form $f(x_1,\ldots,x_n)$ is equivalent to a sum of $d$-th powers if it is equivalent to $P_d$, i.e., if $f(x)=P_d(Ax)$ for some invertible matrix $A$.
{ More generally, we denote by ${\cal P}_d$ the set of polynomials of the form $\sum_{i=1}^n \alpha_i x_i^d$
with $\alpha_i \neq 0$ for all $i$. As explained in Section~\ref{sec:results}, for $\mathbb{K} = \R$ we are not only interested in equivalence to $P_d$: we would like to know whether the input is equivalent to one of the elements of  ${\cal P}_d$.}

We denote by $M_n(\mathbb{K})$ the space of square matrices of size $n$ with entries from $\mathbb{K}$. 
We denote by $\omega$ a feasible exponent for matrix multiplication, i.e., we assume that
two matrices of   $M_n(\mathbb{K})$ can be multiplied with $O(n^{\omega})$ arithmetic operations in $\mathbb{K}$.
\par
We denote by $M(d)$ the number of arithmetic operations required for multiplication of two polynomials of degree $\leq d$ and we will often refer to the $O(d\log d \log\log d)$ bounds given by \cite{Schonhage1971} for polynomial multiplication to give concrete bounds for our algorithms.
\par
Throughout the paper, we will choose the entries $r_{ij}$ of a matrix $R$ independently and uniformly at random from a finite set $S \subset {\mathbb{K}}$. When we calculate the probability of some event $E$ over the random choice of the $r_{ij}$, by abuse of notation instead of $\Pr_{r_{11},\ldots,r_{nn}\in S}[E]$ we simply write  $\Pr_{R \in S}[E]$.

\section{Faster algorithm for sums of cubes}\label{sec:fastercubes}

In this section we present our fast algorithm for checking whether an input polynomial $f(x_1,\ldots,x_n)$ is equivalent to 
$P_3=x_1^3+\cdots+x_n^3$ (see Algorithm~\ref{algo:cubes} below). As explained in Section~\ref{sec:powersums}, this means that $f(x)=P_3(Ax)$ for some invertible matrix $A$. In Section~\ref{sec:cubes} we saw that a degree 3 form in $n$ variables can be viewed as an order 3 tensor, which we can be cut into $n$ slices. All of our decomposition algorithms build on this approach.
%Our algorithm builds on this approach, and proceeds as follows:

%\begin{figure}
\begin{algorithm} \label{algo:cubes}
\SetAlgoLined
 \textbf{Input:} A degree-3 homogeneous polynomial $f$ \\
% \While{While condition}{
Let $R \in M_n(\mathbb{K})$ be a matrix such that its entries $r_{ij}$ are picked uniformly and independently at random from a finite set $S$ and set $h(x) = f(Rx)$ \\
Let $T_1, T_2, T_3$ be the first 3 slices of $h$. \\
\eIf{$T_1$ is singular}{reject}{compute $T_1' = T_1^{-1}$ \\
   \eIf{ $T_1'T_2$ and $T_1'T_3 $ commute and $T_1'T_2$ is diagonalisable over $\mathbb{K}$}{
   accept
   }{
   reject
  }
  }
 \caption{Randomized algorithm to check equivalence to $P_3$}
\end{algorithm}
%\end{figure}

Recall from Section~\ref{sec:cubes} that the equivalence algorithm from~\cite{koiran2020derandomization} needs to check that
the $n$ matrices $T_1'T_k$ commute and are diagonalisable, where $T_1,\ldots,T_n$ denote the slices of $h(x)=f(Rx)$.
Algorithm~\ref{algo:cubes} is faster because it only checks that $T_1'T_2$ and $T_1'T_3 $ commute and that $T_1'T_2$ is diagonalisable. We do a detailed complexity analysis of the two algorithms in Appendix~\ref{app:cubes}. It reveals that the cost
of the diagonalisability tests dominates the cost of the commutativity tests for both algorithms.
Since we have replaced $n$  diagonalisability tests by a single test, it follows that Algorithm~\ref{algo:cubes} is faster by a factor of~$n$. More precisely, we show that the algorithm from~\cite{koiran2020derandomization} performs $O(n^{\omega+2})$ arithmetic operations when $\mathbb{K} = \C$, but Algorithm~\ref{algo:cubes}  performs only $O(n^{\omega+1})$ arithmetic operations.

The remainder of this section is devoted to a correctness proof for Algorithm~\ref{algo:cubes}, including an analysis of the probability of error. Our main result about this algorithm is as follows.
\begin{theorem} \label{th:equiv3}
If an input $f \in \mathbb{K}[x_1,...,x_n]_3$ is not equivalent to a sum of cubes, then $f$ is rejected by the algorithm with high probability over the choice of the random matrix $R$. More precisely, if the entries $r_{i,j}$ are chosen uniformly and independently at random from a finite set $S \subseteq \mathbb{K}$  %with $|S|>2n$, 
then the input will be rejected with probability $\geq 1-\frac{2}{|S|}$.

Conversely, if $f$ is equivalent to a sum of n cubes then $f$ will be accepted
with high probability over the choice of the random matrix $R$. More precisely, if the entries $r_{i,j}$ are chosen uniformly and independently at random from a set $S \subseteq \mathbb{K}$, then the input will be accepted with probability $\geq 1-\frac{2n}{|S|}$.
\end{theorem}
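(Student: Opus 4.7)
The plan is to reduce the analysis of Algorithm~\ref{algo:cubes} to statements about three random linear combinations of the slices $T_1,\ldots,T_n$ of $f$ itself. A direct computation from $h(x) = f(Rx)$ shows that the $c$-th slice of $h$ is $S_c = R^{T} N_c R$ with $N_c = \sum_{k=1}^{n} R_{kc} T_k$. Writing $\alpha,\beta,\gamma$ for the first three columns of $R$, the identity $S_1^{-1}S_\ell = R^{-1}(N_\alpha^{-1}N_\ell)R$ holds whenever $R$ and $N_\alpha$ are invertible. Thus Algorithm~\ref{algo:cubes} accepts precisely when $R$ and $N_\alpha$ are invertible, $N_\alpha^{-1}N_\beta$ is diagonalisable over $\mathbb{K}$, and $N_\alpha^{-1}N_\beta$ commutes with $N_\alpha^{-1}N_\gamma$. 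I would then handle the two directions separately.

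For the completeness direction, I would write $f(x) = P_3(Ax)$ with $A$ invertible. A short calculation gives $T_k = A^T \mathrm{diag}(A_{\cdot k}) A$, hence $S_c = B^T \mathrm{diag}(B_{\cdot c}) B$ with $B = AR$. So whenever $B$ is invertible and $B_{\cdot 1}$ has only nonzero entries, $S_1^{-1}S_\ell = B^{-1}\mathrm{diag}(B_{\cdot \ell}/B_{\cdot 1})\, B$ is similar to a diagonal matrix with entries in $\mathbb{K}$; both $S_1^{-1}S_2$ and $S_1^{-1}S_3$ are diagonalised by the same $B$, so they commute. Invertibility of $B$ fails only when $\det R = 0$, a degree-$n$ polynomial in the $r_{ij}$ which vanishes with probability at most $n/|S|$ by Schwartz--Zippel. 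Each of the $n$ linear forms $B_{j1} = \sum_k A_{jk} r_{k1}$ is nonzero (since the $j$-th row of $A$ is) and vanishes with probability at most $1/|S|$; a union bound yields the required $\ge 1 - 2n/|S|$ acceptance probability.

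For the soundness direction, Theorem~\ref{thm:socchar} tells us that $f$ not equivalent to $P_3$ means that the slices $\{T_k\}$ either fail to span a nonsingular matrix space or fail to be simultaneously diagonalisable by congruence (SDC). In the first case $\det N_\alpha \equiv 0$ as a polynomial in $\alpha$ and the algorithm rejects with certainty. Otherwise I would fix $\alpha$ with $N_\alpha$ invertible and aim to show $\Pr_{\beta,\gamma}[\text{accept}\mid\alpha] \le 2/|S|$; averaging over $\alpha$ then gives the bound. The key structural step is to show that, under failure of SDC, the family $\{N_\alpha^{-1}T_k\}_{k}$ is \emph{not} simultaneously diagonalisable by similarity: indeed, if it were, the symmetry of $N_\alpha$ and of each $T_k$ would allow one to partition indices into joint eigenspaces of the common diagonals, diagonalise the symmetric ``overlap" matrix $P^T N_\alpha P$ on each block by congruence, and combine these to produce a simultaneous \emph{congruence}-diagonalisation of the $T_k$, contradicting the hypothesis. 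Hence $\{N_\alpha^{-1}T_k\}$ is either non-commuting, or commuting but with some member non-diagonalisable.

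In the first sub-case, Lemma~\ref{lem:comm} applied to the $n$ matrices $\{N_\alpha^{-1}T_k\}$ (using that $N_\alpha^{-1}N_\beta = \sum_k \beta_k N_\alpha^{-1}T_k$ and likewise for $\gamma$) gives $\Pr_{\beta,\gamma}[N_\alpha^{-1}N_\beta \text{ and } N_\alpha^{-1}N_\gamma \text{ commute}] \le 2/|S|$. The harder second sub-case is where commutativity is automatic and one must instead control the diagonalisability of $N_\alpha^{-1}N_\beta$. There, my plan is to use the joint generalised eigenspace decomposition $\mathbb{K}^n = \bigoplus_\mu V_\mu$ of the commuting family: on each $V_\mu$ one has $N_\alpha^{-1}T_k|_{V_\mu} = \mu(k) I + N_k^\mu$ with commuting nilpotents $N_k^\mu$, at least one of which, on some block of dimension $>1$, is nonzero. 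On that block $N_\alpha^{-1}N_\beta$ restricts to $\lambda(\beta) I + \sum_k \beta_k N_k^\mu$, which is diagonalisable only if $\sum_k \beta_k N_k^\mu = 0$; since at least one entry of this sum is a nonzero linear form in $\beta$, Schwartz--Zippel gives probability at most $1/|S|$. Combining the two sub-cases yields $\Pr_{\beta,\gamma}[\text{accept}\mid\alpha] \le 2/|S|$ as required. The most delicate ingredient, I expect, is the structural promotion step from simultaneous similarity-diagonalisation of $\{N_\alpha^{-1}T_k\}$ to SDC of $\{T_k\}$, which must be formulated uniformly for $\mathbb{K} = \mathbb{R}$ and $\mathbb{K} = \mathbb{C}$ (taking care with the passage to the algebraic closure in the real case).
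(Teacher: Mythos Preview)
Your proposal is correct and follows essentially the same architecture as the paper: the same reduction to three random linear combinations of the slices via $S_c=R^T N_c R$, the same completeness argument, and the same three-way soundness split (singular span; nonsingular but non-commuting; commuting but non-diagonalisable), with Lemma~\ref{lem:comm} handling the non-commuting case and a Schwartz--Zippel bound of $1/|S|$ for the diagonalisability case.

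The two places where you diverge slightly are worth noting. First, for the diagonalisability sub-case you argue via the joint generalized eigenspace decomposition and nilpotent parts, whereas the paper uses the cleaner abstract observation (Proposition~\ref{prop:diagon}) that the diagonalisable matrices inside a commuting subspace form a linear subspace, then picks a separating linear functional; both routes yield the same $1/|S|$ bound. Second, and more substantively, the ``promotion'' step you flag as delicate --- that simultaneous \emph{similarity}-diagonalisability of $\{N_\alpha^{-1}T_k\}$ forces simultaneous \emph{congruence}-diagonalisability of $\{T_k\}$ --- is exactly the nontrivial direction of Theorem~\ref{thm:main}, which the paper imports wholesale from~\cite{koiran2020derandomization} rather than re-proving. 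Your sketch of this step (block-diagonalise the symmetric matrix $P^T N_\alpha P$ by congruence on the joint eigenspace blocks, noting that the block structure preserves the diagonal $D_k$) is correct over both $\mathbb{R}$ and $\mathbb{C}$, so your self-contained treatment actually goes a bit further than the paper does here; just be sure, over $\mathbb{R}$, to carry out the generalized eigenspace decomposition over $\mathbb{C}$ and then observe that the resulting linear constraint on $\beta$ has real coefficients.
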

The second part of Theorem~\ref{th:equiv3} is the easier one, and it already follows from~\cite{koiran2020derandomization}.
Indeed, the same probability  of error $2n/|S|$ was already given for the randomized equivalence algorithm of~\cite{koiran2020derandomization}, and any input accepted by that algorithm is also accepted by our faster equivalence algorithm. Nevertheless, we give a self-contained proof of this error bound in Section~\ref{sec:positive} as a preparation 
toward the case of higher degree.

One of the reasons why the analysis is simpler for positive inputs is that there is only one way for a polynomial to be equivalent to $P_3$: its slices must satisfy all the properties of Theorem~\ref{thm:main} (at the end of Section~\ref{sec:equiv3}). By contrast, if a polynomial is not 
equivalent to $P_3$ this can happen in several ways depending on which property fails.
We analyze failure of commutativity in Section~\ref{sec:failcom} and failure of diagonalisability in  Section~\ref{sec:faildiag}.
Then we tie  everything together  in Section~\ref{sec:neg}.

\subsection{Characterization of equivalence to $P_3$} \label{sec:equiv3}

Toward the proof of Theorem~\ref{th:equiv3} we need some results from~~\cite{koiran2020derandomization}, which we recall in this section. We also give a complement in Theorem~\ref{thm:diagonalisable subspace}. First, let us recall how the slices of a polynomial evolve under a linear change of variables.
\begin{theorem}\label{thm:P3structural}
Let $g$ be a degree-$3$ form with slices $S_1,...,S_n$ and let $f(x) = g(Ax)$. The slices
$ T_1,...,T_n$ of $f$ are given by the formula:
\begin{align*}
    T_k = A^T D_k A
\end{align*}
 where $D_k = \sum_{i=1} a_{i,k} S_i$ and the $a_{i,k}$ are the entries of A.
In particular, if %$g = P_3(Ax)$ 
$g=\sum_{i=1}^n \alpha_i x_i^3$ we have $D_k = \text{diag }(\alpha_1 a_{1,k},...,\alpha_n a_{n,k} )$.
\end{theorem}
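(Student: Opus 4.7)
The plan is a direct index computation, using the correspondence between degree-$3$ forms and symmetric order-$3$ tensors and the definition of slices.

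First I would write both $g$ and $f$ in terms of their associated symmetric tensors $S = (S_{ij\ell})$ and $T = (T_{pqr})$:
\[
g(y) = \sum_{i,j,\ell} S_{ij\ell}\, y_i y_j y_\ell, \qquad f(x) = \sum_{p,q,r} T_{pqr}\, x_p x_q x_r,
\]
so that the $k$-th slice of $g$ is $S_k = (S_{ijk})_{i,j}$ and the $k$-th slice of $f$ is $T_k = (T_{pqk})_{p,q}$. Substituting $y_i = (Ax)_i = \sum_p a_{i,p} x_p$ into the first expression and reading off the coefficient of $x_p x_q x_r$ yields
\[
T_{pqr} \;=\; \sum_{i,j,\ell} S_{ij\ell}\, a_{i,p} a_{j,q} a_{\ell,r}.
\]

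Next I would specialize $r=k$ and reorganize the summation by pulling the factor depending on $\ell$ inward:
\[
(T_k)_{p,q} \;=\; \sum_{i,j} \Bigl(\sum_{\ell} a_{\ell,k} S_{ij\ell}\Bigr) a_{i,p} a_{j,q}.
\]
Setting $D_k = \sum_\ell a_{\ell,k} S_\ell$, the inner parenthesis is exactly $(D_k)_{i,j}$, and the remaining double sum is the $(p,q)$-entry of $A^T D_k A$. This gives the claimed identity $T_k = A^T D_k A$. No symmetry of $S$ is invoked in this derivation; symmetry is only needed to justify that the slices along any of the three axes coincide.

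For the special case $g = \sum_\ell \alpha_\ell x_\ell^3$, I would note that $S_\ell$ has a single nonzero entry $\alpha_\ell$ at position $(\ell,\ell)$, i.e.\ $S_\ell = \alpha_\ell e_\ell e_\ell^T$. Plugging into $D_k = \sum_\ell a_{\ell,k} S_\ell$ immediately gives $D_k = \operatorname{diag}(\alpha_1 a_{1,k}, \ldots, \alpha_n a_{n,k})$.

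There is no real obstacle here beyond careful index bookkeeping; the only thing to be mindful of is keeping separate the two roles played by the indices of $A$ (row index summed against the tensor, column index producing the new slice), which is why the formula reads $a_{\ell,k}$ rather than $a_{k,\ell}$ in the definition of $D_k$.
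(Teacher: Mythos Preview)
Your argument is correct. Note, though, that in this paper Theorem~\ref{thm:P3structural} is only stated, with a pointer to~\cite[Section~3.2]{koiran2020derandomization} for the proof; the paper's own argument for the analogous degree-$d$ statement (Theorem~\ref{thm:Pdslicestucture}) proceeds differently. There each slice is identified with a scalar multiple of the Hessian of a $(d-2)$-fold partial derivative, and the relation $T_{i_1\cdots i_{d-2}} = A^T D_{i_1\cdots i_{d-2}} A$ falls out of iterating the multivariate chain rule for $f(x)=g(Ax)$. Your direct tensor-index computation is more elementary and arguably cleaner in degree~$3$; the Hessian/chain-rule phrasing has the advantage of scaling uniformly to arbitrary order and of making explicit the link between slices and repeated partial derivatives, which the paper later exploits to extract $T_{\bar 1},T_{\bar 2},T_{\bar 3}$ from the black box.

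One small quibble: your remark that ``no symmetry of $S$ is invoked'' is not quite accurate. The step where you read off $T_{pqr} = \sum_{i,j,\ell} S_{ij\ell}\,a_{i,p}a_{j,q}a_{\ell,r}$ from the expansion of $f$ tacitly uses that this expression is already symmetric in $p,q,r$ (so that it \emph{is} the entry of the unique symmetric tensor associated to $f$), and that symmetry is inherited from the symmetry of~$S$.
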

In Theorem~\ref{thm:socchar} we gave a characterization of equivalence to $P_3$ based on simultaneous diagonalisation
by congruence. This characterization follows from Theorem~\ref{thm:P3structural} and the next lemma.
See~\cite[Section 3.2]{koiran2020derandomization} for more details on Theorem~\ref{thm:P3structural}, Lemma~\ref{lem:subspace} and the connection to Theorem~\ref{thm:socchar}.
\begin{lemma}\label{lem:subspace}
Let $f$ be a degree 3 homogeneous polynomial such that $f(x) = P_3(Ax)$ for some non-singular $A$. Let $\mathcal{U}$ and $\mathcal{V}$ be the subspaces of $M_n(\mathbb{K})$ spanned by slices of $f$ and $P_3$ respectively. Then the subspace $\mathcal{V}$ is the space of diagonal matrices and $\mathcal{U}$ is a non-singular subspace, i.e., it is not made of singular matrices only. 
\end{lemma}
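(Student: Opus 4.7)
The plan is to prove the two assertions separately, each by a short direct calculation using Theorem~\ref{thm:P3structural}.

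First I would handle the claim that $\mathcal{V}$ equals the space of diagonal matrices. From the expansion $P_3(x)=\sum_{i=1}^n x_i^3 = \sum_{i_1,i_2,i_3} T_{i_1i_2i_3}\,x_{i_1}x_{i_2}x_{i_3}$, the unique symmetric coefficient tensor is $T_{i_1i_2i_3}=1$ when $i_1=i_2=i_3$ and zero otherwise. Hence the $k$-th slice $(T_{ijk})_{i,j}$ is the matrix $E_{kk}$ (the elementary matrix with a single $1$ in position $(k,k)$). The family $\{E_{kk}\}_{k=1}^n$ is a basis of the space of $n\times n$ diagonal matrices, so $\mathcal{V}$ is exactly that space.

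Next I would handle non-singularity of $\mathcal{U}$. Applying Theorem~\ref{thm:P3structural} to $g=P_3$ (so all $\alpha_i=1$), the slices of $f(x)=P_3(Ax)$ are
\begin{equation*}
T_k = A^T D_k A, \qquad D_k = \mathrm{diag}(a_{1,k},\ldots,a_{n,k}),
\end{equation*}
where the $a_{i,k}$ are the entries of $A$. For any scalars $\lambda_1,\ldots,\lambda_n$, linearity gives
\begin{equation*}
\sum_{k=1}^n \lambda_k T_k \;=\; A^T \Bigl(\sum_{k=1}^n \lambda_k D_k\Bigr) A \;=\; A^T\,\mathrm{diag}(A\lambda)\,A.
\end{equation*}
Since $A$ is invertible, this linear combination is non-singular iff all entries of the vector $A\lambda\in\mathbb{K}^n$ are non-zero. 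Choosing $\lambda = A^{-1}\mathbf{1}$, where $\mathbf{1}=(1,\ldots,1)^T$, produces such a vector, and the resulting matrix belongs to $\mathcal{U}$ and is invertible. Therefore $\mathcal{U}$ is a non-singular subspace.

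There is no real obstacle here: the proof is essentially bookkeeping once one writes down the slices of $P_3$ correctly and invokes Theorem~\ref{thm:P3structural}. The only point that requires a moment of attention is the bijection between symmetric order-$3$ tensors and degree-$3$ forms (so that one recovers $T_{iii}=1$ rather than some multinomial coefficient), after which both assertions drop out immediately.
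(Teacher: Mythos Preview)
Your argument is correct. The paper itself defers the proof of this lemma to~\cite{koiran2020derandomization}, but it proves the higher-degree analogue (Lemma~\ref{lem:Pdstructural}) in full, and that is the natural comparison point. The treatment of $\mathcal{V}$ is identical: compute the slices of $P_3$ explicitly and observe they are the elementary diagonal matrices $E_{kk}$. For the non-singularity of $\mathcal{U}$ the routes diverge slightly. The paper first establishes the structural relation $\mathcal{U}=A^T\mathcal{V}A$ (by applying Theorem~\ref{thm:P3structural} in both directions, using $g(x)=f(A^{-1}x)$ for the reverse inclusion), then observes that the map $M\mapsto A^T M A$ with $A$ invertible preserves non-singularity of a subspace; since $\mathcal{V}$ contains the identity, $\mathcal{U}$ is non-singular. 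You instead bypass the subspace identity and directly exhibit an invertible element of $\mathcal{U}$ via the choice $\lambda=A^{-1}\mathbf{1}$. Your route is shorter and more concrete for this specific conclusion; the paper's route yields the stronger intermediate statement $\mathcal{U}=A^T\mathcal{V}A$, which is what gets reused later (e.g.\ in the proof of Theorem~\ref{thm:Pdchar}).
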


%moved earlier
%\begin{theorem}\label{thm:socchar}
%A degree $3$-form $f \in \mathbb{K}[x_1,...,x_n]$ is equivalent to a sum of $n$ cubes iff its slices $T_1,...,T_n$ span a non-singular matrix space and the slices are simultaneously diagonalisable by congruence i.e there exists an invertible matrix $Q \in M_n(\mathbb{K})$ such that $Q^TT_iQ$ is diagonal for all $i \in [n]$.
%\end{theorem}
Instead of diagonalisation by congruence, it is convenient to work with the more familiar notion of diagonalisation by similarity, 
where an invertible matrix $A$ acts by $S \mapsto A^{-1} SA$ instead of $A^TSA$. We collect the necessary material 
in the remainder of this section (and we refer to diagonalisation by similarity simply as {\em diagonalisation}).

The two following properties play a fundamental role throughout the paper. 
\begin{definition} %1
%Let $A_1,...,A_k$ be complex symmetric matrices of size $n$ such that the subspace $\mathcal{V}$ spanned by these matrices is non-singular.
Let $\mathcal{V}$ be a non-singular space of matrices.
\begin{itemize}
    \item We say that $\mathcal{V}$ satisfies the \textbf{Commutativity Property} if there exists an invertible matrix 
    $A \in \mathcal V$ such that $A^{-1} \mathcal V$ is a commuting subspace.
    \item We say that $\mathcal{V}$ satisfies the \textbf{Diagonalisability Property} if there exists an invertible matrix $B \in \mathcal V$ such that all the matrices in the space  $B^{-1} \mathcal V$ are diagonalisable.
\end{itemize}
\end{definition}

The next result can be found in~\cite[Section 2.2]{koiran2020derandomization}.

\begin{theorem}\label{thm:commuting subspace}
Let $\mathcal{V}$ be a non-singular subspace of matrices of $M_n(\mathbb{K})$. The following properties are equivalent.
\begin{itemize}
    \item $\mathcal{V}$ satisfies the {commutativity property}.
    \item For all non-singular matrices $A \in \mathcal{V}$,  $A^{-1}\mathcal{V}$ is a commuting subspace.
\end{itemize}
\end{theorem}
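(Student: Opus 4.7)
The plan is to prove the two directions separately, noting that one direction is essentially immediate while the other requires a short conjugation argument.

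First I would observe that the implication from the second bullet to the first (commutativity property holding for some $A$) is trivial. Since $\mathcal V$ is assumed non-singular (in the sense that it contains at least one invertible matrix), pick any invertible $A \in \mathcal V$; the second bullet then asserts in particular that $A^{-1}\mathcal V$ is commuting, which is exactly the commutativity property.

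The content lies in the converse. Suppose there exists an invertible $A \in \mathcal V$ such that $\mathcal M := A^{-1}\mathcal V$ is a commuting subspace. Let $B \in \mathcal V$ be any other invertible matrix; I want to show $B^{-1}\mathcal V$ is also commuting. Set $N := A^{-1}B \in \mathcal M$. Then $N$ is invertible, and
\[ B^{-1}\mathcal V \;=\; (A^{-1}B)^{-1}\,A^{-1}\mathcal V \;=\; N^{-1}\mathcal M. \]
Since $N$ is an invertible element of the commuting family $\mathcal M$, it commutes with every $X \in \mathcal M$, and hence so does $N^{-1}$. Given any $X,Y \in \mathcal M$ one therefore has
\[ (N^{-1}X)(N^{-1}Y) \;=\; N^{-2}XY \;=\; N^{-2}YX \;=\; (N^{-1}Y)(N^{-1}X), \]
using that $N^{-1}$ commutes with $X$ on the left equality and that $X,Y$ commute on the middle one. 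This shows $N^{-1}\mathcal M = B^{-1}\mathcal V$ is commuting, completing the proof.

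There is essentially no obstacle: the only subtlety is recognizing that $N = A^{-1}B$ automatically lies in the commuting subspace $\mathcal M = A^{-1}\mathcal V$, so that its inverse commutes with all of $\mathcal M$; the rest is a one-line algebraic manipulation. Everything uses only the definition of the commutativity property and the identity $(A^{-1}B)^{-1}A^{-1} = B^{-1}$.
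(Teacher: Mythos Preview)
Your argument is correct. The backward implication is indeed trivial given non-singularity of $\mathcal V$, and for the forward implication your key observation---that $N=A^{-1}B$ lies in the commuting space $\mathcal M=A^{-1}\mathcal V$, so $N^{-1}$ commutes with every element of $\mathcal M$---is exactly what is needed; the displayed computation then finishes it cleanly.

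Note that the paper does not actually supply a proof of this statement: it simply imports the result from \cite[Section~2.2]{koiran2020derandomization}. Your self-contained argument is the natural one and matches what is done there, so there is no meaningful divergence in approach to discuss.
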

\begin{remark} \label{rem:diagprop}
Let $\mathcal{V}$ be a non-singular subspace of matrices which satisfies the commutativity and diagonalisability properties. There exists an invertible matrix $B \in \mathcal{V}$  and an invertible matrix $R$ which diagonalizes simultaneously all of $B^{-1} \mathcal V$ (i.e., $R^{-1}MR$ is diagonal for all $M \in B^{-1} \mathcal V$).
\end{remark}
\begin{proof}
Pick an invertible matrix $B \in \mathcal V$ such that ${\mathcal W}=B^{-1} \mathcal V$ is a space of diagonalizable matrices. 
By Theorem~\ref{thm:commuting subspace}, $\mathcal W$ is a commuting subspace.
It is well known that a finite collection of matrices is simultaneously diagonalisable if and only if they commute, and
each matrix in the collection is diagonalisable. We conclude by applying this result to a basis of $\mathcal W$ (any matrix $R$ which diagonalises a basis will diagonalise all of $\mathcal W$).
\end{proof}
We now give an analogue of Theorem~\ref{thm:commuting subspace} for the diagonalisability property.
\begin{theorem}\label{thm:diagonalisable subspace}
Let $\mathcal{V}$ be a non-singular subspace of matrices which satisfies the commutativity property. The following properties are equivalent:
\begin{itemize}
    \item $\mathcal{V}$ satisfies the {diagonalisability property}.
    \item For all non-singular matrices $A \in \mathcal{V}$,  the matrices in $A^{-1} \mathcal V$ are simultaneously diagonalisable.
\end{itemize}
\end{theorem}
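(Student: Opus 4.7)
The plan is to show the two directions separately. The easy direction is ($\Leftarrow$): if for every invertible $A \in \mathcal V$ the space $A^{-1}\mathcal V$ is simultaneously diagonalisable, then in particular each element is individually diagonalisable, and since $\mathcal V$ is non-singular we can pick some invertible $A \in \mathcal V$ to witness the diagonalisability property. This is one line.

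For the hard direction ($\Rightarrow$), I would combine the diagonalisability property with the commutativity hypothesis via Remark~\ref{rem:diagprop}. This gives an invertible $B \in \mathcal V$ and an invertible $R$ such that every matrix $M \in B^{-1}\mathcal V$ satisfies $R^{-1}MR$ is diagonal. Equivalently, every element of $\mathcal V$ can be written in the form $M = B R D_M R^{-1}$ for some diagonal matrix $D_M$ (depending on $M$).

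Now given any other invertible matrix $A \in \mathcal V$, write $A = B R D_A R^{-1}$. Since $A$ is invertible and $B, R$ are invertible, the diagonal matrix $D_A$ must be invertible. For an arbitrary $M \in \mathcal V$, compute
\begin{equation*}
A^{-1} M \;=\; R D_A^{-1} R^{-1} B^{-1} \cdot B R D_M R^{-1} \;=\; R (D_A^{-1} D_M) R^{-1}.
\end{equation*}
Since $D_A^{-1} D_M$ is diagonal, this shows that the single invertible matrix $R$ simultaneously diagonalises every element of $A^{-1}\mathcal V$, which is what we wanted.

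I do not expect any significant obstacle: the key observation is just that the change of basis $R$ furnished by the remark depends only on the space $\mathcal V$ (through $B$), not on the particular invertible element we choose, so it continues to work when $B$ is replaced by any other invertible $A \in \mathcal V$. The only thing to be careful about is to invoke Remark~\ref{rem:diagprop} rather than trying to rediagonalise from scratch, since we need both the commutativity hypothesis (to get simultaneous diagonalisation out of pointwise diagonalisation) and the diagonalisability property (to get pointwise diagonalisation in the first place).
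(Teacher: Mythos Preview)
Your proof is correct and is essentially identical to the paper's: both invoke Remark~\ref{rem:diagprop} to get $B$ and $R$, then observe that $A^{-1}M = (B^{-1}A)^{-1}(B^{-1}M)$ is diagonalised by the same $R$ (the paper writes it in this factored form, you expand it explicitly as $R D_A^{-1} D_M R^{-1}$, but the content is the same). The paper omits the trivial direction you spell out, but otherwise the arguments coincide.
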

\begin{proof}
Suppose that $\mathcal{V}$ satisfies the {diagonalisability property}. By the previous remark, we already know 
that there exists {\em some} invertible matrix $B \in \mathcal V$ such that the matrices in $B^{-1} \mathcal V$ are simultaneously diagonalisable by an invertible matrix $R$. We need to establish the same property for an arbitrary invertible matrix $A \in \mathcal V$.
For any $M \in \mathcal V$, $A^{-1}M = (B^{-1}A)^{-1} (B^{-1}M)$. Hence $A^{-1}M$ is diagonalised by $R$ since this matrix diagonalises both matrices
$B^{-1}A$ and $B^{-1} M$. % are simultaneously diagonalisable.
Since $R$ is independent of the choice of $M\in  \mathcal V$, we have shown that the matrices in $A^{-1} \mathcal V$ are simultaneously diagonalisable.
%It follows like in the proof of Remark~\ref{rem:diagprop} that the matrices in $A^{-1} \mathcal V$ are simultaneously diagonalisable because they commute are are individually diagonalisable.
\end{proof}
The importance of the commutativity and diagonalisability properties stems from the fact that they provide a characterization 
of simultaneous diagonalisation by congruence, which in turn (as we have seen in Theorem~\ref{thm:socchar}) provides a
characterization of equivalence to $P_3$:
\begin{theorem}\label{thm:main}
Let $A_1 , . . . , A_k \in M_n(\mathbb{K})$ and
assume that the subspace $\mathcal{V}$ spanned by these matrices is non-singular. There are diagonal matrices $\Lambda_i$ and a non-singular matrix $R \in M_n(\mathbb{K})$ such that
$A_i = R \Lambda_i R^T$ for all $i \in [k]$ if and only if $\mathcal{V}$ satisfies the {Commutativity property} and the {Diagonalisability property}.
\end{theorem}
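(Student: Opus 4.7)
The plan is to prove the two implications separately, with the forward direction being a short calculation and the reverse direction containing the bulk of the work. I will implicitly assume the $A_i$ are symmetric; this is forced by the theorem since the target $R\Lambda_i R^T$ is automatically symmetric, and it matches the application (the $A_i$ will be slices of a symmetric tensor).

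\emph{Forward direction.} Suppose $A_i = R\Lambda_i R^T$ with $R$ invertible and each $\Lambda_i$ diagonal. Since $\mathcal{V}$ is non-singular, I pick coefficients $c_i$ such that $B := \sum_i c_i A_i$ is invertible; then $B = R\Lambda R^T$ where $\Lambda = \sum_i c_i \Lambda_i$ is an invertible diagonal matrix. A one-line calculation gives $B^{-1}A_i = R^{-T}(\Lambda^{-1}\Lambda_i)R^T$, so the family $\{B^{-1}A_i\}$ is simultaneously conjugated to diagonal form by $R^T$. The matrices in $B^{-1}\mathcal{V}$ are therefore all diagonalisable and pairwise commuting, which is exactly the Commutativity and Diagonalisability Properties.

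\emph{Reverse direction, reduction.} By Remark~\ref{rem:diagprop}, choose an invertible $B\in\mathcal{V}$ and $Q_0$ with $Q_0^{-1}(B^{-1}A_i)Q_0 = D_i$ diagonal for each $i$. The strategy is to upgrade $Q_0$ to some $Q$ that still simultaneously diagonalises $\mathcal{W} := B^{-1}\mathcal{V}$ and additionally satisfies $Q^T B Q$ diagonal. Once this $Q$ is in hand, I set $R := Q^{-T}$ and $\Lambda_i := Q^T A_i Q = (Q^T B Q)D_i$; this $\Lambda_i$ is diagonal as a product of two diagonal matrices, and $R\Lambda_i R^T = A_i$ by construction. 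The real content of the proof is therefore producing such a $Q$, and this is the main obstacle: a generic simultaneous diagonaliser of $\mathcal{W}$ will \emph{not} make $Q^T B Q$ diagonal, so one must exploit the freedom inside degenerate joint eigenspaces.

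\emph{Reverse direction, key idea.} Consider the non-degenerate symmetric bilinear form $\langle x, y\rangle_B := x^T B y$. Because $A_i$ and $B$ are symmetric, each operator $B^{-1}A_i$ is self-adjoint with respect to $\langle\cdot,\cdot\rangle_B$, as a direct check shows. A standard two-line argument then gives that joint eigenspaces $V_\lambda, V_\mu$ of the commuting family $\{B^{-1}A_i\}$ with $\lambda\ne\mu$ are $B$-orthogonal; together with the fact that the $V_\mu$ span $\mathbb{K}^n$ and $B$ is invertible, this forces $\langle\cdot,\cdot\rangle_B$ to remain non-degenerate when restricted to each $V_\mu$. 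It then suffices to pick inside each $V_\mu$ a basis that diagonalises the restricted form — available over $\mathbb{C}$ because every non-degenerate symmetric form is equivalent to the standard one, and over $\mathbb{R}$ by Sylvester's law of inertia — and concatenate these bases. The resulting $Q$ still consists of joint eigenvectors (so simultaneously diagonalises $\mathcal{W}$) and satisfies $Q^T B Q$ diagonal (blocks across different $V_\mu$ vanish by $B$-orthogonality; blocks within a single $V_\mu$ are diagonal by construction).
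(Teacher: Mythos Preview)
Your proof is correct. The paper does not actually supply a proof of this theorem; it simply refers the reader to \cite[Section~2.2]{koiran2020derandomization} for $\mathbb{K}=\C$ and \cite[Section~2.3]{koiran2020derandomization} for $\mathbb{K}=\R$, so there is no in-paper argument to compare against. Your route---showing that each $B^{-1}A_i$ is self-adjoint for the non-degenerate symmetric form $\langle x,y\rangle_B = x^T B y$, deducing that distinct joint eigenspaces are $B$-orthogonal, and then diagonalising the restricted form block by block---is the standard and natural approach and is very likely what the cited reference does as well.

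You are also right to flag the symmetry hypothesis on the $A_i$. Your reverse implication genuinely uses it (for the self-adjointness step), and in fact without it the reverse implication is false: a single non-symmetric invertible matrix $A_1$ already gives a one-dimensional $\mathcal{V}$ trivially satisfying both the Commutativity and Diagonalisability Properties, yet $A_1$ cannot be written as $R\Lambda_1 R^T$. In the paper the theorem is only ever applied to slices of a symmetric tensor, which are symmetric matrices, so this omission in the statement is harmless in context.
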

For a proof, see~\cite[Section 2.2]{koiran2020derandomization} for $\mathbb{K} = \C$ 
and~\cite[Section 2.3]{koiran2020derandomization} for $\mathbb{K} = \R$.

\subsection{Analysis for positive inputs} \label{sec:positive}

In this section we analyze the behavior of Algorithm~\ref{algo:cubes} on inputs that are equivalent to $P_3$. First, we recall the Schwartz-Zippel lemma which we will be using throughout the paper.
\begin{lemma}[\cite{DL78}\cite{Zip79}\cite{Sch80}]\label{lem:SZ}
Let $P \in \mathbb{K}[x_1,...,x_n]$ be a non-zero polynomial of total degree $d \geq 0$ over a field $\mathbb{K}$. Let $S$ be a finite subset of $\mathbb{K}$ and let $r_1,...,r_n$ be picked uniformly and independently at random from a finite set $S$. Then
\begin{align*}
    \text{Pr}_{r_1,...,r_n \in S}[P(r_1,...,r_n) = 0] \leq \frac{d}{|S|}.
\end{align*}
\end{lemma}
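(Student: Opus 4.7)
The plan is to prove the Schwartz-Zippel lemma by induction on the number of variables $n$, which is the classical route.

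For the base case $n=1$, I would invoke the standard fact that a nonzero univariate polynomial of degree $d$ over a field has at most $d$ roots; since $r_1$ is drawn uniformly from $S$, the probability of landing on a root is at most $d/|S|$.

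For the inductive step, given $P \in \mathbb{K}[x_1,\ldots,x_n]$ of total degree $d$, I would isolate the last variable and write
\[
P(x_1,\ldots,x_n) \;=\; \sum_{i=0}^{k} x_n^i\, Q_i(x_1,\ldots,x_{n-1}),
\]
where $k$ is the largest power of $x_n$ actually appearing, so that $Q_k$ is a nonzero polynomial in $n-1$ variables of total degree at most $d-k$. I would then split the event $\{P(r_1,\ldots,r_n)=0\}$ according to whether $Q_k(r_1,\ldots,r_{n-1})$ vanishes. On the event $\{Q_k(r_1,\ldots,r_{n-1}) = 0\}$, the inductive hypothesis bounds the probability by $(d-k)/|S|$. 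On the complementary event, $P(r_1,\ldots,r_{n-1}, x_n)$ is a nonzero univariate polynomial of degree exactly $k$ in $x_n$, so conditioning on $r_1,\ldots,r_{n-1}$ and using the independence of $r_n$ together with the base case gives conditional probability at most $k/|S|$.

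A simple union bound then yields $\Pr[P(r_1,\ldots,r_n)=0] \le (d-k)/|S| + k/|S| = d/|S|$, completing the induction. There is no real obstacle here: the only mild subtlety is to make sure that the recursion is on a polynomial of total degree at most $d-k$ (not $d$) so that the two error terms telescope exactly to $d/|S|$, and to use independence of $r_n$ from $r_1,\ldots,r_{n-1}$ when invoking the univariate bound conditionally.
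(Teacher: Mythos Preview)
Your proof is correct and is the standard induction-on-$n$ argument for the Schwartz--Zippel lemma. The paper does not actually prove this statement: it merely \emph{recalls} the lemma with citations to \cite{DL78,Zip79,Sch80} and then uses it as a black box throughout, so there is no proof in the paper to compare against.
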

\begin{lemma}\label{lem:t1invertible}
Let $f$ be a degree-3 form with slices $S_1,...,S_n$ such that the subspace $\mathcal{V}$ spanned by the slices is non-singular.
Let $h(x) = f(Rx)$ where  the entries $r_{i,j}$ are chosen uniformly and independently at random from a finite set $S \subseteq \mathbb{K}$. Let $T_1,...,T_n$ be the slices of $h$. Then
\begin{align*}
    \text{Pr}_{R \in S}[T_1 \text{ is invertible}] \geq 1- \frac{2n}{|S|}.
\end{align*}
\end{lemma}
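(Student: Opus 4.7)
My plan is to reduce the question to two applications of the Schwartz--Zippel lemma via the structural formula from Theorem~\ref{thm:P3structural}. By that theorem, the slices of $h(x)=f(Rx)$ satisfy
\[
T_k = R^T D_k R, \qquad D_k = \sum_{i=1}^n r_{i,k} S_i.
\]
In particular, $T_1 = R^T D_1 R$ where $D_1 = \sum_{i=1}^n r_{i,1} S_i$ depends only on the first column of $R$. Thus $T_1$ is invertible if and only if both $R$ and $D_1$ are invertible, and it suffices to bound the failure probability of each event separately and apply a union bound.

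First I would handle $\det R$. This is a polynomial of degree $n$ in the entries $r_{ij}$, which is not identically zero (the identity matrix is a witness). By Lemma~\ref{lem:SZ},
\[
\Pr_{R\in S}[\det R = 0] \leq \frac{n}{|S|}.
\]
Next I would handle $\det D_1$. Consider the polynomial $P(y_1,\ldots,y_n) = \det\!\bigl(\sum_{i=1}^n y_i S_i\bigr)$, which is homogeneous of degree $n$ in the $y_i$. The hypothesis that the subspace $\mathcal V$ spanned by $S_1,\ldots,S_n$ is non-singular means precisely that some linear combination of the $S_i$ is invertible, so $P$ is not identically zero. Applying Schwartz--Zippel again to $P$ evaluated at $(r_{1,1},\ldots,r_{n,1})$ gives
\[
\Pr_{R\in S}[\det D_1 = 0] \leq \frac{n}{|S|}.
\]
A union bound then yields $\Pr_{R\in S}[T_1 \text{ not invertible}] \leq 2n/|S|$, which is the desired conclusion.

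There is no serious obstacle here: the whole argument hinges on recognizing that $T_1$ factors as $R^T D_1 R$ with the two factors statistically well-behaved, and on using the non-singularity of $\mathcal V$ to ensure $\det\!\bigl(\sum_i y_i S_i\bigr)\not\equiv 0$. The only point to write out carefully is the independence of the two Schwartz--Zippel applications (which is not needed, since we only use a union bound and the degree of each polynomial in the full set of variables $\{r_{ij}\}$ is $n$ anyway).
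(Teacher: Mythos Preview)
Your proposal is correct and essentially identical to the paper's own proof: both invoke Theorem~\ref{thm:P3structural} to factor $T_1 = R^T D_1 R$, apply Schwartz--Zippel separately to $\det R$ and to $\det D_1$ (the latter being nonzero by the non-singularity of $\mathcal V$), and finish with a union bound.
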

\begin{proof}
We can obtain the slices $T_k$ of $h$ from the slices $S_k$ of $f$ using Theorem~\ref{thm:P3structural} namely, we have $T_k = R^TD_kR$ where $D_k = \sum_{i \in [n]} r_{i,k}S_i$ and the $r_{i,k}$ are the entries of $R$.
 \newline
 Therefore $T_1$ is invertible iff $R$ and $D_1$ are invertible.
 Applying the Schwartz-Zippel lemma to $\det(R)$ shows that $R$ is singular with probability at most ${n}/{|S|}$.
 We will see that  $D_1$ is singular also with probability at most ${n}/{|S|}$;  the lemma then follows 
 from the union bound.
% Now $R$ is not invertible, iff $\text{det}(R) = 0$.
% Using Schwartz-Zippel we get that
% \begin{align*}
%     \text{Pr}_{R \in S}[\text{det}(R) = 0] \leq \frac{n}{|S|}
 %\end{align*}
 Matrix $D_1$ is not invertible iff $\text{det}(D_1) = 0$.
 %\newline
 Since $D_1 = \sum_{i \in [n]} r_{i,1}S_i$ $\det(D_1) \in \mathbb{K}[r_{1,1},...,r_{n,1}]$ and $\text{deg}(\text{det}(D_1)) \leq n$. Since, $\mathcal{V}$ is non-singular, there exists some choice of $\alpha = (\alpha_1,...,\alpha_n)$, such that $S = \sum_{i \in [n]}\alpha_iS_i$ is invertible. Hence $\det(D_1)$ is not identically zero, and it follows again from Schwartz-Zippel lemma that this polynomial vanishes with probability at most $\frac{n}{|S|}$.
 \end{proof}
 
 \begin{lemma}\label{lem:opp}
 Given $A \in M_n(\mathbb{K})$, let $T_1,...,T_n$ be the slices of $h(x)=P_3(Ax)$. If $T_1$ is invertible, define $T'_1 = (T_1)^{-1}$. Then $T'_1T_2$ commutes with $T'_1T_3$, and $T'_1T_2$ is diagonalisable.
 % \begin{itemize}
  %   \item $T'_1T_2$ and $T'_1T_3$ commute
 %   \item $T'_1T_2$ is diagonalisable.
 %\end{itemize} 
\end{lemma}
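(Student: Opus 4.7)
The plan is to apply Theorem~\ref{thm:P3structural} to the polynomial $g = P_3$ to get an explicit formula for the slices $T_k$ of $h(x) = P_3(Ax)$, and then show that the matrices $T_1^{-1} T_2$ and $T_1^{-1} T_3$ are simultaneously conjugate to diagonal matrices (from which both required conclusions follow at once).

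Concretely, since $P_3 = \sum_i x_i^3$ has slices $S_i = E_{ii}$ (the standard rank-one diagonal matrix), Theorem~\ref{thm:P3structural} gives $T_k = A^T D_k A$, where $D_k = \mathrm{diag}(a_{1,k}, \ldots, a_{n,k})$ is the diagonal matrix built from the $k$-th column of $A$. Assuming $T_1$ is invertible, so are $A$ and $D_1$, and a direct computation yields
\begin{equation*}
T_1^{-1} T_k \;=\; A^{-1} D_1^{-1} (A^T)^{-1} \cdot A^T D_k A \;=\; A^{-1} \bigl(D_1^{-1} D_k\bigr) A.
\end{equation*}
Thus $T_1^{-1} T_k$ is similar to the diagonal matrix $D_1^{-1} D_k$ via the invertible matrix $A$.

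In particular, $T_1^{-1} T_2$ is diagonalisable over $\mathbb{K}$, establishing the second claim. For the commutativity claim, observe that the diagonal matrices $D_1^{-1} D_2$ and $D_1^{-1} D_3$ commute, so their simultaneous conjugates $T_1^{-1} T_2 = A^{-1}(D_1^{-1} D_2) A$ and $T_1^{-1} T_3 = A^{-1}(D_1^{-1} D_3) A$ also commute.

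There is no real obstacle here: once Theorem~\ref{thm:P3structural} is applied with $g = P_3$, the slices are conjugate (by the same matrix $A$) to diagonal matrices, and both assertions reduce to the trivial facts that diagonal matrices are diagonalisable and pairwise commute. The only subtle point is to make sure invertibility of $T_1$ is used correctly: it guarantees $D_1$ is invertible (equivalently, every entry in the first column of $A$ is nonzero), which is exactly what is needed for $D_1^{-1}$ to make sense.
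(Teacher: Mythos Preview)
Your proof is correct and follows essentially the same approach as the paper: apply Theorem~\ref{thm:P3structural} with $g=P_3$ to write $T_k = A^T D_k A$ for diagonal $D_k$, observe that invertibility of $T_1$ forces $A$ and $D_1$ invertible, and then compute $T_1^{-1}T_k = A^{-1}(D_1^{-1}D_k)A$ to reduce both claims to trivial facts about diagonal matrices. The only cosmetic difference is that the paper writes out the commutator $T_1'T_2T_1'T_3 - T_1'T_3T_1'T_2$ explicitly, whereas you argue directly from simultaneous similarity to diagonal matrices.
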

\begin{proof}
By Theorem \ref{thm:P3structural}, 
$$T_k = A^T\text{diag}(A_{1k},...,A_{nk})A = A^TD_1A.$$
%\begin{align*}
%    T_1 &= R^T(\text{diag}(r_1,...,r_{n1}))R = R^TD_1R \\
 %   T_2 &= R^T(\text{diag}(r_{12},...,r_{n2}))R = R^TD_2R \\
%    T_3 &= R^T(\text{diag}(r_{13},...,r_{n3}))R = R^TD_3R
%\end{align*}
If $T_1$ is invertible, the same is true of $A$ and $D_1$.
The inverse $(D_1)^{-1}$ is diagonal like $D_1$, hence $(D_1)^{-1}D_2$ and $(D_1)^{-1}D_3$ are both diagonal as well and must therefore commute. Now,
\begin{align*}
    T'_1T_2T'_1T_3 &= A^{-1}((D_1)^{-1}D_2(D_1)^{-1}D_3)A \\
    &= A^{-1}((D_1)^{-1}D_3(D_1)^{-1}D_2)A \\
    &= T'_1T_3T'_1T_2.
\end{align*}
\par
Finally,  $T'_1T_2 = A^{-1}((D_1)^{-1}D_2)A$ so this matrix diagonalisable.
\end{proof}
In the above lemma we have essentially reproved the easier half of Theorem~\ref{thm:main}. 
 We are now in position to prove the easier half of Theorem~\ref{th:equiv3}.
 \begin{proposition} \label{prop:equiv3}
If an input $f \in \mathbb{K}[x_1,...,x_n]_3$ is equivalent to a sum of~$n$ cubes then $f$ will be accepted
by Algorithm~\ref{algo:cubes} with high probability over the choice of the random matrix $R$. More precisely, if the entries $r_{i,j}$ are chosen uniformly and independently at random from a set $S \subseteq \mathbb{K}$, then $f$ will be accepted with probability at least $1-\frac{2n}{|S|}$.
\end{proposition}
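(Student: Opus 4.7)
My plan is to combine the three preparatory results already proved in this section: Lemma~\ref{lem:subspace} (equivalence to $P_3$ implies slices span a non-singular subspace), Lemma~\ref{lem:t1invertible} (under that non-singularity, a random change of variables keeps the first slice invertible with high probability), and Lemma~\ref{lem:opp} (for $h=P_3(Bx)$ with $B$ invertible and $T_1$ invertible, the matrices $T_1'T_2, T_1'T_3$ commute and $T_1'T_2$ is diagonalisable).

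The first step is to exploit the equivalence hypothesis. Write $f(x)=P_3(Ax)$ for some invertible $A\in M_n(\mathbb{K})$, so that $h(x)=f(Rx)=P_3(Bx)$ with $B=AR$. I would then observe that the subspace $\mathcal{V}\subseteq M_n(\mathbb{K})$ spanned by the slices of $f$ is non-singular by Lemma~\ref{lem:subspace}, which lets me apply Lemma~\ref{lem:t1invertible} directly: the first slice $T_1$ of $h$ is invertible with probability at least $1-2n/|S|$ over the random choice of~$R$.

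Conditionally on $T_1$ being invertible, I would appeal to Lemma~\ref{lem:opp}, applied with the matrix $B=AR$ in place of $A$. (Strictly, Lemma~\ref{lem:opp} is stated for $h(x)=P_3(Ax)$, but $h(x)=P_3(Bx)$ is exactly this form.) The lemma tells me that $T_1'T_2$ and $T_1'T_3$ commute and that $T_1'T_2$ is diagonalisable over $\mathbb{K}$, which is precisely the acceptance condition at the end of Algorithm~\ref{algo:cubes}. Therefore whenever $T_1$ is invertible the algorithm accepts, and the claimed bound $1-2n/|S|$ follows.

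No part of this argument is delicate; the only technical content has already been absorbed into the three lemmas listed above, and the proof reduces to assembling them. If anything, the one point worth making explicit is that the hypotheses of Lemma~\ref{lem:opp} are satisfied regardless of whether $R$ itself is invertible: once we condition on $T_1$ being invertible, the factorisation $T_1=B^T D_1 B$ with $D_1$ diagonal forces $B$ (hence $D_1$) to be invertible, and the diagonal calculation in Lemma~\ref{lem:opp} goes through verbatim. So this proposition requires essentially no new work beyond invoking the machinery developed earlier in the section.
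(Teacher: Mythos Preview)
Your proof is correct and follows essentially the same route as the paper: invoke Lemma~\ref{lem:subspace} to get non-singularity of the slice space, Lemma~\ref{lem:t1invertible} for the $1-2n/|S|$ bound on invertibility of $T_1$, and Lemma~\ref{lem:opp} applied to $h(x)=P_3(ARx)$ to conclude acceptance whenever $T_1$ is invertible. Your closing remark that invertibility of $T_1$ already forces $B=AR$ to be invertible (so Lemma~\ref{lem:opp} applies without separately assuming $R$ invertible) is a nice clarification, but otherwise the arguments coincide.
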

\begin{proof}
Suppose that $f(x)=P_3(Bx)$ for some invertible matrix $B$. By Lemma~\ref{lem:subspace}, the space spanned by the  slices of $f$ is nonsingular. We can therefore apply Lemma~\ref{lem:t1invertible}: the first slice $T_1$ of $h(x)=f(Rx)$ 
is invertible with probability at least $1-\frac{2n}{|S|}$.
Moreover, when $T_1$ is invertible Lemma~\ref{lem:opp} shows that $f$ will always be accepted (we can apply this lemma to 
$h$ since $h(x)=P_3(BRx)$).
\end{proof}

\subsection{Failure of commutativity} \label{sec:failcom}

In this section we first give the proof of Lemma~\ref{lem:comm}. This is required for the analysis of Algorithm~\ref{algo:cubes}, and moreover this simple lemma yields a new randomized algorithm for commutativity testing as explained in Section~\ref{sec:results}. We restate the lemma here for the reader's convenience:
\begin{lemma}
Let $A_1,...,A_k \in M_n(\mathbb{K})$. We take two random linear combinations $A_{\alpha} = \sum_{i \in [k]} \alpha_i A_i$ and $A_{\beta} = \sum_{i \in [k]} \beta_i A_i$, where the $\alpha_i$ and $\beta_i$ are picked independently and uniformly at random from a finite set $S \subset \mathbb{K}$. If $\{A_i\}_{i \in [k]}$ is not a commuting family, then the two matrices $A_{\alpha}, A_{\beta}$ commute with probability at most $\frac{2}{|S|}$.
\end{lemma}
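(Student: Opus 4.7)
The plan is to apply the Schwartz--Zippel lemma (Lemma~\ref{lem:SZ}) to the commutator $[A_\alpha, A_\beta] = A_\alpha A_\beta - A_\beta A_\alpha$, regarding its entries as polynomials in the $2k$ variables $\alpha_1,\ldots,\alpha_k,\beta_1,\ldots,\beta_k$.

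First I would expand:
\begin{align*}
A_\alpha A_\beta - A_\beta A_\alpha
&= \sum_{i,j} \alpha_i \beta_j A_i A_j - \sum_{i,j} \beta_i \alpha_j A_i A_j \\
&= \sum_{i,j} \alpha_i \beta_j (A_i A_j - A_j A_i) \;=\; \sum_{i\neq j} \alpha_i \beta_j [A_i, A_j],
\end{align*}
where the second equality uses the relabeling $i\leftrightarrow j$ in the second sum. Each entry of $[A_\alpha, A_\beta]$ is therefore a polynomial of total degree exactly $2$ in the $2k$ variables, and for any fixed matrix coordinate $(p,q)$ the coefficient of the monomial $\alpha_i \beta_j$ (with $i \neq j$) in the $(p,q)$-entry equals $[A_i, A_j]_{pq}$.

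The key observation is that some such entry is a nonzero polynomial. Indeed, by the non-commutativity assumption there exist indices $i\neq j$ with $[A_i, A_j] \neq 0$, so some coordinate $(p,q)$ satisfies $[A_i, A_j]_{pq} \neq 0$, and then the coefficient of the monomial $\alpha_i \beta_j$ in the $(p,q)$-entry of the commutator is nonzero. Applying Schwartz--Zippel (Lemma~\ref{lem:SZ}) to this nonzero polynomial of total degree $2$ on the $2k$ independent uniform samples from $S$, it vanishes with probability at most $2/|S|$. Since $A_\alpha$ and $A_\beta$ commute only if every entry of $[A_\alpha, A_\beta]$ vanishes, in particular this one must vanish, and the claimed bound follows.

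There is no real obstacle: once the commutator is written in the bilinear form above, both the nonvanishing of the polynomial and the degree bound are immediate, and a single application of Schwartz--Zippel finishes the argument. The only subtle point worth double-checking is the bookkeeping of which coefficient in $[A_\alpha, A_\beta]_{pq}$ corresponds to the monomial $\alpha_i \beta_j$; the computation above shows it is exactly $[A_i, A_j]_{pq}$ (with no cancellation because distinct pairs $(i,j)$ with $i \neq j$ yield distinct monomials), which is what keeps the bound at $2/|S|$ rather than forcing a larger denominator.
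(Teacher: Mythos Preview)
Your proof is correct and follows essentially the same approach as the paper: expand the commutator bilinearly as $\sum_{i,j}\alpha_i\beta_j[A_i,A_j]$, pick an entry where some $[A_i,A_j]$ is nonzero to get a nonzero degree-$2$ polynomial, and apply Schwartz--Zippel. The only cosmetic difference is that the paper certifies nonvanishing by evaluating at $(e_i,e_j)$ whereas you read off the coefficient of $\alpha_i\beta_j$ directly.
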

\begin{proof}
%Consider therefore a non-commuting family $A_1,\ldots,A_k \in M_n(\mathbb{K})$, and take the two random linear combinations $A_{\alpha} = \sum_{i \in [k]} \alpha_i A_i$ and $A_{\beta} = \sum_{j \in [k]} \beta_j A_j$,
We want to bound the probability of error, i.e.,
$\Pr_{\alpha,\beta}[A_{\alpha},A_{\beta} \text{ commute}].$
Let us define 
\begin{align*}
    P_{\text{comm}}(\alpha,\beta) &= A_{\alpha}A_{\beta} - A_{\beta}A_{\alpha} \\
    &= \sum_{i,j \in [k]} \alpha_i\beta_j(A_iA_j - A_jA_i).
\end{align*}
By construction, $A_{\alpha}$ commutes with $A_{\beta}$ if and only if
%\begin{equation}
  %      \label{eqn:comm}
        $P_{\text{comm}}(\alpha,\beta) = 0.$
%\end{equation}
%\newline
Since $\{A_i\}_{i \in [k]}$ is not a commuting family, there exists $i,j \in [n]$ such that % $A_i$ and $A_j$ don't commute. From definition of commutativity, 
$A_iA_j - A_jA_i \neq 0$. Hence there exists some entry $(r,s)$ such that
\begin{equation}\label{eq:noncomm}
    (A_iA_j - A_jA_i)_{r,s} \neq 0
\end{equation}
\newline
Let us define 
%\begin{align*}
    $P^{r,s}_{\text{comm}}(\alpha,\beta) = (A_{\alpha}A_{\beta} - A_{\beta}A_{\alpha})_{r,s}$.
%\end{align*}
From (\ref{eq:noncomm}) we have
%\begin{equation}\label{eq:noncommid}
   $$ P^{r,s}_{\text{comm}}(e_i,e_j) \neq 0$$
%\end{equation}
where $e_i$ is the vector with a 1 at the $i$-th position and 0's elsewhere. In particular, $P^{r,s}_{\text{comm}}$ is not identically zero. 
Since $\text{deg}(P^{r,s}_{\text{comm}}) \leq 2 $, it follows from the 
Schwartz-Zippel lemma that
\begin{align*}
    \text{Pr}_{\alpha,\beta \in S}[P^{r,s}_{\text{comm}}(\alpha,\beta) = 0] \leq \frac{2}{|S|}
\end{align*}
and the same upper bound applies to 
   $\text{Pr}_{\alpha,\beta \in S}[P_{\text{comm}}(\alpha,\beta) = 0]$.\
\end{proof}
The next result relies on the above lemma. Theorem \ref{thm:algcomm} gives us a way to analyze the case when the slices of 
the input polynomial fail to satisfy the commutativity property (recall that this property is relevant due to Theorem~\ref{thm:main}):
\begin{theorem}\label{thm:algcomm}
Let $f \in \mathbb{K}[x_1,...,x_n]_3$ be a degree 3 form such that the subspace $\mathcal{V}$  spanned by its $n$ slices %$S_1,...,S_n$ 
is non-singular and does not satisfy the {commutativity property}.
%\newline
Let $h(x) = f(Rx)$ where the entries $r_{i,j}$ of $R$ are chosen uniformly and independently at random from a finite set $S \subset \mathbb{K}$.% with $|S|>2n$. 
Let $T_1,...,T_n$ be the slices of $h$. 
%\newline
If $T_1$ is invertible, define $T_1' = T_1^{-1}$. Then 
\begin{align*}
    \text{Pr}[T_{1} \text{ is invertible and } T_{1}'T_{2}, T_{1}'T_{3} \text{ commute} ] \leq \frac{2}{|S|}.
\end{align*}
\end{theorem}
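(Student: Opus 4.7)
The plan is to reduce the claim to a direct application of Lemma~\ref{lem:comm} after a conjugation trick that strips away $R$ from the commutator, and a conditioning argument that separates the randomness used to build $D_1$ from the randomness used to build $D_2$ and $D_3$.

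First I would use Theorem~\ref{thm:P3structural} to write $T_k = R^T D_k R$ with $D_k = \sum_{i=1}^n r_{i,k} S_i$. The event we are bounding requires $T_1$ to be invertible, and since $T_1 = R^T D_1 R$, this forces both $R$ and $D_1$ to be invertible. Under that assumption,
$$T_1' T_k \;=\; (R^T D_1 R)^{-1} (R^T D_k R) \;=\; R^{-1} D_1^{-1} D_k R,$$
so $T_1' T_2$ and $T_1' T_3$ are conjugates (by $R$) of $D_1^{-1} D_2$ and $D_1^{-1} D_3$. Since conjugation is a ring homomorphism, commutativity is preserved: $T_1' T_2$ and $T_1' T_3$ commute if and only if $D_1^{-1} D_2$ and $D_1^{-1} D_3$ do. This reformulation has the advantage that $R$ only enters implicitly through its first three columns (up to the invertibility of $R$, which only affects whether the event can hold at all).

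Next I would condition on the first column $(r_{1,1},\ldots,r_{n,1})$ of $R$, which determines $D_1$. If $D_1$ is singular, then $T_1$ is singular and the event cannot occur. If $D_1$ is invertible, then $D_1 \in \mathcal V$ is a non-singular element of $\mathcal V$; by Theorem~\ref{thm:commuting subspace}, the hypothesis that $\mathcal V$ fails the commutativity property tells us that $D_1^{-1}\mathcal V$ is \emph{not} a commuting subspace. Since the matrices $\{D_1^{-1} S_i\}_{i=1}^n$ span $D_1^{-1} \mathcal V$, and the commuting relation is bilinear, this family cannot itself be a commuting family.

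Finally I would apply Lemma~\ref{lem:comm} to the family $\{D_1^{-1} S_i\}_{i=1}^n$. Observe that $D_1^{-1} D_2 = \sum_i r_{i,2}\, (D_1^{-1} S_i)$ and $D_1^{-1} D_3 = \sum_i r_{i,3}\, (D_1^{-1} S_i)$ are exactly two random linear combinations of that family, whose coefficients $r_{i,2}, r_{i,3}$ are drawn independently and uniformly from $S$ \emph{and are independent of the conditioning}. Hence the lemma gives $\Pr[D_1^{-1} D_2,\, D_1^{-1} D_3 \text{ commute} \mid D_1] \leq 2/|S|$ whenever $D_1$ is invertible. Averaging over the first column of $R$ (the conditional probability being $0$ when $D_1$ is singular) yields the bound $2/|S|$, as required.

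The main obstacle, if any, is setting up the probability argument cleanly: one must verify both that $R$-conjugation preserves the commutator we care about, and that the $r_{i,2}, r_{i,3}$ can be treated as fresh independent uniforms even after conditioning on $D_1$ (so that Lemma~\ref{lem:comm} genuinely applies with the desired probability estimate). Everything else is routine, since Theorem~\ref{thm:commuting subspace} already packages the structural content of the commutativity property in a form perfectly suited to this reduction.
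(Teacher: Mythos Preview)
Your proposal is correct and follows essentially the same route as the paper: the same structural decomposition $T_k = R^T D_k R$ from Theorem~\ref{thm:P3structural}, the same conjugation reduction to the commutativity of $D_1^{-1}D_2$ and $D_1^{-1}D_3$, the same appeal to Theorem~\ref{thm:commuting subspace} to conclude that $\{D_1^{-1}S_i\}$ is a non-commuting family whenever $D_1$ is invertible, and the same application of Lemma~\ref{lem:comm} using the independence of the second and third columns of $R$ from the first. The paper packages the probability argument via named events $E_1,E_1',E_2,E_3,E_4$ rather than explicit conditioning, but the content is identical.
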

\begin{proof}
%We begin by defining the following event
%\begin{align*}
%    E_1 &:= \text{Event such that } T'_{1}T_{2}, T'_{1}T_{3} \text{ commute }
%\end{align*}
By Theorem~\ref{thm:P3structural} we know that $T_k = R^T(\sum_{i=1}^n r_{i,k}S_i)R$
where $S_1,\ldots,S_n$ are the slices of $f$.
%\newline
Let us define $D_1=\sum_{i=1}^n r_{i,1}S_i$.
Then we have:
\begin{align*}
    T_1'T_2 &= R^{-1}(D_1)^{-1}R^{-T}R^T(\sum_{i=1}^n r_{i,2}S_i)R \\
    %& = R^{-1}(D_1)^{-1}(\sum_{i=1}^n r_{i,2}S_i)R \\
    &= R^{-1}(\sum_{i=1}^n r_{i,2}D_1^{-1}S_i)R.
\end{align*}
Similarly,
%\begin{align*}
 $\displaystyle   T_1'T_3 = R^{-1}(\sum_{i=1}^n r_{i,3}D_1^{-1}S_i)R.$
%\end{align*}
%So now $T_1'T_2$ and $T_1'T_3$ commute implies that
%\begin{align*}
%    T_1'T_2T_1'T_3 - T_1'T_3T_1'T_2 = 0 
%\end{align*}
So $T'_1T_2$ commutes with $T'_1T_3$ iff $R$ is invertible and $\sum_{i=1}^n r_{i,2}D_1^{-1}S_i$ commutes with 
$\sum_{i=1}^n r_{i,3}D_1^{-1}S_i$.
%\begin{comment}
%    & R^{-1}(\sum_{i=1}^n r_{i,2}D_1^{-1}S_i)RR^{-1}(\sum_{i=1}^n r_{i,3}D_1^{-1}S_i)R - R^{-1}(\sum_{i=1}^n r_{i,3}D_1^{-1}S_i)RR^{-1}(\sum_{i=1}^n r_{i,2}D_1^{-1}S_i)R = 0 \\
%    & R^{-1}(\sum_{i,j = 1}^n r_{i,2}r_{j,3} (D_1^{-1}S_iD_1^{-1}S_j-D_1^{-1}S_jD_1^{-1}S_i)R = 0 \\
%    & \sum_{i,j = 1}^n r_{i,2}r_{j,3} (D_1^{-1}S_iD_1^{-1}S_j-D_1^{-1}S_jD_1^{-1}S_i) = 0 
%\end{comment}
Let  $E_1$ be the event that $T'_{1}T_{2}$ commutes with $T'_{1}T_{3}$, and let $E'_1$ be the event that $\sum_{i=1}^n r_{i,2}D_1^{-1}S_i$ commutes with 
$\sum_{i=1}^n r_{i,3}D_1^{-1}S_i$.
  Let $E_2$ be the event that $\{(D_{1})^{-1}S_{i}\}_{i \in [n]}$  is not a commuting family.
Since $\mathcal{V}$ does not satisfy the commutativity property, $(D_{1})^{-1}\mathcal{V}$ is not a commuting subspace if $D_{1}$ is invertible.
%\newline
Hence the event that $D_1$ is invertible is the same as $E_2$. 
%\newline
Setting $A_{i} = (D_{1})^{-1}S_{i}$, $\alpha_{i} = r_{i,2}$, $\beta_i = r_{i,3}$ in Lemma \ref{lem:comm} we obtain
%\begin{equation}
  $$  \text{Pr}_{R \in S}\big[E'_1\big|E_2\big] \leq \frac{2}{|S|}.$$
%\end{equation}
Note here that $D_{1}$ depends only on the random variables $r_{i,1}$ for all $i \in [n]$ and therefore is independent of $r_{k,2}$ and $r_{l,3}$ for all $k,l \in [n]$, because we assume that the entries of $R$ are all picked uniformly and independently at random.
%\newline
%\textbf{Note:} Here we use the notation $\text{Pr}_{R \in S}$ to denote the probability when $r_{i,j}$'s are chosen uniformly and independently at random from $S$.
%\newline

Now we know that $T_{1}$ is invertible iff $R$ and $D_{1}$ are invertible.
Let  $E_3$ be the event  that  $T_{1}$  is invertible, and $E_4$  the event that $R$ is invertible.
We have $E_3 = E_2 \cap E_4$, % and    $ \text{Pr}[E_3] \geq 1 -{2n}/{|S|}$ by Lemma \ref{lem:t1invertible}. We 
and we have seen that $E_1=E'_1 \cap E_4$.
The probability of error can finally be bounded as follows:
$$ \text{Pr}_{R \in S}[E_1 \cap E_3] = \text{Pr}_{R \in S}[E'_1\cap E_2 \cap E_4]
     \leq {\text{Pr}_{R \in S}[E'_1| E_2]} \leq 2/|S|.$$
\end{proof}

\subsection{Failure of diagonalisability} \label{sec:faildiag}

Theorem \ref{thm:algcomm} gives us a way to analyze the case when the slices of 
the input polynomial fail to satisfy the commutativity property. With the results in the present section we will be able to analyze the case where the commutativity property is satisfied, but the diagonalisability property fails (recall that these properties
are relevant due to Theorem~\ref{thm:main}).
\begin{proposition}\label{prop:diagon}
Let $\mathcal{U} \subseteq M_n(\mathbb{K})$ be a commuting subspace of matrices. We define 
\begin{align*}
    \mathcal{M} := \Big\{M \Big| M \text{ is diagonalisable and } M \in \mathcal{U}\Big\}.
\end{align*}
Then $\mathcal{M}$ is a linear subspace of $\mathcal{U}$.
%\par
In particular, if there exists $A \in \mathcal{U}$ such that $A$ is not diagonalisable then $\mathcal{M}$ is a proper linear subspace of 
$\mathcal{U}$.
\end{proposition}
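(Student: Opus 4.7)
The plan is to reduce the statement to the classical fact, already used in Remark~\ref{rem:diagprop}, that a finite family of commuting diagonalisable matrices is simultaneously diagonalisable. The zero matrix is trivially diagonalisable and lies in $\mathcal{U}$, so $0 \in \mathcal{M}$. To check closure under linear combinations, I would fix arbitrary $A, B \in \mathcal{M}$ and scalars $\alpha, \beta \in \mathbb{K}$, and argue as follows. Since $A, B \in \mathcal{U}$ and $\mathcal{U}$ is a commuting subspace, $A$ and $B$ commute; by assumption both are diagonalisable. Hence there exists an invertible $P \in M_n(\mathbb{K})$ such that $P^{-1}AP$ and $P^{-1}BP$ are simultaneously diagonal. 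Then
\[
P^{-1}(\alpha A + \beta B) P \;=\; \alpha P^{-1}AP \,+\, \beta P^{-1}BP
\]
is diagonal, so $\alpha A + \beta B$ is diagonalisable. It also lies in $\mathcal{U}$ since $\mathcal{U}$ is a linear subspace, and therefore in $\mathcal{M}$.

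The ``in particular'' clause is immediate: if some $A \in \mathcal{U}$ is not diagonalisable, then $A \notin \mathcal{M}$ while $A \in \mathcal{U}$, so the inclusion $\mathcal{M} \subseteq \mathcal{U}$ is strict. I do not anticipate any real obstacle here; the only step worth flagging is the appeal to simultaneous diagonalisation of a commuting pair of diagonalisable matrices, which is a standard result in linear algebra and was already invoked in Remark~\ref{rem:diagprop} for a basis of a commuting subspace. Applying it to the two-element family $\{A, B\}$ is exactly what is needed to propagate diagonalisability through linear combinations.
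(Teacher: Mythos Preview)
Your proof is correct and follows essentially the same approach as the paper: both reduce closure of $\mathcal{M}$ under linear combinations to the standard fact that two commuting diagonalisable matrices are simultaneously diagonalisable. The paper separates scalar multiplication (trivial) from addition, while you handle $\alpha A + \beta B$ in one step, but the argument is otherwise identical.
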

\begin{proof}

$\mathcal{M}$ is trivially closed under multiplication by scalars.
%\newline
Let $M,N \in \mathcal{M}$. These two matrices are diagonalisable by definition of $\mathcal{M}$, and they commute 
since $\mathcal{M} \subseteq \mathcal{U}$. Hence they are simultaneously diagonalisable. Thus $\mathcal{M}$ is closed under addition as well, which implies that it is a linear subspace of~$\mathcal{U}$.
%\par
%For the next case, let us assume on the contrary that $\text{dim}(\mathcal{M}) = \text{dim}(\mathcal{U})$. Also, $\mathcal{M} \subseteq \mathcal{U}$ where $\mathcal{M}$ is a vector space, then $\mathcal{M} = \mathcal{U}$. That implies that $A \in \mathcal{M}$, which is a contradiction.
\end{proof}
\begin{corollary}\label{corr:diag}
Let $\{A_i\}_{i \in [n]}$  be a commuting family of matrices such that $A_i$ is not diagonalisable for at least one index $i \in [n]$. Let $S \subset \mathbb{K}^n$ be a finite set. Then  $D = \sum_{i} \alpha_iA_i$ is diagonalisable with probability at most ${1}/{|S|}$ when $\alpha_1,...,\alpha_n$ are chosen uniformly and independently at random from $S$.
\end{corollary}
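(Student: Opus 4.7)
The plan is to combine Proposition \ref{prop:diagon} with the Schwartz--Zippel lemma by promoting the abstract subspace inclusion ``$D \in \mathcal{M}$'' into a single non-zero linear polynomial equation in the random coefficients $\alpha_1,\ldots,\alpha_n$.

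First, I would let $\mathcal{U} = \mathrm{span}(A_1,\ldots,A_n) \subseteq M_n(\mathbb{K})$. Since the $A_i$ pairwise commute, every element of $\mathcal{U}$ commutes with every other element, so $\mathcal{U}$ is a commuting subspace and the hypotheses of Proposition \ref{prop:diagon} are satisfied. By that proposition, the set $\mathcal{M}$ of diagonalisable matrices inside $\mathcal{U}$ is a linear subspace of $\mathcal{U}$. The extra hypothesis that some $A_{i_0}$ is not diagonalisable means $A_{i_0} \in \mathcal{U} \setminus \mathcal{M}$, so $\mathcal{M}$ is a \emph{proper} linear subspace of $\mathcal{U}$.

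Next, since $\mathcal{M}$ is a proper subspace of the finite-dimensional space $\mathcal{U}$, there exists a linear functional $\lambda : \mathcal{U} \to \mathbb{K}$ such that $\lambda$ vanishes identically on $\mathcal{M}$ but $\lambda(A_{i_0}) \neq 0$ for at least one index $i_0$ (take any linear complement of $\mathcal{M}$ in $\mathcal{U}$ and project onto a suitable coordinate). Applying $\lambda$ to $D = \sum_{i=1}^n \alpha_i A_i$ gives
\[
\lambda(D) = \sum_{i=1}^n \alpha_i \, \lambda(A_i),
\]
which is a polynomial of total degree $1$ in $\alpha_1,\ldots,\alpha_n$, and which is not identically zero because at least the coefficient $\lambda(A_{i_0})$ is non-zero.

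Finally, if $D$ is diagonalisable then $D \in \mathcal{M}$ (since $D \in \mathcal{U}$ automatically), and hence $\lambda(D) = 0$. Thus the event ``$D$ is diagonalisable'' is contained in the event $\{\lambda(D) = 0\}$, and by the Schwartz--Zippel lemma (Lemma \ref{lem:SZ}) applied to the non-zero degree-$1$ polynomial above, this event has probability at most $1/|S|$ when the $\alpha_i$ are chosen uniformly and independently from $S$. There is no real obstacle here; the only point worth checking carefully is that Proposition \ref{prop:diagon} applies to an entire subspace (not just a finite family), which is exactly why we enlarged the family to $\mathcal{U}$ at the start.
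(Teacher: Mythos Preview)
Your proof is correct and follows essentially the same approach as the paper: both arguments pass to $\mathcal{U}=\mathrm{span}(A_1,\ldots,A_n)$, invoke Proposition~\ref{prop:diagon} to conclude that the diagonalisable matrices $\mathcal{M}$ form a proper subspace, pick a linear functional (the paper calls it $l_{\mathcal{M}}$, you call it $\lambda$) vanishing on $\mathcal{M}$ but not on some $A_{i_0}$, and then apply Schwartz--Zippel to the resulting degree-$1$ polynomial in the $\alpha_i$. The only cosmetic difference is that you phrase the existence of $\lambda$ via a linear complement, whereas the paper says ``$\mathcal{M}$ is an intersection of hyperplanes''; these are equivalent formulations.
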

\begin{proof}
We define $\mathcal{U} = \text{span}\{A_1,...,A_n\}$ and
%\newline
%We define 
\begin{align*}
    \mathcal{M} := \Big\{M \Big| M \text{ is diagonalisable and } M \in \mathcal{U}\Big\}.
\end{align*}
So the probability of error is 
%\begin{equation}
%\label{eq:diagprob}
   $ \text{Pr}_{\bar{\alpha} \in S}\Big[D \in \mathcal{M}\Big].$
%\end{equation}
By Proposition \ref{prop:diagon} and the hypothesis that there exists $A_i \in \mathcal{U} \setminus \mathcal{M}$, $\mathcal{M}$ is a proper linear subspace of~$\mathcal{U}$. So $\mathcal{M}$ is an intersection of hyperplanes. Since $A_i \not\in \mathcal{M}$, there exists  a linear form $l_{\mathcal{M}}(X)$ corresponding to a hyperplane  such that $l_{\mathcal{M}}(M) = 0$ for all $M \in \mathcal{M}$ and $l_{\mathcal{M}}(A_i) \neq 0$. This gives us that $l_{\mathcal{M}} \not\equiv 0$. 
%Also, $\text{deg}(l_{\mathcal{M}}) = 1$ since it's the equation corresponding to a hyperplane. 
%We know that $D$ is diagonalisable iff $l_{\mathcal{M}}(D) = 0$
We know that if $D$ is diagonalisable then $l_{\mathcal{M}}(D) = 0$.
By the  Schwartz-Zippel Lemma %and Equation (\ref{eq:diagprob}), we get that 
the probability of error satisfies:
\begin{align*}
    \text{Pr}_{\bar{\alpha} \in S}\Big[D \in \mathcal{M}\Big] { \leq} \text{Pr}_{\bar{\alpha} \in S}\Big[l_{\mathcal{M}}(D) = 0\Big] \leq \frac{1}{|S|}
\end{align*}
since $\text{deg}(l_{\mathcal{M}}) = 1$.
\end{proof}

The last result of this section is an analogue of Theorem \ref{thm:algcomm} for the diagonalisability property.
\begin{theorem}\label{thm:diagonalisability}
Let $f \in \mathbb{K}[x_1,...,x_n]_3$ be a degree 3 form such that the subspace $\mathcal{V}$  spanned by its $n$ slices %$S_1,...,S_n$ 
is non-singular, satisfies the {commutativity property} but does not satisfy the diagonalisability property.
Let $h(x) = f(Rx)$ where the entries $r_{i,j}$ of $R$ are chosen uniformly and independently at random from a finite set $S \subset \mathbb{K}$. % with $|S|>2n$. 
Let $T_1,...,T_n$ be the slices of $h$. 
If $T_1$ is invertible, define $T_1' = T_1^{-1}$. Then
\begin{align*}
    \text{Pr}_{R \in S}[ T_{1} \text{ is invertible and } T_{1}'T_{2} \text{ is diagonalisable}] \leq \frac{1}{|S|}.
\end{align*}
\end{theorem}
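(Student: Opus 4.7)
The plan is to parallel the structure of the proof of Theorem~\ref{thm:algcomm}, but replace the appeal to Lemma~\ref{lem:comm} by Corollary~\ref{corr:diag}. First, I would use Theorem~\ref{thm:P3structural} to write $T_k = R^T D_k R$ with $D_k = \sum_{i=1}^n r_{i,k} S_i$, where $S_1,\ldots,S_n$ are the slices of $f$. When $T_1$ is invertible, both $R$ and $D_1$ are invertible, and a short computation identical to the one in Theorem~\ref{thm:algcomm} shows
\begin{equation*}
T_1' T_2 \;=\; R^{-1}\Bigl(\sum_{i=1}^n r_{i,2}\, D_1^{-1} S_i\Bigr) R.
\end{equation*}
Since conjugation by $R$ preserves diagonalisability, $T_1'T_2$ is diagonalisable if and only if $M := \sum_{i=1}^n r_{i,2} D_1^{-1} S_i$ is diagonalisable.

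Next I would exploit the hypotheses on $\mathcal{V}$ to set up Corollary~\ref{corr:diag}. Because $\mathcal{V}$ is non-singular and satisfies the commutativity property, whenever $D_1 \in \mathcal{V}$ is invertible, the family $\{D_1^{-1} S_i\}_{i \in [n]}$ spans the commuting subspace $D_1^{-1} \mathcal{V}$ (Theorem~\ref{thm:commuting subspace}). Because $\mathcal{V}$ does \emph{not} satisfy the diagonalisability property, Theorem~\ref{thm:diagonalisable subspace} tells us that the matrices in $D_1^{-1} \mathcal{V}$ are not simultaneously diagonalisable; but a commuting family is simultaneously diagonalisable iff each member is diagonalisable, and by Proposition~\ref{prop:diagon} the diagonalisable elements of $D_1^{-1} \mathcal{V}$ form a subspace. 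Hence at least one $D_1^{-1} S_i$ must fail to be diagonalisable. This puts us exactly in the setting of Corollary~\ref{corr:diag}, with $A_i = D_1^{-1} S_i$ and $\alpha_i = r_{i,2}$.

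Finally, I would handle the probabilities by conditioning on the column $r_{\bullet,1}$, which determines $D_1$. If $D_1$ is singular, then $T_1$ is singular and the event of interest does not occur. If $D_1$ is invertible, then the $r_{i,2}$ are independent of $D_1$ (the entries of $R$ are drawn independently), so Corollary~\ref{corr:diag} applied to the commuting family $\{D_1^{-1} S_i\}$ with the fresh randomness $\alpha_i = r_{i,2}$ gives
\begin{equation*}
\Pr_{r_{\bullet,2}\in S}\bigl[M \text{ is diagonalisable} \,\big|\, r_{\bullet,1}\bigr] \;\leq\; \frac{1}{|S|}.
\end{equation*}
Averaging over $r_{\bullet,1}$ (and the remaining randomness, which is irrelevant here) yields the claimed bound $1/|S|$ on $\Pr[T_1 \text{ invertible and } T_1' T_2 \text{ diagonalisable}]$.

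The main obstacle is a bookkeeping one, namely cleanly isolating the randomness of $r_{\bullet,2}$ from that of $r_{\bullet,1}$ (on which $D_1$ depends) and of the remaining entries of $R$ (on which invertibility of $R$ depends); this is handled by the same conditioning device already used in the proof of Theorem~\ref{thm:algcomm}. The genuinely substantive step is the extraction of a non-diagonalisable $D_1^{-1} S_i$ from the failure of the diagonalisability property, which is what allows Corollary~\ref{corr:diag} to be invoked.
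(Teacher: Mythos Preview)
Your proposal is correct and follows essentially the same approach as the paper: you express $T_1'T_2$ via Theorem~\ref{thm:P3structural}, reduce diagonalisability of $T_1'T_2$ to that of $M=\sum_i r_{i,2}D_1^{-1}S_i$, use Theorems~\ref{thm:commuting subspace} and~\ref{thm:diagonalisable subspace} to ensure the family $\{D_1^{-1}S_i\}$ is commuting with at least one non-diagonalisable member, and then invoke Corollary~\ref{corr:diag} with the independent randomness $r_{\bullet,2}$. The paper packages the conditioning step via named events $E_1,E_1',E_2,E_3,E_4$ rather than by explicitly conditioning on the first column of $R$, but the underlying argument is identical.
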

\begin{proof}
As in the proof of Theorem \ref{thm:algcomm} we have
\begin{align*}
    T_1'T_2 = R^{-1}(\sum_{i=1}^n r_{i,2}D_1^{-1}S_i)R
\end{align*}
where $D_1=\sum_{i=1}^n r_{i,1}S_i$.
So $T_1'T_2$ is diagonalisable iff $R$ is invertible and $M = \sum_{j \in [n]}r_{j,2}D_1^{-1}S_j$ is diagonalisable. 
We denote by $E_1$ be the event that $T'_{1}T_{2}$ is diagonalisable, and by $E'_1$ the event that $M$ is diagonalisable. 

Let $E_2$ be the event that $\{(D_{1})^{-1}S_{i}\}_{i \in [n]}$ is a commuting family, but there exists $i \in [n]$ such 
that  $(D_{1})^{-1}S_{i}$ is not diagonalisable.
%\end{align*}
Since $\mathcal{V}$ satisfies the {commutativity property} and does not satisfy the {diagonalisability property}, 
by Theorem \ref{thm:diagonalisable subspace} the event that $D_1$ is invertible is the same event as $E_2$.

Setting $A_{i} = (D_{1})^{-1}S_{i}$ and  $\alpha_{i} = r_{i,2}$ in Corollary \ref{corr:diag}, we obtain
%\begin{equation}
    $$\text{Pr}_{R \in S}\big[E'_1\big|E_2\big] \leq \frac{1}{|S|}.$$
%\end{equation}
{Note} here that $D_{1}$ depends only on the random variables $r_{i,1}$ for all $i \in [n]$ and therefore is independent of $r_{k,2}$ for all $k \in [n]$, because we assume that the entries of $R$ are all picked uniformly and independently at random.

Now we know that $T_{1}$ is invertible iff $R$ and $D_{1}$ is invertible.
Let  $E_3$ be the event  that  $T_{1}$  is invertible, and $E_4$  the event that $R$ is invertible.
We have $E_3 = E_2 \cap E_4$, 
%and $\text{Pr}[E_3] \geq 1 -\frac{2n}{|S|}$ by Lemma \ref{lem:t1invertible}.
and we have seen that $E_1=E'_1 \cap E_4$.
The probability of error can finally be bounded as follows:
\begin{align*}
 %   \text{Pr}_{R \in S}[E_1|E_3] = \frac{\text{Pr}_{R \in S}[E_1\cap E_3]}{\text{Pr}_{R \in S}[E_3]} &= \frac{\text{Pr}_{R \in S}[E'_1\cap E_2 \cap E_4]}{\text{Pr}_{R \in S}[E_3]} \\
  \text{Pr}_{R \in S}[E_1 \cap E_3] = \text{Pr}_{R \in S}[E'_1\cap E_2 \cap E_4]
     \leq {\text{Pr}_{R \in S}[E'_1| E_2]} 
     \leq {\frac{1}{|S|}}.
\end{align*}
\end{proof}

\subsection{Analysis for negative inputs} \label{sec:neg}

In this section we complete the proof of Theorem~\ref{th:equiv3}. The case of positive inputs was treated in Section~\ref{sec:positive}. It therefore remains to prove the following result.
\begin{theorem}
If an input $f \in \mathbb{K}[x_1,...,x_n]_3$ is not equivalent to a sum of cubes, then $f$ is rejected by Algorithm~\ref{algo:cubes} with high probability over the choice of the random matrix $R$. More precisely, if the entries $r_{i,j}$ are chosen uniformly and independently at random from a finite set $S \subseteq \mathbb{K}$  %with $|S|>2n$, 
then the input will be rejected with probability at least $1-\frac{2}{|S|}$.
\end{theorem}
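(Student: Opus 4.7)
The plan is to reduce to the three previously analyzed failure modes via the characterization in Theorem~\ref{thm:main}. Since $f$ is not equivalent to $P_3$, the space $\mathcal{V}$ spanned by its slices $S_1,\ldots,S_n$ must fail at least one of the three conditions in Theorem~\ref{thm:main}: (a) $\mathcal{V}$ is singular (every matrix in $\mathcal V$ is singular); (b) $\mathcal V$ is nonsingular but fails the commutativity property; or (c) $\mathcal V$ is nonsingular, satisfies the commutativity property, but fails the diagonalisability property. I would split the proof into these three cases and bound the probability of (erroneous) acceptance in each.

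In case (a), every linear combination $D_1 = \sum_i r_{i,1} S_i$ is singular, so by the formula $T_1 = R^T D_1 R$ from Theorem~\ref{thm:P3structural} the matrix $T_1$ is singular for every choice of $R$. Thus the algorithm rejects with probability $1$, which is trivially at least $1 - 2/|S|$. In case (b), Theorem~\ref{thm:algcomm} applies directly and gives
\[
\Pr_{R \in S}\bigl[T_1 \text{ invertible and } T'_1 T_2, T'_1 T_3 \text{ commute}\bigr] \leq \tfrac{2}{|S|}.
\]
Acceptance requires both $T_1$ invertible and the commutation of $T'_1T_2, T'_1T_3$, so the probability of acceptance is at most $2/|S|$, giving rejection probability at least $1 - 2/|S|$. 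In case (c), Theorem~\ref{thm:diagonalisability} yields
\[
\Pr_{R \in S}\bigl[T_1 \text{ invertible and } T'_1 T_2 \text{ diagonalisable}\bigr] \leq \tfrac{1}{|S|},
\]
and since acceptance requires both conditions, the acceptance probability is at most $1/|S| \leq 2/|S|$.

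Taking the worst case over (a), (b), (c) gives a rejection probability of at least $1 - 2/|S|$ in all situations, which is the desired bound. There is no genuine obstacle here: all the real work has already been carried out in Section~\ref{sec:failcom} and Section~\ref{sec:faildiag}, and the only thing to verify is that the trichotomy is exhaustive, which follows immediately from the ``if and only if'' statement of Theorem~\ref{thm:main} combined with Theorem~\ref{thm:socchar}. The mildest subtlety is the boundary between (a) and (b): the commutativity and diagonalisability properties are only defined for nonsingular subspaces, so case (a) must be handled separately before invoking Theorems~\ref{thm:algcomm} and~\ref{thm:diagonalisability}, both of which explicitly assume nonsingularity of $\mathcal V$.
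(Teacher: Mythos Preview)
Your proof is correct and follows essentially the same approach as the paper: a three-way case split on the failure mode of $\mathcal V$ via Theorems~\ref{thm:socchar} and~\ref{thm:main}, with case~(a) handled by the singularity of $T_1$, case~(b) by Theorem~\ref{thm:algcomm}, and case~(c) by Theorem~\ref{thm:diagonalisability}. Your explanation of why $T_1$ is singular in case~(a) (via $T_1 = R^T D_1 R$ with $D_1 \in \mathcal V$) is in fact slightly more careful than the paper's own phrasing.
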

\begin{proof}
Let $S_1,...,S_n$ be the slices of $f$ and $\mathcal{V} =  \text{span}\{S_1,...,S_n\}$.
From Theorem \ref{thm:socchar} and Theorem \ref{thm:main}, we know that if $f \not\sim P_3$ there are three disjoint cases to consider:
\begin{itemize}
    \item[(i)] $\mathcal{V}$ is a singular subspace of matrices.
    \item[(ii)] $\mathcal{V}$ is a non-singular subspace and does not satisfy the {commutativity property}.
    \item[(iii)] $\mathcal{V}$ is a non-singular subspace, satisfies the {commutativity property} but does not satisfy the {diagonalisability property}.
\end{itemize}
 We will upper bound the probability of error in each case.
 In case (i), $T_1 = \sum_{j \in [n]} r_{1,j}S_j \in \mathcal{V}$ is always singular for any choice of the $r_{1,j}$.
 So $f$ is rejected by the algorithm with probability 1 in this case.
  In case (ii)   we can  upper bound the probability of error as follows:
\begin{align*}
        &\text{Pr}_{R \in S}[f \text{ is accepted by the algorithm}] \\
        &=\text{Pr}_{R \in S}[T_1 \text{ is invertible, }T_1'T_2, T_1'T_3 \text{ commute, } T_1'T_2 \text{ is diagonalisable}] \\
        &\leq \text{Pr}_{R \in S}[T_1 \text{ is invertible, }T_1'T_2, T_1'T_3 \text{ commute}]. %\\
        %&\leq \text{Pr}_{R \in S}[T_1'T_2, T_1'T_3 \text{ commute }|T_1 \text{ is invertible}].
    \end{align*}
    By Theorem \ref{thm:algcomm}, this occurs with probability $2/|S|$ at most.
     In case (iii) we have the following bound on the probability of error:
     \begin{align*}
        &\text{Pr}_{R \in S}[f \text{ is accepted by the algorithm}] \\
        &=\text{Pr}_{R \in S}[T_1 \text{ is invertible, }T_1'T_2, T_1'T_3 \text{ commute, } T_1'T_2 \text{ is diagonalisable}] \\
        &\leq \text{Pr}_{R \in S}[T_1 \text{ is invertible, }T_1'T_2 \text{ is diagonalisable}].  %\\
        %&\leq \text{Pr}_{R \in S}[T_1'T_2 \text{ is diagonalisable}|T_1 \text{ is invertible}].
    \end{align*}
  By Theorem \ref{thm:diagonalisability} this occurs with probability $1/|S|$ at most.
Therefore, in all three cases  the algorithm rejects $f$ with probability at least $1- \frac{2}{|S|}$.
\end{proof}

%\section{Old Theorems}
%\subfile{oldtheorems}
%\section{Notations}
%\subfile{notations}
\newpage
%{ \section{Remainder of the paper starts here}}
%\section{A commutativity lemma}
%\subfile{commutativity}
%\newpage
%\section{A diagonalisability lemma}
%\subfile{diagonalisability}
%\newpage
%\section{Efficient randomized equivalence algorithm for $P_3$}
%\subsection{The Algorithm}
%\subfile{randomized}
%\subsection{Some necessary lemmas}
%\subfile{necessarytheorems}
%\subsection{Theorem for commutativity}
%\subfile{randcomm}
%\subsection{Theorem for diagonalisability}
%\subfile{randdiag}
%\subsection{Theorem required for converse}
%\subfile{randopp}
%\subsection{Proof of Correctness}
%\subfile{randproof}
%\subsection{Complexity of the algorithm}
%\subfile{complexity}
%\newpage
%\section{Reproving old theorems for $P_4$}
%\subfile{p4reprove}
%\section{Randomized Algorithm for $P_4$}
%\subsection{The Algorithm}
%\subfile{p4algorithm}
%\subsection{Some necessary lemmas}
%\subfile{p4necessarytheorems}
%\subsection{Theorem for commutativity}
%\subfile{p4randcomm}
%\subsection{Theorem for diagonalisability}
%\subfile{p4randdiag}
%\subsection{Theorems required for converse}
%\subfile{p4randopp}
%\subsection{Proof of Correctness}
%\subfile{p4randproof}
%\subfile{p4complexity}
\section{Equivalence to a linear combination of $d$-th powers}\label{sec:dthpowers}
We can associate to a symmetric tensor $T$ of order $d$ the homogeneous polynomial
\begin{align*}
    f(x_1,...,x_n) = \sum_{i_1,...,i_d \in [n]} T_{i_1...i_d} x_{i_1}...x_{i_d}.
\end{align*}
\newline
This correspondence is bijective, and the symmetric
tensor associated to a homogeneous polynomial $f$ can be obtained from the
relation
\begin{align*}
    T_{i_1...i_d}  = \frac{1}{d!}\frac{\partial^d f}{\partial x_{i_1}...\partial x_{i_d}}.
\end{align*}
The $(i_1,...,i_{d-2})$-th slice of $T$ is the symmetric matrix $T_{i_1...i_{d-2}}$ with entries $(T_{i_1...i_{d-2}})_{i_{d-1},i_d} = T_{i_1...i_d}$
\subsection{The Algorithm}\label{sec:algod}
Recall from Section \ref{sec:notations}, we denote by $\mathcal{P}_d$, the set of polynomials of the form $\sum_{i=1}^n \alpha_ix_i^d$ with $\alpha_i \neq 0$ for all $i \in [n]$. In this section we present a poly-time algorithm for checking whether an input degree $d$ form in $n$ variables $f$ is equivalent to some polynomial in $\mathcal{P}_d$(see Algorithm \ref{alg:pd} below). This means that $f(x) = P_d(Ax)$ for some $P_d \in \mathcal{P}_d$ such that $A$ is invertible. 
\par
Recall from Section \ref{sec:fastercubes}, that the equivalence algorithm for sum of cubes needs to check if $T_1'T_2$ commutes with $T_1'T_3$ and if $T_1'T_2$ is diagonalisable, where $T_1,...,T_n$ are the slices of $h(x) = f(Rx)$. Now, we prove a surprising fact that even for the higher degree cases, checking commutativity of 2 matrices and the diagonalisability of 1 matrix is enough to check equivalence to sum of linear combination of $d$-th powers. Let $\{T_{i_1,...,i_{d-2}}\}_{i_1,...,i_{d-2} \in [n]}$ be the slices of $h(x) = f(Rx)$. We denote by $T_{\bar{i}}$, the corresponding slice $T_{i...i}$.
Algorithm \ref{alg:pd} checks if $T_{\bar{1}}'T_{\bar{2}}$ commutes with $T_{\bar{1}}'T_{\bar{3}}$ and if $T_{\bar{1}}'T_{\bar{3}}$ is diagonalisable.
\par
Interestingly though, arbitrary slices of a degree-$d$ polynomial are hard to compute. These particular slices are special because they can be computed using small number of calls to the blackbox and in small number of arithmetic operations (due to the fact that they are essentially repeated partial derivatives with respect to a single variable) and hence, help us give a polynomial time algorithm. More precisely, we show that if the polynomial is given as a blackbox, the algorithm requires only $O(n^2d)$ calls to the blackbox and $O(n^2M(d)\log d+ n^{\omega+1})$ many arithmetic operations. We do a detailed complexity analysis of this algorithm in Appendix \ref{app:dthpowers}.
\begin{algorithm}[H]\label{alg:pd}
\SetAlgoLined
%\KwResult{Write here the result }
 \textbf{Input:} A degree-d homogeneous polynomial $f$
 \\
% \While{While condition}{
Let $R \in M_n(\mathbb{K})$ be a matrix such that its entries $r_{ij}$ are picked uniformly and independently at random from a finite set $S$ and set $h(x) = f(Rx)$.  \\
Let $\{T_{i_1...i_{d-2}}\}_{i_1...i_{d-2} \in [n]}$ be the slices of $h$. \\
We compute the slices $T_{\bar{1}}$,$T_{\bar{2}}$,$T_{\bar{3}}$. \\
\eIf{$T_{\bar{1}}$ is singular}{reject}{compute $T_{\bar{1}}' = (T_{\bar{1}})^{-1}$ \\
   \eIf{ $T_{\bar{1}}'T_{\bar{2}}$ and $T_{\bar{1}}'T_{\bar{3}} $ commute and $T_{\bar{1}}'T_{\bar{2}}$ is diagonalisable over $\mathbb{K}$}{
   accept
   }{
   reject
  }
  }
 \caption{Randomized algorithm to check polynomial equivalence to $\mathcal{P}_d$}
\end{algorithm}
The remainder of this section is devoted to a correctness proof for Algorithm \ref{alg:pd}, including an analysis of the probability of error. Our main result
about this algorithm is as follows:
\begin{theorem}\label{thm:pdproof}
If an input $f \in \mathbb{F}[x_1,...,x_n]_d$ is not equivalent to some polynomial $P_d \in \mathcal{P}_d$, then $f$ is rejected by the algorithm with high probability over the choice of the random matrix $R$. More precisely, if the entries $r_{i,j}$ of $R$ are chosen uniformly and independently at random from a finite set $S \subseteq \mathbb{K}$, then the input will be rejected with probability $\geq (1-\frac{2(d-2)}{|S|})$.
\newline
Conversely, if $f$ is equivalent to some polynomial $P_d \in \mathcal{P}_d$, then $f$ will be accepted with high probability over the choice of the random matrix $R$. More precisely, if the entries $r_{i,j}$ are chosen uniformly and independently at random from a finite set $S \subseteq \mathbb{K}$, then the input will be accepted with probability $\geq (1-\frac{n(d-1)}{|S|})$.
\end{theorem}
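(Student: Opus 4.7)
The plan is to parallel the structure of the degree-$3$ analysis (Theorem~\ref{th:equiv3}) while carefully tracking how the random matrix $R$ interacts with the higher-order slices. The first preparatory step is to generalise Theorem~\ref{thm:P3structural} to degree~$d$: I would show that if $f$ has symmetric tensor $F$ then the slice $T_{\bar{k}}$ of $h(x) = f(Rx)$ can be written as $T_{\bar{k}} = R^{T} S^{(k)} R$, where $S^{(k)} = \sum_{j_1,\ldots,j_{d-2}} R_{j_1,k}\cdots R_{j_{d-2},k}\, F_{j_1,\ldots,j_{d-2}}$ is a linear combination of slices of $f$ whose coefficients are the monomials of degree $d-2$ in the $k$-th column $r_k$ of $R$. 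In the special case $f(x)=P_d(Ax)$ for $P_d=\sum \alpha_i x_i^d$, setting $B=AR$ one then obtains $T_{\bar{k}} = B^{T}\,\mathrm{diag}(\alpha_i B_{i,k}^{d-2})\,B$, which is the natural analogue of the formula in Theorem~\ref{thm:P3structural}.

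For the positive direction, once this diagonal form is in hand the argument mirrors Lemma~\ref{lem:opp} and Proposition~\ref{prop:equiv3}: when $T_{\bar{1}}$ is invertible, each $T_{\bar{1}}^{-1} T_{\bar{k}}$ is similar (via $B$) to a diagonal matrix, so these matrices commute and are diagonalisable. Invertibility of $T_{\bar{1}}$ requires $R$ invertible (one Schwartz-Zippel bound on $\det R$, degree $n$) together with the diagonal matrix $\mathrm{diag}(\alpha_i B_{i,1}^{d-2})$ invertible, whose determinant is a polynomial of total degree $n(d-2)$ in the first column of $R$, not identically zero since $A$ is invertible and all $\alpha_i\neq 0$. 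A union bound then yields a failure probability at most $n/|S|+n(d-2)/|S| = n(d-1)/|S|$.

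For the negative direction I would appeal to the generalisation Theorem~\ref{thm:Pdcharintro} (together with Theorem~\ref{thm:main}), which splits non-equivalence into three disjoint cases according to whether the span $\mathcal V$ of the slices of $f$ is singular, is non-singular but fails the commutativity property, or satisfies commutativity but fails diagonalisability. Case (i) is trivial because every $S^{(k)}$ lies in $\mathcal V$ and is thus singular, so $T_{\bar 1}$ is deterministically rejected. For case (ii) I would condition on $S^{(1)}$ being invertible, set $A_\mu = (S^{(1)})^{-1} F_\mu$ (indexed by multisets $\mu$ of size $d-2$; not a commuting family, since $\mathcal V$ does not satisfy the commutativity property), and expand $(S^{(1)})^{-1} S^{(k)} = \sum_\mu c_\mu\, r_k^\mu\, A_\mu$ with multinomial constants $c_\mu$. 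The commutator $[(S^{(1)})^{-1} S^{(2)},(S^{(1)})^{-1} S^{(3)}] = \sum_{\mu,\nu} c_\mu c_\nu\, r_2^\mu\, r_3^\nu\, [A_\mu,A_\nu]$ is a polynomial of total degree $2(d-2)$ in the entries of the second and third columns of $R$; since the monomials $r_2^\mu r_3^\nu$ are linearly independent and some $[A_\mu,A_\nu]\neq 0$, this polynomial is not identically zero, and Schwartz-Zippel gives commutativity with probability at most $2(d-2)/|S|$. This is the monomial-coefficient analogue of Lemma~\ref{lem:comm}. Case (iii) is treated by the same device with Corollary~\ref{corr:diag} in place of Lemma~\ref{lem:comm}: a hyperplane $\ell$ separating a non-diagonalisable $A_{\mu_0}$ from the space of diagonalisable matrices yields a nonzero degree-$(d-2)$ polynomial $\ell\bigl((S^{(1)})^{-1} S^{(2)}\bigr)$ in the second column of $R$, so diagonalisability occurs with probability at most $(d-2)/|S|$. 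Taking the maximum over the three disjoint cases yields the claimed $2(d-2)/|S|$ bound.

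The main obstacle I anticipate is the degree bookkeeping in cases (ii) and (iii): one must certify that the commutator (respectively, the image under $\ell$) of the appropriate random combinations is not identically zero as a polynomial in $R$'s entries, even though the coefficients $r_k^\mu$ are correlated symmetric monomials in the column $r_k$ rather than independent uniform scalars as in Lemma~\ref{lem:comm}. The key observation that unlocks the argument is that distinct multisets $\mu$ give linearly independent monomials, so each nonzero commutator $[A_\mu,A_\nu]$ (or nonzero value $\ell(A_\mu)$) survives in the expansion and witnesses that the polynomial is genuinely nonzero.
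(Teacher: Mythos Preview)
Your proposal is correct and follows essentially the same route as the paper: the same union bound for the positive case, and the same three-way case split for the negative case with Schwartz--Zippel applied to the commutator (degree $2(d-2)$) and to a separating linear form (degree $d-2$). The paper packages your ``distinct multisets give linearly independent monomials'' observation into two standalone lemmas about \emph{symmetric families} of matrices (its general commutativity and diagonalisability lemmas), and phrases the case split via an auxiliary notion of \emph{weak singularity} rather than ordinary singularity of~$\mathcal V$, but these are exactly your arguments in slightly different dress.
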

The proof structure of this theorem follows the one of Theorem \ref{th:equiv3}. In Section~\ref{sec:positived}, we give a proof of the second part of theorem i.e. the behavior of Algorithm \ref{alg:pd} on the positive inputs. Here we require a stronger property of the subspace spanned by the slices of these positive inputs. For this we define the notion of "weak singularity" in Section \ref{sec:equivd} and prove an equivalence result related to it. On the negative inputs i.e if a polynomial is not equivalent to some polynomial in $\mathcal{P}_d$, this can again happen in several ways depending on which property fails. We analyze the failure of commutativity in Section \ref{sec:failcomd} and failure of diagonalisability in Section \ref{sec:faildiagd}. Then we collect everything together and prove the first part of the theorem in Section \ref{sec:negd}.

\subsection{Characterisation of equivalence to $\mathcal{P}_d$}\label{sec:equivd}
First, we show how the slices of a degree-$d$ form evolve under a linear change of
variables. This result is an extension of Theorem \ref{thm:P3structural} to the higher degree case.
\begin{theorem}\label{thm:Pdslicestucture}
Let $g$ be a degree-$d$ form with slices $\{S_{i_1...i_{d-2}}\}_{i_1,...,i_{d-2} \in [n]}$ and let $f(x) = g(Ax)$. Then the slices $T_{i_1...i_{d-2}}$ of $f$, are given by
\begin{align*}
    T_{i_1...i_{d-2}} = A^TD_{i_1...i_{d-2}}A
\end{align*}
where $D_{i_1...i_{d-2}} = \sum_{j_1...j_{d-2} \in [n]}a_{j_1i_1}...a_{j_{d-2}i_{d-2}}S_{j_1...j_{d-2}}$ and $a_{i,j}$ are the entries of $A$.
\newline
If $g = \sum_{i=1}^n \alpha_i x_i^d$, we have $D_{i_1...i_{d-2}} = \text{diag}(\alpha_1(\prod_{m=1}^{d-2}a_{1,i_m}),...,\alpha_n(\prod_{m=1}^{d-2}a_{n,i_m}))$.
\end{theorem}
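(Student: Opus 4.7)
The plan is to prove the theorem by unpacking the definition of a slice in terms of partial derivatives and applying the multivariate chain rule to $f(x)=g(Ax)$. Throughout I write $y=Ax$ so that $y_j=\sum_k a_{j,k}x_k$, which gives $\partial y_j/\partial x_{i}=a_{j,i}$.

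First I would compute the full symmetric tensor $T$ of $f$ from the symmetric tensor $S$ of $g$. Using the formula
\[
T_{i_1\cdots i_d}=\frac{1}{d!}\frac{\partial^d f}{\partial x_{i_1}\cdots\partial x_{i_d}}
\]
together with $d$ applications of the chain rule, I obtain
\[
T_{i_1\cdots i_d}=\sum_{j_1,\ldots,j_d\in[n]} a_{j_1,i_1}\cdots a_{j_d,i_d}\, S_{j_1\cdots j_d}.
\]
This is the standard transformation rule for covariant symmetric tensors under a linear substitution, and is the direct higher-order analogue of Theorem \ref{thm:P3structural}.

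Next I would isolate the $(i_{d-1},i_d)$-entry of the slice $T_{i_1\cdots i_{d-2}}$. By definition this entry equals $T_{i_1\cdots i_d}$, so splitting the sum over $(j_{d-1},j_d)$ out of the sum over $(j_1,\ldots,j_{d-2})$ gives
\[
(T_{i_1\cdots i_{d-2}})_{i_{d-1},i_d}=\sum_{j_{d-1},j_d} a_{j_{d-1},i_{d-1}} a_{j_d,i_d}\Bigl(\sum_{j_1,\ldots,j_{d-2}} a_{j_1,i_1}\cdots a_{j_{d-2},i_{d-2}}\, S_{j_1\cdots j_{d-2}}\Bigr)_{j_{d-1},j_d}.
\]
Recognizing the outer sum as the matrix product $A^T D_{i_1\cdots i_{d-2}} A$ with $D_{i_1\cdots i_{d-2}}=\sum_{j_1,\ldots,j_{d-2}} a_{j_1,i_1}\cdots a_{j_{d-2},i_{d-2}}\, S_{j_1\cdots j_{d-2}}$ yields the first claim.

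For the second claim, I specialize to $g=\sum_{i=1}^n \alpha_i x_i^d$. Differentiating, the only nonzero entries of $S$ are $S_{j\cdots j}=\alpha_j$, so $S_{j_1\cdots j_{d-2}}$ vanishes unless $j_1=\cdots=j_{d-2}=:j$, and in that case $S_{j\cdots j}=\alpha_j\, e_j e_j^T$ is the diagonal matrix with $\alpha_j$ in position $(j,j)$. Substituting into the formula for $D_{i_1\cdots i_{d-2}}$ collapses the $(d-2)$-fold sum to a single sum over $j$, giving
\[
D_{i_1\cdots i_{d-2}}=\sum_{j=1}^n \alpha_j\Bigl(\prod_{m=1}^{d-2} a_{j,i_m}\Bigr) e_j e_j^T = \operatorname{diag}\!\Bigl(\alpha_1\!\prod_{m=1}^{d-2}a_{1,i_m},\ldots,\alpha_n\!\prod_{m=1}^{d-2}a_{n,i_m}\Bigr).
\]
No step is really hard; the only mild subtlety is keeping the bookkeeping straight between the $(d-2)$ indices that label the slice and the $2$ indices that index entries within a slice, so I would take care to distinguish these roles notationally (as above) rather than risk an off-by-two error in the exponents of the $a_{j,i_m}$.
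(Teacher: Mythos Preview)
Your proposal is correct and follows essentially the same approach as the paper: both apply the chain rule $d$ times to relate the $d$-th partial derivatives of $f$ and $g$, then reorganize the resulting identity into the matrix form $A^T D_{i_1\cdots i_{d-2}} A$. The paper phrases the computation via Hessians of $(d-2)$-th partial derivatives while you work directly with the full tensor entries, but this is only a cosmetic difference; your bookkeeping of the slice indices versus the matrix indices is in fact more explicit than the paper's.
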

\begin{proof}
We use the fact that
\begin{align*}
    S_{i_1...i_{d-2}} &= \frac{1}{d!}H_{\frac{\partial^{d-2} g}{\partial x_1....\partial x_{d-2}}}(x) \\
    T_{i_1...i_{d-2}} &= \frac{1}{d!}H_{\frac{\partial^{d-2} f}{\partial x_1....\partial x_{d-2}}}(x)
\end{align*}
where $H_f(x)$ is the Hessian matrix of $f$ at the point $x$.
\newline
Since $f(x) = g(Ax)$, by differentiating $d$ times, we get that
\begin{align*}
    \frac{\partial^{d} f}{\partial x_{i_1}...\partial x_{i_{d}}}(x) = \sum_{j_1...j_{d} \in [n]}a_{j_1i_1}...a_{j_{d}i_{d}}\frac{\partial^{d} g}{\partial x_{j_1}...\partial x_{j_{d}}}(Ax) 
\end{align*}
Putting these equations in matrix form, and using the fact that 
\begin{align*}
    \frac{\partial^d g}{\partial x_{j_1}...\partial x_{j_d}}(Ax) = \frac{\partial^d g}{\partial x_{j_1}...\partial x_{j_d}}(x)
\end{align*}
we get the following equation
\begin{align*}
    T_{i_1...i_{d-2}} = A^T(\sum_{j_1...j_{d-2} \in [n]}a_{j_1i_1}...a_{j_{d-2}i_{d-2}}S_{j_1...j_{d-2}})A.
\end{align*}
\end{proof}
The next lemma uses Theorem \ref{thm:Pdslicestucture} to reveal some crucial properties about the subspace spanned by the slices of any degree-$d$ form which is equivalent to some $g \in \mathcal{P}_d$. It is an extension of Lemma \ref{lem:subspace} to the higher degree case.
\begin{lemma}\label{lem:Pdstructural}
Let $f(x_1,...,x_n)$ and $g(x_1,...,x_n)$ be two forms of degree $d$ such that $f(x) = g(Ax)$ for some non-singular matrix $A$.
\begin{enumerate}
    \item If $\mathcal{U}$ and $\mathcal{V}$ denote the subspaces of $M_n(\mathbb{K})$ spanned respectively by the slices of $f$ and $g$, we have $\mathcal{U} = A^T\mathcal{V}A$.
    \item $\mathcal{V}$ is non-singular iff $\mathcal{U}$ is non-singular.
    \item In particular, for $g \in \mathcal{P}_d$ the subspace $\mathcal{V}$ is the space of diagonal matrices and $\mathcal{U}$ is a non-singular subspace, i.e., it is not made of singular matrices only.
\end{enumerate}
\end{lemma}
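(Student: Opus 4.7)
The plan is to derive all three items directly from Theorem \ref{thm:Pdslicestucture}, using little more than bookkeeping.

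For item 1, the formula $T_{i_1 \ldots i_{d-2}} = A^T D_{i_1 \ldots i_{d-2}} A$ from Theorem \ref{thm:Pdslicestucture} exhibits each slice of $f$ as $A^T(\cdot)A$ applied to a linear combination of slices of $g$. Hence every slice of $f$ lies in $A^T \mathcal{V} A$, giving $\mathcal{U} \subseteq A^T \mathcal{V} A$. For the reverse inclusion I would apply the same theorem in the other direction: since $A$ is invertible we can write $g(x) = f(A^{-1}x)$, and then Theorem \ref{thm:Pdslicestucture} applied with the roles of $f, g$ swapped and $A$ replaced by $A^{-1}$ yields $\mathcal{V} \subseteq A^{-T} \mathcal{U} A^{-1}$, which is exactly $A^T \mathcal{V} A \subseteq \mathcal{U}$.

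Item 2 is an immediate consequence of item 1, since conjugation $M \mapsto A^T M A$ is a bijection on $M_n(\mathbb{K})$ that preserves invertibility (as $A$ is nonsingular). Thus $\mathcal{V}$ contains an invertible matrix if and only if $\mathcal{U}$ does.

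For item 3, I need to compute the slices of $g = \sum_{i=1}^{n} \alpha_i x_i^d$ explicitly. Because $g$ is separable in the sense that every mixed partial derivative of $g$ vanishes, the partial derivative $\partial^{d-2} g / (\partial x_{i_1} \cdots \partial x_{i_{d-2}})$ is identically zero unless $i_1 = \cdots = i_{d-2} = k$ for some $k$, in which case it equals $\tfrac{d!}{2}\alpha_k x_k^2$. Taking the Hessian and dividing by $d!$ (following the definition of the slices recalled at the start of Section~\ref{sec:dthpowers}) gives $S_{k\ldots k} = \alpha_k E_{kk}$, with all other slices equal to zero. Since $\alpha_k \neq 0$ for every $k$, the $n$ matrices $S_{k\ldots k}$ form a basis of the space of diagonal matrices, so $\mathcal{V}$ is exactly the space of diagonal matrices. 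In particular $I = \sum_{k=1}^n \alpha_k^{-1} S_{k\ldots k} \in \mathcal{V}$, so $\mathcal{V}$ is nonsingular, and item 2 then gives that $\mathcal{U}$ is nonsingular as well.

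I do not expect any serious obstacle: the only mildly delicate step is verifying that the ``diagonal'' slices $S_{k \ldots k}$ already exhaust all of $\mathcal{V}$ (and not merely sit inside it), but this follows from the fact that all non-diagonal slices vanish because $g$ contains no mixed monomials.
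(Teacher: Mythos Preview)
Your proposal is correct and follows essentially the same approach as the paper: item~1 via Theorem~\ref{thm:Pdslicestucture} applied in both directions, item~2 as an immediate consequence of item~1 and the invertibility of $A$, and item~3 by computing the slices of $g=\sum_i \alpha_i x_i^d$ directly and invoking item~2. Your phrasing of item~2 (that $M\mapsto A^T M A$ is an invertibility-preserving bijection) is in fact cleaner than the paper's determinant computation.
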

\begin{proof}
Theorem \ref{thm:Pdslicestucture} shows that $\mathcal{U} \subseteq A^T\mathcal{V}A$. Now since, $g(x) = f(A^{-1}x)$, same argument shows that $\mathcal{V} \subseteq A^{-T}\mathcal{U}A^{-1}$. This gives us that $\mathcal{U} = A^T\mathcal{V}A$.
\par
For the second part of the lemma, first let us assume that $\mathcal{V}$ is non-singular. Let us take any arbitrary $M_{\mathcal{U}} \in \mathcal{U}$. Using the previous part of the lemma, we know that there exists $M_{\mathcal{V}} \in \mathcal{V}$ such that $M_{\mathcal{U}} = A^TM_{\mathcal{V}}A$. Since $\mathcal{V}$ is non-singular, $\text{det}(M_{\mathcal{V}}) \neq 0$. Taking determinant on both sides, we get that $\text{det}(M_{\mathcal{U}}) = \text{det}(A)^2\text{det}(M_{\mathcal{V}}) \neq 0$ (since $A$ is invertible, $\text{det}(A) \neq 0$). 
\newline
For the converse, assume that $\mathcal{U}$ is non-singular. Let us take any arbitrary $M_{\mathcal{V}} \in \mathcal{V}$. Using the previous part of the lemma, we know that there exists $M_{\mathcal{U}} \in \mathcal{U}$ such that $M_{\mathcal{U}} = A^{-T}M_{\mathcal{V}}A^{-1}$. Since $\mathcal{U}$ is non-singular, $\text{det}(M_{\mathcal{U}}) \neq 0$. Taking determinant on both sides, we get that $\text{det}(M_{\mathcal{U}}) = \text{det}(A)^{-2}\text{det}(M_{\mathcal{V}}) \neq 0$ (since $A$ is invertible, $\text{det}(A) \neq 0$). 
\par
For the third part of the lemma, let $\{S_{i_1...i_{d-2}}\}_{i_1...i_{d-2} \in [n]}$ be the slices of $g$. If $g = \sum_{i \in [n]} \alpha_i x_i^d$, such that $\alpha_i \neq 0$ for all $i$, $S_{\bar{i}}$ has $\alpha_i$ in the $(i,i)$-th position and $0$ everywhere. Also, $S_{i_1,...,i_{d-2}} = 0$, when the $i_k$'s are not equal. Hence, $\mathcal{V}$ is the space of all diagonal matrices. Hence $\mathcal{V}$ is a non-singular space. Using the previous part of the lemma, we get that $\mathcal{U}$ is a non-singular space as well.

\end{proof}
The next lemma is effectively a converse of the second part of Lemma~ \ref{lem:Pdstructural}. It shows that if the slices of $f$ are diagonal matrices, then the fact that they effectively originate from a symmetric tensor force them to be extremely special.
\begin{lemma}\label{lem:charconverse}
Let $f \in \mathbb{K}[x_1,...,x_n]_d$ be a degree-$d$ form. If the slices of $f$ are diagonal matrices, then $f = \sum_{i \in [n]} \alpha_i x_i^d$ for some $\alpha_1,...,\alpha_n \in \mathbb{K}$.
\end{lemma}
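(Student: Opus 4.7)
The plan is to translate the hypothesis ``all slices of $f$ are diagonal'' into an explicit vanishing condition on the entries of the symmetric tensor $T$ associated with $f$, and then use the symmetry of $T$ to extend this vanishing from the last two index positions to any pair of positions.

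First I would recall from the beginning of Section~\ref{sec:dthpowers} that the $(i_1,\ldots,i_{d-2})$-th slice has entries $(T_{i_1\ldots i_{d-2}})_{i_{d-1}i_d} = T_{i_1\ldots i_d}$, and that $f(x) = \sum_{i_1,\ldots,i_d} T_{i_1\ldots i_d} x_{i_1}\cdots x_{i_d}$. The assumption that every slice is diagonal is then exactly the statement that $T_{i_1\ldots i_{d-2}jk} = 0$ whenever $j \neq k$, for every choice of $(i_1,\ldots,i_{d-2}) \in [n]^{d-2}$.

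The key step is to upgrade this to: $T_{i_1\ldots i_d} = 0$ whenever the multi-index $(i_1,\ldots,i_d)$ contains two distinct values. Suppose $i_a \neq i_b$ for some $a < b$. Because $T$ is symmetric (it is obtained from $f$ via the symmetric formula $T_{i_1\ldots i_d} = \frac{1}{d!}\frac{\partial^d f}{\partial x_{i_1}\cdots \partial x_{i_d}}$), I may apply a permutation $\sigma$ that moves positions $a,b$ to positions $d-1,d$ without changing the value of the entry. The resulting index is of the form $(j_1,\ldots,j_{d-2},i_a,i_b)$ with $i_a \neq i_b$, and by the diagonal-slice hypothesis this entry is zero. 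Hence $T_{i_1\ldots i_d} = 0$.

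It follows that the only possibly nonzero entries of $T$ are those of the form $T_{ii\ldots i}$. Setting $\alpha_i := T_{ii\ldots i}$, the expansion of $f$ collapses to
\[
  f(x_1,\ldots,x_n) = \sum_{i=1}^n T_{ii\ldots i}\, x_i^d = \sum_{i=1}^n \alpha_i x_i^d,
\]
which is the desired conclusion. I do not anticipate a real obstacle here; the only thing to be careful about is invoking the symmetry of $T$ correctly to reduce an arbitrary off-diagonal position to the last two coordinates, which is purely a relabeling argument.
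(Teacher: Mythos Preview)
Your proposal is correct and follows essentially the same approach as the paper: both use the symmetry of the tensor $T$ to reduce the vanishing of any entry with two unequal indices to the case where the unequal indices sit in the last two positions, where the diagonal-slice hypothesis applies directly. Your write-up is in fact slightly more explicit about the permutation step than the paper's version.
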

\begin{proof}
Let $T_{i_1,...,i_{d-2}}$ be the slices of $f$. Let $I = \{(i_{\sigma(1)},...,i_{\sigma(d)})| \sigma \in S_d\}$. Now since they are slices of a polynomial, we know that
\begin{equation}\label{eq:slicediag}
     (T_{i_1...i_{d-2}})_{i_{d-1},i_d} = (T_{i_{\sigma(1)},...,i_{\sigma(d-2)}})_{i_{\sigma(d-1)},i_{\sigma(d)}}.
\end{equation}
We want to show that $T_{i_1,...,i_d} \neq 0$ only if $i_1 = i_2 = ... = i_{d}$. Using (\ref{eq:slicediag}), it is sufficient to show that $(T_{i_1,...,i_{d-2}})_{i_{d-1},i_d} \neq 0$ only if $i_{d-1} = i_d$. This is true since $T_{i_1,...,i_{d-2}}$ are diagonal matrices.
\newline
This gives us that $f = \sum_{i \in [n]} \alpha_ix_i^d$.
\end{proof}
Now we are finally ready to prove a theorem that characterizes exactly the set of degree-$d$ homogeneous polynomials which are equivalent to some $g \in \mathcal{P}_d$. This is an extension of Theorem \ref{thm:socchar} to the degree-$d$ case, and it already appears as Theorem~\ref{thm:Pdcharintro} in the introduction. We restate it now for the reader's convenience.
\begin{theorem}\label{thm:Pdchar}
A degree $d$ form $f \in \mathbb{K}[x_1,...,x_n]$ is equivalent to some polynomial $P_d \in \mathcal{P}_d$ if and only if its slices $\{T_{i_1,...,i_{d-2}}\}_{i_1,...,i_{d-2} \in [n]}$ span a non-singular matrix space and the slices are simultaneously diagonalisable by congruence, i.e., there exists an invertible matrix $Q \in M_n(\mathbb{K})$ such that the matrices $Q^T T_{i_1...i_{d-2}} Q$ are diagonal for all $i_1,...,i_{d-2} \in [n]$.
\end{theorem}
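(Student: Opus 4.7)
The plan is to mirror the strategy used for Theorem \ref{thm:socchar} in the cubic case, relying on the two tools now available in the degree-$d$ setting, namely Theorem \ref{thm:Pdslicestucture} (transformation law for slices) and Lemmas \ref{lem:Pdstructural} and \ref{lem:charconverse} (which together tell us that having diagonal slices is equivalent to being in $\mathcal{P}_d$ up to a possible vanishing coefficient).

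For the forward direction, assume $f(x)=g(Ax)$ for some $g = \sum_{i=1}^n \alpha_i x_i^d \in \mathcal{P}_d$ and some invertible $A$. By Theorem \ref{thm:Pdslicestucture} each slice of $f$ has the form $T_{i_1\ldots i_{d-2}} = A^T D_{i_1\ldots i_{d-2}} A$ where each $D_{i_1\ldots i_{d-2}}$ is diagonal. Taking $Q=A^{-1}$ we immediately get $Q^T T_{i_1\ldots i_{d-2}} Q = D_{i_1\ldots i_{d-2}}$ diagonal for every multi-index, proving simultaneous diagonalisation by congruence. Non-singularity of the span of the slices of $f$ is precisely the third item of Lemma \ref{lem:Pdstructural}.

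For the reverse direction, suppose the slices $T_{i_1\ldots i_{d-2}}$ of $f$ span a non-singular subspace and are simultaneously diagonalised by congruence by an invertible $Q$. Define $g(x) = f(Qx)$, so that $f(x)=g(Ax)$ with $A=Q^{-1}$. By Theorem \ref{thm:Pdslicestucture} applied to the change of variables $y \mapsto Qy$ (with the roles of $f$ and $g$ swapped), each slice $S_{i_1\ldots i_{d-2}}$ of $g$ satisfies
\begin{equation*}
  S_{i_1\ldots i_{d-2}} = \sum_{j_1,\ldots,j_{d-2}\in [n]} q_{j_1 i_1}\cdots q_{j_{d-2} i_{d-2}}\, Q^T T_{j_1\ldots j_{d-2}} Q,
\end{equation*}
where $q_{ij}$ are the entries of $Q$. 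Each $Q^T T_{j_1\ldots j_{d-2}} Q$ is diagonal by hypothesis, hence $S_{i_1\ldots i_{d-2}}$ is diagonal as a linear combination of diagonal matrices. By Lemma \ref{lem:charconverse} this forces $g = \sum_{i=1}^n \alpha_i x_i^d$ for some scalars $\alpha_1,\ldots,\alpha_n \in \mathbb{K}$.

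It remains to rule out $\alpha_i = 0$ for some $i$, and this is the step where the non-singularity hypothesis is crucial. I expect this to be the main technical check: if some $\alpha_i$ vanished, then every slice $S_{j_1\ldots j_{d-2}}$ of $g$ would have a zero in its $(i,i)$ entry, so the subspace $\mathcal{V}$ spanned by the slices of $g$ would be contained in the space of diagonal matrices with zero $i$-th diagonal entry, and hence be singular. But Lemma \ref{lem:Pdstructural}(2) transports non-singularity between the slice spans of $f$ and $g$, so our hypothesis that the slices of $f$ span a non-singular subspace forces $\mathcal{V}$ to be non-singular. Therefore every $\alpha_i$ is non-zero, $g \in \mathcal{P}_d$, and $f(x) = g(Ax)$ witnesses the desired equivalence.
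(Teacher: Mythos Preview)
Your proof is correct and follows essentially the same route as the paper's: both directions invoke Theorem~\ref{thm:Pdslicestucture} and Lemma~\ref{lem:Pdstructural} exactly as you do, and the reverse direction likewise passes to $g(x)=f(Qx)$ (the paper writes $T=R\Lambda R^T$ and sets $g(x)=f(R^{-T}x)$, which is the same change of variables with $R=Q^{-T}$), applies Lemma~\ref{lem:charconverse} to conclude $g=\sum_i \alpha_i x_i^d$, and rules out $\alpha_i=0$ via Lemma~\ref{lem:Pdstructural}(2) just as you describe.
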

\begin{proof}
Let $\mathcal{U}$ be the space spanned by $\{T_{i_1,...,i_{d-2}}\}_{i_1,...,i_{d-2} \in [n]}$ . If $f$ is equivalent to $P_d$, Theorem \ref{thm:Pdslicestucture} shows that the slices of $f$ are simultaneously diagonalisable by congruence and Lemma \ref{lem:Pdstructural}  shows that $\mathcal{U}$ is non-singular.
\par
Let us show the converse. Since the slices $\{T_{i_1,...,i_{d-2}}\}_{i_1,...,i_{d-2} \in [n]}$  are simultaneously diagonalisable, there are diagonal matrices $\Lambda_{i_1...i_{d-2}}$ and a non-singular matrix $R \in M_n(\mathbb{K})$ such that
\begin{align*}
    T_{i_1...i_{d-2}} = R\Lambda_{i_1...i_{d-2}}R^T \text{ for all } i_1,...,i_{d-2} \in [n].
\end{align*}
So now we consider $g(x) = f(R^{-T}x)$. Let $\{S_{i_1,...,i_{d-2}}\}_{i_1,...,i_{d-2} \in [n]}$ be the slices of $g$. Using Theorem (\ref{thm:Pdslicestucture}), we get that 
\begin{align*}
    S_{i_1...i_{d-2}} &= (R^{-T})(\sum_{j_1...j_{d-2} \in [n]}r_{j_1i_1}...r_{j_{d-2}i_{d-2}}R\Lambda_{j_1...j_{d-2}}R^T)R^{-T} \\
    &= \sum_{j_1...j_{d-2} \in [n]}r_{j_1i_1}...r_{j_{d-2}i_{d-2}}\Lambda_{j_1...j_{d-2}}.
\end{align*}
This implies that $S_{j_1...j_{d-2}}$ are also diagonal matrices. 
By Lemma \ref{lem:charconverse}, $g = \sum_{i \in [n]} \alpha_i x_i^{d}$. It therefore remains to be shown that $\alpha_i \neq 0$, for all $i \in [n]$.
Let $\mathcal{V}$ be the subspace spanned by the slices of $g$ and the slices of $f$ span a non-singular matrix space $\mathcal{U}$.
Since,  $\mathcal{U}$ is a non-singular subspace of matrices, using part (2) of Lemma \ref{lem:Pdstructural}, we get that $\mathcal{V}$ is a non-singular subspace of matrices.
%\newline

But if some $\alpha_i$ vanishes, for all $A \in \mathcal{V}$, $A_{\bar{i}} = 0$. Hence $\mathcal{V}$ is a singular subspace, which is a contradiction.
%\newline
This gives us that $g = \sum_{i=1}^n \alpha_i x_i^d$ where $\alpha_i \neq 0$ for all $i$. Hence, $g \in \mathcal{P}_d$ and $f$ is equivalent to $g$.
\end{proof}
\begin{theorem}\label{thm:sodchar}
Let $f \in \mathbb{K}[x_1,...,x_n]$ be a degree-$d$ form. $f$ is equivalent to some polynomial $P_d \in \mathcal{P}_d$ iff the subspace $\mathcal{V}$ spanned by its slices $\{T_{i_1,...,i_{d-2}}\}_{i_1,...,i_{d-2} \in [n]}$ is a non-singular subspace and $\mathcal{V}$ satisfies the Commutativity Property and the Diagonalisability Property.
\end{theorem}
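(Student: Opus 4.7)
The plan is to derive this theorem by chaining together two results already in hand: Theorem~\ref{thm:Pdchar}, which characterises equivalence to $\mathcal{P}_d$ in terms of non-singularity of $\mathcal{V}$ together with simultaneous diagonalisability by congruence (SDC) of the slices, and Theorem~\ref{thm:main}, which characterises SDC of a non-singular family of matrices in terms of the Commutativity Property and the Diagonalisability Property. Neither of these uses anything specific to order~$3$ tensors, so the combination is immediate; the present theorem is really just the ``property-level'' reformulation of Theorem~\ref{thm:Pdchar}.

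For the forward direction, I would start from $f \sim P_d$ for some $P_d \in \mathcal{P}_d$. Theorem~\ref{thm:Pdchar} then gives two conclusions simultaneously: the subspace $\mathcal{V}$ spanned by the slices $\{T_{i_1\ldots i_{d-2}}\}$ is non-singular, and these slices are SDC, i.e.\ there is an invertible $Q$ and diagonal matrices $\Lambda_{i_1 \ldots i_{d-2}}$ with $T_{i_1\ldots i_{d-2}} = Q^T \Lambda_{i_1\ldots i_{d-2}} Q$. Enumerating the slices as a finite family $A_1,\ldots,A_k$ spanning the non-singular subspace $\mathcal{V}$, I would then invoke the ``only if'' direction of Theorem~\ref{thm:main} to conclude that $\mathcal{V}$ satisfies both the Commutativity Property and the Diagonalisability Property.

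For the converse, I would start from the assumption that $\mathcal{V}$ is non-singular and satisfies both properties. Applying the ``if'' direction of Theorem~\ref{thm:main} to a family of matrices spanning $\mathcal{V}$ (again, the slices themselves) yields an invertible $Q \in M_n(\mathbb{K})$ and diagonal matrices $\Lambda_{i_1\ldots i_{d-2}}$ such that $T_{i_1\ldots i_{d-2}} = Q \Lambda_{i_1\ldots i_{d-2}} Q^T$ for every multi-index. In particular the slices are SDC; combined with the non-singularity of $\mathcal{V}$, Theorem~\ref{thm:Pdchar} now gives $f \sim P_d$ for some $P_d \in \mathcal{P}_d$.

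I do not expect any real obstacle here: both ingredients are already proved, and Theorem~\ref{thm:main} is formulated for an arbitrary finite family $A_1,\ldots,A_k \in M_n(\mathbb{K})$ spanning a non-singular subspace, so nothing in its statement prevents us from applying it to the $O(n^{d-2})$ slices of a degree-$d$ form. The only mild bookkeeping point is to agree on an enumeration of the slices as $A_1,\ldots,A_k$, but this is cosmetic since what Theorem~\ref{thm:main} really characterises is a property of the subspace they span, not of the particular indexing. The whole proof should therefore take just a few lines.
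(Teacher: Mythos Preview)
Your proposal is correct and is exactly the approach the paper takes: its proof is the one-liner ``This follows from Theorem~\ref{thm:Pdchar} and Theorem~\ref{thm:main} for $k = n^{d-2}$,'' which is precisely the chaining you describe.
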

\begin{proof}
This follows from Theorem \ref{thm:Pdchar} and Theorem \ref{thm:main} for $k = n^{d-2}$ to get the result. 
\end{proof}

We now state a weaker definition of the singularity of a subspace spanned by a set of matrices and using that we prove a stronger version of Theorem~\ref{thm:sodchar}. More formally we show that the characterization is valid even when the "non-singular subspace" criterion imposed on the subspace $\mathcal{V}$ spanned by the slices of the polynomial is replaced by the "not a weakly singular subspace" criterion.
\begin{definition} (Weak singularity)
\newline
Let $\mathcal{V}$ be the space spanned by matrices $\{S_{i_1,...,i_{d-2}}\}_{i_1,...,i_{d-2} \in [n]}$ . $\mathcal{V}$ is weakly singular if for all $\alpha = (\alpha_1,...,\alpha_n) $, 
\begin{align*}
    \text{det}(\sum_{i_1,...,i_{d-2}\in [n]} (\prod_{k \in [d-2]}\alpha_{i_k}) S_{i_1...i_{d-2}}) = 0.
\end{align*}
\end{definition}
Notice here that the notion of weak-singularity is entirely dependent on the generating set of matrices. So it is more of a property of the generating set. But by abuse of language, we will call the span of the matrices to be weakly singular. To put it in contrast, refer to Section \ref{sec:cubes} where the notion of singularity is a property of the subspace spanned by the matrices (irrespective of the generating set).
\begin{theorem}\label{sodcharspl}
Let $f \in \mathbb{K}[x_1,...,x_n]$ be a degree-$d$ form. $f$ is equivalent to some polynomial  $P_d \in \mathcal{P}_d$ iff the subspace $\mathcal{V}$ spanned by its slices $\{T_{i_1,...,i_{d-2}}\}_{i_1,...,i_{d-2} \in [n]}$ is not a weakly singular subspace, satisfies the Commutativity Property and the Diagonalisability Property.
\end{theorem}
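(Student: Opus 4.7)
My plan is to leverage Theorem~\ref{thm:sodchar} for the easy direction and to perform a direct computation using Theorem~\ref{thm:Pdslicestucture} for the harder direction. The key observation is that weak non-singularity is a strictly stronger property than non-singularity of the span (since it demands a specific invertible element of the rank-one tensor form $M_\alpha=\sum_{i_1,\dots,i_{d-2}}(\prod_k \alpha_{i_k})\,T_{i_1\dots i_{d-2}}$ rather than an arbitrary linear combination); so replacing ``non-singular'' with ``not weakly singular'' is really only doing work in the forward direction.

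For the backward direction I would argue as follows: if $\mathcal{V}$ is not weakly singular, then by definition some $M_\alpha$ is invertible, and since $M_\alpha\in\mathcal{V}$ this immediately gives that $\mathcal{V}$ is a non-singular subspace in the sense of Section~\ref{sec:cubes}. Combined with the Commutativity and Diagonalisability Properties, Theorem~\ref{thm:sodchar} then yields $f\sim P_d$ for some $P_d\in\mathcal{P}_d$.

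For the forward direction, assume $f(x)=P_d(Ax)$ for some $P_d=\sum_{j=1}^n\lambda_j x_j^d\in\mathcal{P}_d$ with $A$ invertible and all $\lambda_j\neq 0$. Theorem~\ref{thm:sodchar} already gives the Commutativity and Diagonalisability Properties, so only weak non-singularity needs to be proved. By Theorem~\ref{thm:Pdslicestucture}, the slices factor as $T_{i_1\dots i_{d-2}}=A^T D_{i_1\dots i_{d-2}} A$ with $D_{i_1\dots i_{d-2}}$ diagonal and $(j,j)$-entry $\lambda_j\prod_{k=1}^{d-2} a_{j,i_k}$. The plan is then to compute, for an arbitrary $\alpha\in\mathbb{K}^n$,
\[
M_\alpha=A^T\Big(\sum_{i_1,\dots,i_{d-2}}\big(\prod_k \alpha_{i_k}\big)D_{i_1\dots i_{d-2}}\Big)A,
\]
and recognize the inner diagonal matrix as $\mathrm{diag}\!\big(\lambda_j\,(A\alpha)_j^{d-2}\big)_j$ via the factorisation $\sum_{i_1,\dots,i_{d-2}}\prod_k(\alpha_{i_k}a_{j,i_k})=\prod_k\big(\sum_i\alpha_i a_{j,i}\big)=(A\alpha)_j^{d-2}$.

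To finish, I choose $\alpha=A^{-1}\mathbf{1}$, which exists since $A$ is invertible. Then every $(A\alpha)_j$ equals $1$, so $M_\alpha=A^T\mathrm{diag}(\lambda_1,\dots,\lambda_n)A$, with $\det M_\alpha=\det(A)^2\prod_j\lambda_j\neq 0$. This shows that $\mathcal{V}$ is not weakly singular, completing the proof. The only subtle point is keeping the indices straight in the factorisation $\sum\prod\alpha_{i_k}a_{j,i_k}=(A\alpha)_j^{d-2}$; everything else is a routine consequence of Theorems~\ref{thm:Pdslicestucture} and~\ref{thm:sodchar}.
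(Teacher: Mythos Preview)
Your proof is correct. The backward direction is identical to the paper's, and for the forward direction both you and the paper invoke Theorem~\ref{thm:sodchar} to handle the Commutativity and Diagonalisability Properties, leaving only weak non-singularity to establish. Here the arguments diverge slightly: the paper computes the same diagonal form
\[
M_\alpha \;=\; A^T\,\mathrm{diag}\big(\lambda_1(A\alpha)_1^{d-2},\dots,\lambda_n(A\alpha)_n^{d-2}\big)\,A
\]
(writing $T(\bar\beta)$ in their notation) and then argues that $\det(M_\alpha)=\det(A)^2\prod_j \lambda_j (A\alpha)_j^{d-2}$ is a nonzero \emph{polynomial} in $\alpha$, by observing that each factor $(A\alpha)_j$ has a nonzero coefficient since $A$ has no zero row. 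You instead produce an explicit witness $\alpha=A^{-1}\mathbf{1}$, for which $A\alpha=\mathbf{1}$ and the determinant equals $\det(A)^2\prod_j\lambda_j\neq 0$. Your route is shorter and more constructive. The paper's route, while slightly longer, has the side benefit of establishing that $\det(M_\alpha)$ is not identically zero as a polynomial in $\alpha$, which is exactly what is needed later to apply Schwartz--Zippel in Lemma~\ref{lem:pdt1invertible}; your explicit witness of course also certifies this non-vanishing, so nothing is lost.
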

\begin{proof}
First we show that if $f = P_d(Ax)$ such that $P_d \in \mathcal{P}_d$ i.e. $P_d(x) = \sum_{i=1}^n \alpha_ix_i^d$ where  $\alpha_i \neq 0$ for all $i \in [n]$ and $A$ is invertible, then $\mathcal{V}$ is not a weakly singular subspace, satisfies the commutativity property and the diagonalisability property.
\newline
Let $\{S_{i_1...i_{d-2}}\}_{i_1,...,i_{d-2} \in [n]}$ be the slices of $P_d$. Then $S_{\bar{i}} = \alpha_i \text{diag}(e_i)$ where $e_i$ is the $i$-th standard basis vector, and all other slices are $0$. From Theorem~\ref{thm:Pdslicestucture}
\begin{align*}
    T_{i_1...i_{d-2}} &= A^TD_{i_1...i_{d-2}}A \\
    &= A^T(\sum_{k \in [n]}a_{ki_1}...a_{ki_{d-2}}S_{\bar{k}})A.
\end{align*}
    Now we define 
    \begin{align*}
        T(\bar{\beta}) &= \sum_{i_1,...,i_{d-2} \in [n]} (\prod_{k \in [d-2]}\beta_{i_k}) T_{i_1...i_{d-2}} \\
        &= \sum_{i_1,...,i_{d-2} \in [n]} (\prod_{k \in [d-2]}\beta_{i_k})A^T(\text{diag}(\alpha_1(\prod_{m \in [d-2]}a_{1i_m}),...,\alpha_n(\prod_{m \in [d-2]}a_{ni_m})))A \\
        &= A^T\text{diag}(\alpha_1(\sum_{i_1,...,i_{d-2} \in [n]}(\prod_{k \in [d-2]}\beta_{i_k}a_{1i_k})),...,\alpha_n(\sum_{i_1,...,i_{d-2} \in [n]}(\prod_{k \in [d-2]}\beta_{i_k}a_{ni_k})))A.
    \end{align*}
Taking determinant on both sides,
\begin{align*}
    \text{det}(T)(\bar{\beta}) = \text{det}(A)^2\prod_{m=1}^nT_m(\bar{\beta})
\end{align*}
where $T_m(\bar{\beta}) = \alpha_m(\sum_{i_1,...,i_{d-2} \in [n]}(\prod_{k \in [d-2]}\beta_{i_k}a_{mi_k}))$.
\newline
Since, $A$ is invertible, none of its rows are all $0$. Hence for all $m_0 \in [n]$, there exists $j_0 \in [n]$, such that $a_{m_0j_0} \neq 0$. Then
\begin{align*}
    \text{coeff}_{\beta_{j_0}^{d-2}}(T_{m_0}) = a_{m_0j_0}^{d-2} \neq 0.
\end{align*}
Hence $T_{m_0} \not\equiv 0$ for all $m_0 \in [n]$ which implies that $\text{det}(T) \not\equiv 0$. Therefore, there exists $ \bar{\beta^0}$ such that $\text{det}(T)(\bar{\beta^0}) \neq 0$.
\newline
This proves that 
\begin{align*}
    \text{det}(\sum_{i_1,...,i_{d-2} \in [n]} (\prod_{k \in [d-2]}\beta_{i_k}) T_{i_1...i_{d-2}}) \not\equiv 0.
\end{align*}
Hence, $\mathcal{V} = \text{span}\{T_{i_1...i_{d-2}}\}_{i_1,...,i_{d-2} \in [n]}$ is not weakly singular. Theorem \ref{thm:sodchar} gives us that the subspace spanned by the slices $\mathcal{V}$ satisfies the commutativity property and the diagonalisability property.
\par
For the converse, if $\mathcal{V}$ is not a weakly singular subspace, then it is a non-singular subspace as well. And it satisfies the commutativity property and the diagonalisability property. By Theorem \ref{thm:sodchar}, we get that $f$ is equivalent to some polynomial in $\mathcal{P}_d$.
\end{proof}

\subsection{Analysis for positive inputs}\label{sec:positived}
In this section we analyze the behavior of Algorithm \ref{alg:pd} on inputs that are
equivalent to some polynomial in $\mathcal{P}_d$. We recall here again that by $T_{\bar{1}}$, we denote the slice $T_{11...1}$.
\begin{lemma}\label{lem:pdt1invertible}
Let $f \in \mathbb{K}[x_1,...,x_n]_d$ with slices $\{S_{i_1,...,i_{d-2}}\}_{i_1,...,i_{d-2} \in [n]}$, such that the subspace $\mathcal{V}$ spanned by the slices is not weakly singular.
Let $h(x) = f(Rx)$ where the entries $r_{i,j}$ are chosen uniformly and independently at random from a finite set $S \subseteq \mathbb{K}$. Let $\{T_{i_1,...,i_{d-2}}\}_{i_1,...,i_{d-2} \in [n]}$ be the slices of $h$. Then
\begin{align*}
    \text{Pr}_{R \in S}[T_{\bar{1}} \text{ is invertible}] \geq 1- \frac{n(d-1)}{|S|}.
\end{align*}
\end{lemma}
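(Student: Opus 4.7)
The plan is to mimic the proof of Lemma~\ref{lem:t1invertible} from the cubic case, with the complication that the key determinantal polynomial will now have degree growing with $d$, and with the hypothesis of weak non-singularity (rather than mere non-singularity) being exactly what makes the argument go through.

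First, by Theorem~\ref{thm:Pdslicestucture} applied to $h(x)=f(Rx)$, I would write $T_{\bar 1}=R^T D_{\bar 1} R$, where
$$D_{\bar 1}=\sum_{j_1,\ldots,j_{d-2}\in[n]} r_{j_1,1}\cdots r_{j_{d-2},1}\, S_{j_1\ldots j_{d-2}}.$$
Hence $T_{\bar 1}$ is invertible if and only if both $R$ and $D_{\bar 1}$ are invertible, so it suffices to bound the two failure events separately and finish with a union bound. The determinant of $R$ is a polynomial of degree $n$ in the $r_{i,j}$ that is manifestly nonzero, so Schwartz--Zippel (Lemma~\ref{lem:SZ}) gives $\Pr[\det R=0]\le n/|S|$.

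The substantive step is to handle $\det(D_{\bar 1})$. Viewed as a polynomial in the $n$ variables $r_{1,1},\ldots,r_{n,1}$, each entry of $D_{\bar 1}$ is homogeneous of degree $d-2$, so $\det(D_{\bar 1})$ has total degree at most $n(d-2)$. The key observation, and the place where the hypothesis is actually used, is that $\det(D_{\bar 1})$ is not identically zero: since $\mathcal{V}$ is not weakly singular, there exists $\alpha=(\alpha_1,\ldots,\alpha_n)$ with
$$\det\Bigl(\sum_{j_1,\ldots,j_{d-2}\in[n]} \alpha_{j_1}\cdots\alpha_{j_{d-2}}\, S_{j_1\ldots j_{d-2}}\Bigr)\neq 0,$$
and this is exactly the value of $\det(D_{\bar 1})$ under the specialization $r_{j,1}=\alpha_j$. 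Applying Schwartz--Zippel again gives $\Pr[\det(D_{\bar 1})=0]\le n(d-2)/|S|$.

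Combining the two bounds via a union bound yields failure probability at most $n/|S|+n(d-2)/|S|=n(d-1)/|S|$, which is exactly the claim. The main (minor) obstacle is cleanly tracking that $\det(D_{\bar 1})$ depends only on $r_{1,1},\ldots,r_{n,1}$ and that its total degree in those variables is $n(d-2)$ rather than something larger; once this bookkeeping is done the proof is a direct generalization of the $d=3$ case, with the notion of weak singularity introduced in Section~\ref{sec:equivd} designed precisely so that the nonvanishing witness $\alpha$ produces a nonzero specialization of $\det(D_{\bar 1})$.
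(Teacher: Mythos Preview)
Your proposal is correct and follows essentially the same approach as the paper's proof: write $T_{\bar 1}=R^T D_{\bar 1}R$ via Theorem~\ref{thm:Pdslicestucture}, bound $\Pr[\det R=0]\le n/|S|$ and $\Pr[\det D_{\bar 1}=0]\le n(d-2)/|S|$ by Schwartz--Zippel (using weak non-singularity for the nonvanishing witness), and finish by the union bound. The structure, the degree count $n(d-2)$, and the use of the hypothesis all match the paper exactly.
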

\begin{proof}
 We can obtain the slices $T_{i_1...i_{d-2}}$ of $h$ from the slices $S_{i_1...i_{d-2}}$ of $f$ using Theorem \ref{thm:Pdslicestucture}. Namely, we have 
 \begin{align*}
     T_{i_1...i_{d-2}} = R^TD_{i_1...i_{d-2}}R
 \end{align*}
 where 
 \begin{align*}
     D_{i_1...i_{d-2}} = \sum_{j_1...j_{d-2} \in [n]}( \prod_{m \in [d-2]}r_{j_m,i_m}) S_{j_1...j_{d-2}}.
 \end{align*}
 Therefore $T_{\bar{1}}$ is invertible iff $R$ and $D_{\bar{1}}$ are invertible.
 Applying Schwartz-Zippel lemma to $\text{det}(R)$ shows that $R$ is singular with probability at most $\frac{n}{|S|}$. We will show that $D_{\bar{1}}$ is singular with probability at most $\frac{n(d-2)}{|S|}$. The lemma then follows from the union bound. Matrix $D_{\bar{1}}$ is not invertible iff $\text{det}(D_{\bar{1}}) = 0$. Since, $D_{\bar{1}} = \sum_{j_1...j_{d-2} \in [n]}( \prod_{m \in [d-2]}r_{j_m,1}) S_{j_1...j_{d-2}}$, $\text{det}(D_{\bar{1}}) \in \mathbb{K}[r_{1,1},...,r_{n,1}]$ and $\text{deg}(\text{det}(D_{\bar{1}})) \leq n(d-2)$. Since, $\mathcal{V}$ is not weakly singular, there exists some choice of $\alpha = (\alpha_{1},...,\alpha_{n})$, such that
 \begin{align*}
      S = \sum_{i_1,...,i_{d-2} \in [n]}(\prod_{m \in [d-2]}\alpha_{i_m}) S_{i_1...i_{d-2}}
 \end{align*}
 is invertible. Hence, $\text{det}(S) \neq 0$. This gives us that $\text{det}(D_{\bar{1}})(\alpha) \neq 0$. which gives us that $\text{det}(D_{\bar{1}}) \not\equiv 0 $. From the Schwartz-Zippel lemma, it follows  that
 \begin{align*}
     \text{Pr}_{R \in S}[\text{det}(D_{\bar{1}}) = 0] \leq \frac{n(d-2)}{|S|}.
 \end{align*}
\end{proof}
Recall here from Section \ref{sec:notations}, we define by $\mathcal{P}_d$, the set of all polynomials of the form $\sum_{i=1}^n \alpha_ix_i^d$ such that $0 \neq \alpha_i \in \mathbb{K}$ for all $i \in [n]$.
\begin{lemma}\label{lem:pdopp}
Given $A \in M_n(\mathbb{K})$, let $\{T_{i_1,...,i_{d-2}}\}_{i_1,...,i_{d-2} \in [n]}$ be the slices of $h(x) = P_d(Ax)$ where $P_d\in \mathcal{P}_d$. If $T_{\bar{1}}$ is invertible, define $T'_{\bar{1}} = (T_{\bar{1}})^{-1}$. Then
 $T'_{\bar{1}}T_{\bar{2}}$ commutes with $T'_{\bar{1}}T_{\bar{3}}$ and $T'_{\bar{1}}T_{\bar{2}}$ is diagonalisable.
\end{lemma}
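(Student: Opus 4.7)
The plan is to mimic the proof of Lemma \ref{lem:opp} from the degree-$3$ case, using the higher-order slice formula of Theorem \ref{thm:Pdslicestucture} in place of Theorem \ref{thm:P3structural}. Write $P_d(x) = \sum_{i=1}^n \alpha_i x_i^d$ with all $\alpha_i \neq 0$. By Theorem \ref{thm:Pdslicestucture} applied to $h(x) = P_d(Ax)$, each slice factors as $T_{i_1 \cdots i_{d-2}} = A^T D_{i_1 \cdots i_{d-2}} A$ with $D_{i_1 \cdots i_{d-2}}$ diagonal. Specializing to repeated indices, I get $T_{\bar{k}} = A^T D_{\bar{k}} A$ where $D_{\bar{k}} = \mathrm{diag}(\alpha_1 a_{1,k}^{d-2}, \ldots, \alpha_n a_{n,k}^{d-2})$, which is exactly the pattern exploited in Lemma \ref{lem:opp}.

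Next I would observe that invertibility of $T_{\bar{1}} = A^T D_{\bar{1}} A$ forces both $A$ and $D_{\bar{1}}$ to be invertible, so $(D_{\bar{1}})^{-1}$ exists and is again diagonal. A direct computation then gives
$$T'_{\bar{1}} T_{\bar{k}} = A^{-1} (D_{\bar{1}})^{-1} A^{-T} \cdot A^T D_{\bar{k}} A = A^{-1}\bigl((D_{\bar{1}})^{-1} D_{\bar{k}}\bigr) A,$$
so $T'_{\bar{1}} T_{\bar{k}}$ is similar to the diagonal matrix $(D_{\bar{1}})^{-1} D_{\bar{k}}$. In particular $T'_{\bar{1}} T_{\bar{2}}$ is diagonalisable (and for completeness $T'_{\bar{1}} T_{\bar{3}}$ is too, though the statement only needs one).

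Finally, the commutativity of $T'_{\bar{1}} T_{\bar{2}}$ and $T'_{\bar{1}} T_{\bar{3}}$ reduces to the commutativity of the two diagonal matrices $(D_{\bar{1}})^{-1} D_{\bar{2}}$ and $(D_{\bar{1}})^{-1} D_{\bar{3}}$ (both diagonal matrices, hence trivially commuting), after conjugation by $A^{-1}$. There is no genuine obstacle here: the entire proof is a cosmetic rewrite of Lemma \ref{lem:opp} in which the matrices $D_k$ of the cubic case are replaced by the diagonal matrices $D_{\bar{k}}$ extracted from Theorem \ref{thm:Pdslicestucture}. The only thing to check carefully is that one really is allowed to use the repeated-index specialization of Theorem \ref{thm:Pdslicestucture}, i.e.\ that $D_{\bar{k}} = \mathrm{diag}(\alpha_i a_{i,k}^{d-2})_i$, which is immediate from the formula stated there.
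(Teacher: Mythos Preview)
Your proposal is correct and follows essentially the same approach as the paper: apply Theorem~\ref{thm:Pdslicestucture} to write $T_{\bar{k}} = A^T D_{\bar{k}} A$ with $D_{\bar{k}}$ diagonal, deduce that $A$ and $D_{\bar{1}}$ are invertible, compute $T'_{\bar{1}}T_{\bar{k}} = A^{-1}(D_{\bar{1}})^{-1}D_{\bar{k}}A$, and conclude commutativity and diagonalisability from the fact that the $D_{\bar{k}}$ are diagonal. The paper's proof is line-for-line the same argument, differing only in that it writes out the product $T'_{\bar{1}}T_{\bar{2}}T'_{\bar{1}}T_{\bar{3}}$ explicitly rather than invoking ``conjugation preserves commutativity.''
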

\begin{proof}
Let $P_d = \sum_{i=1}^n \alpha_i x_i^d$ where $\alpha_i \neq 0$. By Theorem \ref{thm:Pdslicestucture}, 
$$T_{i_1...i_{d-2}} = A^T(\text{diag}(\alpha_1(\prod_{m=1}^{d-2}a_{1,i_m}),...,\alpha_n(\prod_{m=1}^{d-2}a_{n,i_m})))A = A^TD_{i_1...i_{d-2}}A.$$
%\begin{align*}
%    T_{\bar{1}} &= R^T(\text{diag}(r_1,...,r_{n1}))R = R^TD_{\bar{1}}R \\
 %   T_{\bar{2}} &= R^T(\text{diag}(r_{12},...,r_{n2}))R = R^TD_2R \\
%    T_{\bar{3}} &= R^T(\text{diag}(r_{13},...,r_{n3}))R = R^TD_3R
%\end{align*}
If $T_{\bar{1}}$ is invertible, the same is true of $A$ and $D_{\bar{1}}$.
The inverse $(D_{\bar{1}})^{-1}$ is diagonal like $D_{\bar{1}}$, hence $(D_{\bar{1}})^{-1}D_{\bar{2}}$ and $(D_{\bar{1}})^{-1}D_{\bar{3}}$ are both diagonal as well and must therefore commute. Now,
\begin{align*}
    T'_{\bar{1}}T_{\bar{2}}T'_{\bar{1}}T_{\bar{3}} &= A^{-1}((D_{\bar{1}})^{-1}D_{\bar{2}}(D_{\bar{1}})^{-1}D_{\bar{3}})A \\
    &= A^{-1}((D_{\bar{1}})^{-1}D_{\bar{3}}(D_{\bar{1}})^{-1}D_{\bar{2}})A \\
    &= T'_{\bar{1}}T_{\bar{3}}T'_{\bar{1}}T_{\bar{2}}.
\end{align*}
\par
Finally,  $T'_{\bar{1}}T_{\bar{2}} = A^{-1}((D_{\bar{1}})^{-1}D_{\bar{2}})A$ so this matrix is diagonalisable.
\end{proof}
We are now in a position to prove the easier half of Theorem \ref{thm:pdproof}.
\begin{theorem}
If an input $f \in \mathbb{K}[x_1,...,x_n]_d$ is equivalent to some polynomial $P_d \in \mathcal{P}_d$ then $f$ will be accepted by Algorithm \ref{alg:pd} with high probability over the choice of the random matrix $R$. More precisely, if the entries $r_{i,j}$ are chosen uniformly and independently at random from a finite set $S \subseteq \mathbb{K}$, then the input will be accepted with probability $\geq (1-\frac{n(d-1)}{|S|})$.
\end{theorem}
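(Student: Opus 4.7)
The plan is to mimic the proof of Proposition~\ref{prop:equiv3} in the degree-$3$ case, using the tools we have already developed for the degree-$d$ setting. Suppose that $f(x) = P_d(Bx)$ for some invertible $B \in M_n(\mathbb{K})$ and some $P_d \in \mathcal{P}_d$. The first step is to establish that the subspace $\mathcal{V}$ spanned by the slices of $f$ is not weakly singular: this follows from Theorem~\ref{sodcharspl}, which asserts exactly this property (among others) whenever $f$ is equivalent to some element of $\mathcal{P}_d$. Alternatively, one can observe this directly from the proof of Theorem~\ref{sodcharspl}, which computes the determinant of a generic linear combination of the slices and shows it is a nonzero polynomial in the combination coefficients.

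Next, I would apply Lemma~\ref{lem:pdt1invertible} to the polynomial $f$: since $\mathcal{V}$ is not weakly singular and $h(x) = f(Rx)$ is constructed by the algorithm with the $r_{ij}$ drawn uniformly and independently from $S$, the lemma guarantees that the first slice $T_{\bar{1}}$ of $h$ is invertible with probability at least $1 - n(d-1)/|S|$. This handles the only random event that could cause rejection in the ``positive input'' analysis.

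Conditioning on this event, the remaining task is to verify that the commutativity and diagonalisability checks in Algorithm~\ref{alg:pd} succeed deterministically. For this I would use the key observation that $h(x) = f(Rx) = P_d(BRx)$, so $h$ itself is of the form $P_d(Ax)$ with $A = BR$. We are therefore exactly in the hypothesis of Lemma~\ref{lem:pdopp} applied to $h$, which yields: whenever $T_{\bar{1}}$ is invertible, the matrices $T_{\bar{1}}'T_{\bar{2}}$ and $T_{\bar{1}}'T_{\bar{3}}$ commute and $T_{\bar{1}}'T_{\bar{2}}$ is diagonalisable. Thus on the good event of Lemma~\ref{lem:pdt1invertible}, the algorithm accepts $f$, and the stated probability bound follows immediately.

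There is essentially no obstacle here: the real work has already been done in Theorem~\ref{sodcharspl} (structural characterization), Lemma~\ref{lem:pdt1invertible} (probability of $T_{\bar{1}}$ being invertible), and Lemma~\ref{lem:pdopp} (deterministic correctness when $T_{\bar{1}}$ is invertible), so the proof amounts to assembling these three facts and noting that $h(x) = P_d(BRx)$ falls under Lemma~\ref{lem:pdopp}.
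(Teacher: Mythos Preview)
Your proposal is correct and follows essentially the same approach as the paper's own proof: both invoke Theorem~\ref{sodcharspl} to conclude that the slice space is not weakly singular, then apply Lemma~\ref{lem:pdt1invertible} for the probability bound on $T_{\bar 1}$ being invertible, and finally use Lemma~\ref{lem:pdopp} applied to $h(x)=P_d(BRx)$ to show the algorithm accepts whenever $T_{\bar 1}$ is invertible.
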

\begin{proof}
We start by assuming that $f  = P_d(Bx)$ for some $P_d \in \mathcal{P}_d$ where $B$ is an invertible matrix. 
%By Lemma \ref{lem:Pdstructural}, 
%We can obtain the slices $T_{i_1...i_{d-2}}$ of $h$ from the slices $S_{i,j}$ of $f$ using Theorem \ref{thm:Pdslicestucture}. Namely, we have $T_{i_1...i_{d-2}} = R^TD_{i_1...i_{d-2}}R$ where $D_{i_1...i_{d-2}} = \sum_{j_1...j_{d-2} \in [n]} r_{j_1,i_1}...r_{j_{d-2},i_{d-2}}S_{j_1...j_{d-2}}$.
By Theorem \ref{sodcharspl}, we know that the subspace spanned by the slices of $f$ is not weakly singular. We can therefore apply Lemma~\ref{lem:pdt1invertible}, the first slice $T_{\bar{1}}$ of $h(x) = f(Rx)$  is invertible with probability at least $1- \frac{n(d-1)}{|S|}$. Moreover if $T_{\bar{1}}$ is invertible, Lemma \ref{lem:pdopp} shows that, $f$ will always be accepted. (We can apply this lemma to $h$ since $h = P_d(RBx)$).
\end{proof}

\subsection{Failure of commutativity} \label{sec:failcomd}
In this section we first give a suitable generalization of Lemma \ref{lem:comm} required for the degree-$d$ case. This is essential for the correctness proof of Algorithm~\ref{alg:pd}. 
\begin{definition}\label{def:symmfamily}
Let $\{S_{i_1,...,i_d}\}_{i_1,...,i_d \in [n]}$ be a family of matrices. We say that the matrices form a symmetric family of symmetric matrices if each matrix in the family is symmetric and for all permutations $\sigma \in S_d$, $S_{i_1,...,i_d} = S_{i_{\sigma(1)}...i_{\sigma(d)}}$.
\end{definition}

\begin{lemma}[General commutativity lemma]\label{lem:gencomm}
Let $\{S_{i_1,...,i_d}\}_{i_1,...,i_d \in [n]}$ be a symmetric family of symmetric matrices in $M_n(\mathbb{K})$ such that they do not form a commuting family. Pick $\alpha = \{\alpha_1,...,\alpha_n\}$ and $\alpha' = \{\alpha'_1,...,\alpha'_n\}$  uniformly and independently at random from a finite set $S \subset \mathbb{K}$. We define 
\begin{align*}
    M_{\alpha} &= \sum_{i_1,...,i_d \in [n]} (\prod_{m \in [d]}\alpha_{i_m}) S_{i_1,...,i_d} \\
    M_{\alpha'} &= \sum_{j_1,...,j_d \in [n]} (\prod_{m \in [d]}\alpha'_{j_m}) S_{j_1,...,j_d}.
\end{align*}
Then,
\begin{align*}
    \text{Pr}_{\alpha,\alpha' \in S}\Big[ M_{\alpha}, M_{\alpha'} \text{ don't commute}\Big] \geq \Big(1-\frac{2d}{|S|}\Big).
\end{align*}
\end{lemma}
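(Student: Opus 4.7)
The plan is to directly generalize the proof of Lemma \ref{lem:comm}. Introduce the matrix-valued polynomial
$$P_{\mathrm{comm}}(\alpha, \alpha') := M_\alpha M_{\alpha'} - M_{\alpha'} M_\alpha,$$
whose entries live in $\mathbb{K}[\alpha_1,\ldots,\alpha_n,\alpha'_1,\ldots,\alpha'_n]$. Since $M_\alpha$ is homogeneous of degree $d$ in $\alpha$ and $M_{\alpha'}$ is homogeneous of degree $d$ in $\alpha'$, each entry of $P_{\mathrm{comm}}$ has total degree at most $2d$ in the $2n$ variables. By construction $M_\alpha$ and $M_{\alpha'}$ commute at a specialization iff $P_{\mathrm{comm}}$ vanishes at that point, so once I show $P_{\mathrm{comm}} \not\equiv 0$ as a polynomial, an application of the Schwartz--Zippel lemma (Lemma \ref{lem:SZ}) to some non-vanishing scalar entry will give an upper bound of $2d/|S|$ on the probability of commutation.

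The main work is therefore to prove $P_{\mathrm{comm}} \not\equiv 0$. I would first regroup the defining sum of $M_\alpha$ by multisets of indices. For a weak composition $\mu = (\mu_1,\ldots,\mu_n)$ of $d$, let $I_\mu$ denote any ordered tuple whose underlying multiset is $\mu$, and let $\binom{d}{\mu}$ be the multinomial coefficient. Because the family is symmetric in its indices, all ordered tuples with multiset $\mu$ share the same matrix $S_{I_\mu}$, so
$$M_\alpha = \sum_\mu \binom{d}{\mu}\, \alpha^\mu\, S_{I_\mu}, \qquad M_{\alpha'} = \sum_\nu \binom{d}{\nu}\, \alpha'^\nu\, S_{I_\nu},$$
and hence
$$P_{\mathrm{comm}}(\alpha,\alpha') = \sum_{\mu,\nu} \binom{d}{\mu}\binom{d}{\nu}\, \alpha^\mu \alpha'^\nu\, [S_{I_\mu}, S_{I_\nu}].$$
The monomials $\alpha^\mu \alpha'^\nu$ are linearly independent in $\mathbb{K}[\alpha,\alpha']$, so $P_{\mathrm{comm}} \equiv 0$ would force every commutator $[S_{I_\mu}, S_{I_\nu}]$ to vanish. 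Using the symmetry assumption again, one can assume that any witness to non-commutativity is indexed by sorted tuples, i.e.~by a multiset pair $(\mu,\nu)$, yielding a contradiction with the hypothesis.

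With $P_{\mathrm{comm}}$ known to be non-zero, pick an entry $(r,s)$ for which the scalar polynomial $(P_{\mathrm{comm}})_{r,s}$ is non-zero. This polynomial has total degree at most $2d$ in $2n$ variables, so Schwartz--Zippel gives
$$\Pr_{\alpha,\alpha' \in S}\bigl[(P_{\mathrm{comm}})_{r,s}(\alpha,\alpha') = 0\bigr] \;\leq\; \frac{2d}{|S|},$$
which upper bounds the probability that $M_\alpha$ and $M_{\alpha'}$ commute. Taking complements yields the stated lower bound $1 - 2d/|S|$ on the probability of non-commutation. The only real subtlety is the symmetry bookkeeping in the middle paragraph, where one must be careful that the symmetric-family hypothesis transfers non-commutativity from ordered tuples to the multiset-indexed representatives that actually appear as monomial coefficients of $P_{\mathrm{comm}}$; the degree bound and the Schwartz--Zippel step are then routine.
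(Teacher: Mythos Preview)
Your proposal is correct and follows essentially the same approach as the paper: both write the commutator $M_\alpha M_{\alpha'}-M_{\alpha'}M_\alpha$ as a polynomial in $(\alpha,\alpha')$, use the symmetric-family hypothesis to identify each monomial coefficient with a single commutator $[S_{I_\mu},S_{I_\nu}]$ times a positive multinomial factor, and then apply Schwartz--Zippel to a nonzero scalar entry of degree at most $2d$. The only cosmetic difference is that you regroup by multisets $\mu,\nu$ up front, whereas the paper fixes a non-commuting pair of ordered tuples and computes that the corresponding monomial coefficient equals $|I_0|\,|J_0|$ times the nonzero commutator entry; these are two phrasings of the same bookkeeping.
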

\begin{proof}
We want to bound the probability of error, i.e
\begin{align*}
    \text{Pr}_{\alpha,\alpha' \in S}\Big[ M_{\alpha} M_{\alpha'} -  M_{\alpha'} M_{\alpha} \neq 0\Big].
\end{align*}
Now 
\begin{align*}
    & M_{\alpha} M_{\alpha'} -  M_{\alpha'} M_{\alpha} \\
    = &\sum_{i_1,...,i_d,j_1,...,j_d \in [n]} (\prod_{m \in [d]}\alpha_{i_m}\alpha'_{j_m}) (S_{i_1...i_d}S_{j_1...j_d} - S_{j_1...j_d}S_{i_1...i_d}). 
\end{align*}
For a fixed $r,s \in [n]$, we define the polynomial
\begin{align*}
    P^{r,s}_{\text{comm}}(\alpha,\alpha') = \sum_{i_1,...,i_d,j_1,...,j_d \in [n]} (\prod_{m \in [d]}\alpha_{i_m}\alpha'_{j_m})m^{r,s}_{i_1...i_dj_1...j_d}
\end{align*}
where 
\begin{align*}
     m^{r,s}_{i_1...i_dj_1...j_d} = (S_{i_1...i_d}S_{j_1...j_d} - S_{j_1...j_d}S_{i_1...i_d})_{r,s}.
\end{align*}
First note that by construction $ M_{\alpha}$ commutes with $ M_{\alpha'}$ if and only if for all $r,s \in [n]$ such that $P^{r,s}_{\text{comm}}(\alpha,\alpha') = 0$.
\newline
Since, $\{S_{i_1,...,i_d}\}$ is not a commuting family, there exists $i_1^0,...,i_d^0,j_1^0,...,j_d^0 \in [n]$, such that 
\begin{equation} \label{eq:fixedcomm}
    S_{i_1^0...i_d^0}S_{j_1^0...j_d^0} -  S_{j_1^0...j_d^0}S_{i_1^0...i_d^0} \neq 0.
\end{equation}
Hence, there exists some entry $(r_0,s_0)$ such that 
\begin{equation}
    (S_{i_1^0...i_d^0}S_{j_1^0...j_d^0} -  S_{j_1^0...j_d^0}S_{i_1^0...i_d^0})_{r_0,s_0} \neq 0.
\end{equation}
Now we claim that $P^{r_0,s_0}_{\text{comm}}(\alpha,\alpha') \not\equiv 0 $. It is enough to show that the coefficient of $\alpha_{i_1^0}...\alpha_{i_d^0}\alpha'_{j_1^0}...\alpha'_{j_d^0}$ in $P^{r_0,s_0}_{\text{comm}}(\alpha,\alpha')$ is non-zero. Let $I_0 = \{(i^0_{\sigma(1)},...,i^0_{\sigma(d)})|\sigma \in~S_d\}$ and let $J_0 = \{(j^0_{\sigma(1)},...,j^0_{\sigma(d)})|\sigma \in S_d\}$. Then
%Let $P_{I_0}$ be the set of permutations of $I_0 = (i_1^0,...,i_d^0)$ and let $P_{J_0}$ be the set of permutations of $J_0 = (j_1^0,...,j_d^0)$
\begin{align*}
    \text{coeff}_{\alpha_{i_1^0}...\alpha_{i_d^0}\alpha'_{j_1^0}...\alpha'_{j_d^0}}(P^{r_0,s_0}_{\text{comm}}) = \sum_{\bar{i} \in I_0,\bar{j} \in J_0} m^{r_0s_0}_{\bar{i}\bar{j}}.
\end{align*}
The matrices $S_{i_1...i_d}$ form a symmetric family in the sense of Definition \ref{def:symmfamily}. Therefore, for all $\bar{i} \in I_0,\bar{j} \in J_0$, $m^{r_0s_0}_{\bar{i}\bar{j}}$ are equal.
This gives us that
\begin{align*}
        \text{coeff}_{\alpha_{i_1^0}...\alpha_{i_d^0}\alpha'_{j_1^0}...\alpha'_{j_d^0}}(P^{r_0,s_0}_{\text{comm}}) = |I_0||J_0|(m^{r_0,s_0}_{i_1^0...i_d^0j_1^0...j_d^0}) \neq 0.
\end{align*}
Hence $P^{r_0,s_0}_{\text{comm}} \not \equiv 0$ and $\text{deg}(P^{r_0,s_0}_{\text{comm}}) \leq 2d$ and using Schwartz-Zippel lemma, we get that,
\begin{align*}
    \text{Pr}_{\alpha,\alpha' \in S}[P^{r_0,s_0}_{\text{comm}}(\alpha,\alpha') \neq 0] \geq 1- \frac{2d}{|S|}. 
\end{align*}
Putting $r=r_0,s=s_0$, this gives us that
\begin{align*}
    \text{Pr}_{\alpha,\alpha' \in S}\Big[ M_{\alpha}, M_{\alpha'} \text{ don't commute}\Big] \geq \Big(1-\frac{2d}{|S|}\Big).
\end{align*}
\end{proof}
Note that if in a certain index, $i^0_1$ occurs $n_1$ times,$i^0_2$ occurs $n_2$ times,...,$i^0_r$ occurs $n_r$ times, then $|I_0| = \frac{d!}{n_1!...n_r!}$.
\par
The next result relies on the above lemma. Theorem \ref{thm:pdalgcomm} gives us a way
to analyze the case when the slices of the input polynomial fail to satisfy
the commutativity property (recall that this property is relevant due to
Theorem~\ref{sodcharspl}). This can also be seen as a suitable extension of Theorem~\ref{thm:algcomm} to the general degree-$d$ case.
\begin{theorem}\label{thm:pdalgcomm}
Let $f \in \mathbb{K}[x_1,...,x_n]_d$ be a degree $d$ form such that the subspace of matrices $\mathcal{V}$ spanned by its slices is not weakly singular and does not satisfy the commutativity property.
Let $h(x) = f(Rx)$ where the entries $(r_{i,j})$ of $R$ are chosen uniformly and independently at random from a finite set $S \subset \mathbb{K}$. Let $\{T_{i_1...i_{d-2}}\}_{i_1,...,i_{d-2} \in [n]}$ be the slices of $h$. 
If $T_{\bar{1}}$ is invertible, define $T'_{\bar{1}} = (T_{\bar{1}})^{-1}$. Then 
\begin{align*}
    \text{Pr}[T_{\bar{1}} \text{ is invertible and } T'_{\bar{1}}T_{\bar{2}}, T'_{\bar{1}}T_{\bar{3}} \text{ commute } ] \leq \frac{2(d-2)}{|S|}.
\end{align*}
\end{theorem}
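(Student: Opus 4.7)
The plan is to extend the proof of Theorem~\ref{thm:algcomm} to the degree-$d$ setting, using Theorem~\ref{thm:Pdslicestucture} in place of Theorem~\ref{thm:P3structural} and the general commutativity lemma (Lemma~\ref{lem:gencomm}) in place of Lemma~\ref{lem:comm}. By Theorem~\ref{thm:Pdslicestucture}, each slice decomposes as $T_{i_1\ldots i_{d-2}} = R^T D_{i_1\ldots i_{d-2}} R$ with $D_{i_1\ldots i_{d-2}} = \sum_{j_1,\ldots,j_{d-2}} (\prod_m r_{j_m,i_m}) S_{j_1\ldots j_{d-2}}$. For the repeated-index slices $T_{\bar{k}}$ we get $D_{\bar{k}} = \sum_{j_1,\ldots,j_{d-2}}(\prod_m r_{j_m,k}) S_{j_1\ldots j_{d-2}}$, which depends only on the $k$-th column of $R$. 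Substituting into $T'_{\bar{1}} T_{\bar{k}} = R^{-1} D_{\bar{1}}^{-1} D_{\bar{k}} R$ yields, for $k = 2,3$,
\[
T'_{\bar{1}} T_{\bar{k}} = R^{-1}\Big(\sum_{j_1,\ldots,j_{d-2}} \prod_m r_{j_m,k}\, D_{\bar{1}}^{-1} S_{j_1\ldots j_{d-2}}\Big) R.
\]
Hence, when $R$ is invertible, $T'_{\bar{1}} T_{\bar{2}}$ and $T'_{\bar{1}} T_{\bar{3}}$ commute if and only if the two random linear combinations $M_{\alpha}, M_{\alpha'}$ of the family $\{D_{\bar{1}}^{-1} S_{i_1\ldots i_{d-2}}\}$ (with $\alpha_i = r_{i,2}$, $\alpha'_i = r_{i,3}$) commute.

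Next I would verify the hypotheses of Lemma~\ref{lem:gencomm} for this family. Since $\mathcal{V}$ does not satisfy the commutativity property, Theorem~\ref{thm:commuting subspace} implies that for \emph{every} invertible $D_{\bar{1}} \in \mathcal{V}$ the subspace $D_{\bar{1}}^{-1}\mathcal{V}$ fails to be commuting; in particular $\{D_{\bar{1}}^{-1} S_{i_1\ldots i_{d-2}}\}$ is not a commuting family. Moreover the indexing is permutation-invariant, inherited from the permutation invariance of the slices of a polynomial. The main obstacle to address is that the individual matrices $D_{\bar{1}}^{-1} S_{i_1\ldots i_{d-2}}$ are \emph{not} symmetric, while Lemma~\ref{lem:gencomm} is stated for a symmetric family of symmetric matrices. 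However, inspecting the proof of Lemma~\ref{lem:gencomm} shows that matrix symmetry is never invoked: only the permutation invariance of the index labels is used (to argue that certain coefficients in $P^{r_0,s_0}_{\mathrm{comm}}$ do not cancel). Thus the lemma applies as stated, or, if one prefers, via a trivial rephrasing.

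Finally, I would conclude by a standard event-decomposition/conditioning argument modelled on the end of the proof of Theorem~\ref{thm:algcomm}. Let $E_1$ be the event that $T'_{\bar 1}T_{\bar 2}$ commutes with $T'_{\bar 1}T_{\bar 3}$, $E'_1$ the event that $M_\alpha$ commutes with $M_{\alpha'}$, $E_2$ the event that $D_{\bar{1}}$ is invertible (equivalently, by the non-commutativity hypothesis, that $\{D_{\bar{1}}^{-1} S_{i_1\ldots i_{d-2}}\}$ is well-defined and non-commuting), $E_4$ the event that $R$ is invertible, and $E_3 = E_2 \cap E_4$ the event that $T_{\bar{1}}$ is invertible. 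Since $D_{\bar{1}}$ depends only on the first column of $R$, it is independent of $\alpha = (r_{\cdot,2})$ and $\alpha' = (r_{\cdot,3})$, so Lemma~\ref{lem:gencomm} applied conditional on $E_2$ yields $\Pr[E'_1 \mid E_2] \le 2(d-2)/|S|$. Combining with $E_1 = E'_1 \cap E_4$ and $E_3 \subseteq E_2$ gives
\[
\Pr[E_1 \cap E_3] \;=\; \Pr[E'_1 \cap E_2 \cap E_4] \;\le\; \Pr[E'_1 \mid E_2] \;\le\; \frac{2(d-2)}{|S|},
\]
which is the desired bound.
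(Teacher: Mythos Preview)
Your proposal is correct and follows essentially the same route as the paper: the same change-of-variables computation via Theorem~\ref{thm:Pdslicestucture}, the same reduction to Lemma~\ref{lem:gencomm} with $A_{i_1\ldots i_{d-2}} = D_{\bar 1}^{-1}S_{i_1\ldots i_{d-2}}$, $\alpha_i=r_{i,2}$, $\alpha'_i=r_{i,3}$, and the identical event decomposition $E_1,E'_1,E_2,E_3,E_4$. Your explicit remark that the individual matrices $D_{\bar 1}^{-1}S_{i_1\ldots i_{d-2}}$ are not symmetric but that the proof of Lemma~\ref{lem:gencomm} uses only the permutation invariance of the index labels is a point the paper applies silently, so you are in fact a bit more careful here.
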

\begin{proof}
Let $\{S_{i_1...i_{d-2}}\}_{i_1,...,i_{d-2} \in [n]}$ be the slices of $f$. By Theorem \ref{thm:Pdslicestucture}, we know that $$T_{i_1...i_{d-2}} = R^T(\sum_{j_1...j_{d-2} \in [n]} (\prod_{m \in [d-2]}r_{j_m,i_m}S_{j_1...j_{d-2}})R.$$ Let us define $D_{i_1...i_{d-2}} = \sum_{j_1...j_{d-2} \in [n]} (\prod_{m \in [d-2]}r_{j_m,i_m}) S_{j_1...j_{d-2}}$.
Then we have:
\begin{align}
    T'_{\bar{1}}T_{\bar{2}} &= R^{-1}(D_{\bar{1}})^{-1}(R)^{-T}R^TD_{\bar{2}}R \nonumber \\
    &= R^{-1}(D_{\bar{1}})^{-1}D_{\bar{2}}R  \nonumber \\
    &=R^{-1}(D_{\bar{1}})^{-1}(\sum_{j_1...j_{d-2} \in [n]} (\prod_{m \in [d-2]}r_{j_m,2})S_{j_1...j_{d-2}})R \nonumber \\ 
    & = R^{-1}(D_{\bar{1}})^{-1}(\sum_{j_1...j_{d-2} \in [n]} (\prod_{m \in [d-2]}r_{j_m,2})S_{j_1...j_{d-2}})R \nonumber \\ 
    &= R^{-1}(\sum_{j_1...j_{d-2} \in [n]} (\prod_{m \in [d-2]}r_{j_m,2})(D_{\bar{1}})^{-1}S_{j_1...j_{d-2}})R .\label{eq:comm}
\end{align}
Similarly 
\begin{align*}
    T'_{\bar{1}}T_{\bar{3}} &= R^{-1}(D_{\bar{1}})^{-1}D_{\bar{3}}R \\ &=R^{-1}(\sum_{j_1...j_{d-2} \in [n]} (\prod_{m \in [d-2]}r_{j_m,3})(D_{\bar{1}})^{-1}S_{j_1...j_{d-2}})R.
\end{align*}
So, $T'_{\bar{1}}T_{\bar{2}}$ commutes with $T'_{\bar{1}}T_{\bar{3}}$ iff $R$ is invertible and $(D_{\bar{1}})^{-1}D_{\bar{2}}$ commutes with $(D_{\bar{1}})^{-1}D_{\bar{3}}$.
\par
Let $E_1$ be the event such that $T_{\bar{1}}'T_{\bar{2}}$ commutes with $T_{\bar{1}}'T_{\bar{3}}$ and let $E_1'$ be the event such that $(D_{\bar{1}})^{-1}D_{\bar{2}}$ commutes with $(D_{\bar{1}})^{-1}D_{\bar{3}}$.
Let $E_2$ be the event such that $\{(D_{\bar{1}})^{-1}S_{i_1,...,i_{d-2}}\}_{i_1,...,i_{d-2} \in [n]}$ is not a commuting family. Since $\mathcal{V}$ does not satisfy the commutativity property, $(D_{\bar{1}})^{-1}\mathcal{V}$ is not a commuting subspace if $D_{\bar{1}}$ is invertible.
Hence, the event such that $D_{\bar{1}}$ is invertible is the same as $E_2$.
Setting 
\begin{align*}
    A_{i_1...i_{d-2}} = (D_{\bar{1}})^{-1}S_{i_1...i_{d-2}}, \alpha_{i} = r_{i,2}, \alpha'_i = r_{i,3}
\end{align*}
and then using Lemma \ref{lem:gencomm} for $d= d-2$, we can conclude that 
\begin{equation}
    \label{eqn:pdcommbig}
    \text{Pr}_{R \in S}\Big[E_1'|E_2 \Big] \leq \frac{2(d-2)}{|S|}.
\end{equation}
Note here that $D_{\bar{1}}$ depends only on the random variables $r_{i,1}$ for all $i \in [n]$ and therefore is independent of $r_{k,2}$ and $r_{l,3}$ for all $k,l \in [n]$, because we assume that the entries of $R$ are all picked uniformly and independently at random.
\newline
Now we know that $T_{\bar{1}}$ is invertible iff $R$ and $D_{\bar{1}}$ are invertible.
Let $E_3$  be the event that $T_{\bar{1}}$ is invertible and $E_4$ the event that $R$  is invertible. We have $E_3 = E_2 \cap E_4$ and we have seen that $E_1 = E_1' \cap E_4$
Hence, the probability of error can be bounded as follows:
\begin{align*}
     \text{Pr}_{R \in S}[E_1 \cap E_3] = \text{Pr}_{R \in S}[E_1'\cap E_2 \cap E_4] 
     \leq \text{Pr}_{R \in S}[E_1'| E_2] \leq \frac{2(d-2)}{|S|} .
 \end{align*}
\end{proof}

\subsection{Failure of diagonalisability}\label{sec:faildiagd}
Theorem \ref{thm:pdalgcomm} gives us a way to analyze the case when the slices of the input polynomial fail to satisfy the commutativity property. With the results in
the present section we will be able to analyze the case where the commutativity property is satisfied, but the diagonalisability property fails (recall
that these properties are relevant due to Theorem \ref{sodcharspl}). This can also be seen as a suitable extension of Theorem \ref{thm:diagonalisability} to the general degree-$d$ case.
\begin{lemma}\label{lem:gendiag}
Let $\{A_{i_1...i_d}\}_{i_1,..,i_d \in [n]}  \in M_n(\mathbb{K})$ be a commuting family of matrices. Let us assume that this family is symmetric in the sense of Definition~\ref{def:symmfamily} and there exists $i_1^0,...,i_d^0 \in [n]$ such that $A_{i_1^0...i_d^0}$ is not diagonalisable. Let $S \subset \mathbb{K}$ be a finite set. Then  $D = \sum_{i_1,...,i_d =1}^n (\prod_{m \in [d]}\alpha_{i_m}) A_{i_1...i_d}$ is diagonalisable with probability at most $ \frac{d}{|S|}$ when $\alpha_1,...,\alpha_n$ are chosen uniformly and independently at random from $S$.
\end{lemma}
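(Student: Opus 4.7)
The plan is to mimic the proof of Corollary~\ref{corr:diag}, using the symmetry hypothesis (as in Lemma~\ref{lem:gencomm}) to force a particular coefficient to be nonzero. First I would set $\mathcal{U} = \mathrm{span}\{A_{i_1\ldots i_d} : i_1,\ldots,i_d \in [n]\}$. Since the generating family is pairwise commuting, any two linear combinations also commute, so $\mathcal{U}$ is a commuting subspace of $M_n(\mathbb{K})$. Define
\[ \mathcal{M} = \{M \in \mathcal{U} : M \text{ is diagonalisable}\}. \]
By Proposition~\ref{prop:diagon}, $\mathcal{M}$ is a linear subspace of $\mathcal{U}$; since $A_{i_1^0\ldots i_d^0} \in \mathcal{U} \setminus \mathcal{M}$ by hypothesis, $\mathcal{M}$ is proper, so I can pick a hyperplane of $\mathcal{U}$ containing $\mathcal{M}$ but not $A_{i_1^0\ldots i_d^0}$, and extract a linear form $l$ with $l \equiv 0$ on $\mathcal{M}$ and $l(A_{i_1^0\ldots i_d^0}) \neq 0$.

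Next I would apply $l$ to the random combination $D$, obtaining the polynomial
\[ P(\alpha_1,\ldots,\alpha_n) \;=\; l(D) \;=\; \sum_{i_1,\ldots,i_d \in [n]} \Big(\prod_{m=1}^d \alpha_{i_m}\Big)\, l(A_{i_1\ldots i_d}), \]
which has total degree at most $d$. Whenever $D$ is diagonalisable we have $D \in \mathcal{M}$, hence $P(\alpha) = 0$. So by the Schwartz--Zippel lemma it suffices to prove $P \not\equiv 0$, which would then give
\[ \Pr_{\alpha \in S^n}[D \text{ diagonalisable}] \;\leq\; \Pr_{\alpha \in S^n}[P(\alpha) = 0] \;\leq\; \frac{d}{|S|}. \]

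The main technical step, and where I expect the key obstacle to lie, is verifying $P \not\equiv 0$. Mirroring the combinatorial trick in Lemma~\ref{lem:gencomm}, I would inspect the coefficient in $P$ of the monomial $\alpha_{i_1^0}\alpha_{i_2^0}\cdots\alpha_{i_d^0}$ (read as a product with multiplicities). The tuples that contribute are precisely those in the $S_d$-orbit $I_0 = \{(i^0_{\sigma(1)},\ldots,i^0_{\sigma(d)}) : \sigma \in S_d\}$, and by the symmetry of the family (Definition~\ref{def:symmfamily}) each such tuple contributes the same value $l(A_{i_1^0\ldots i_d^0})$. Thus the coefficient equals $|I_0|\cdot l(A_{i_1^0\ldots i_d^0})$, which is nonzero. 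Without the symmetry hypothesis the contributions from permuted tuples could in principle cancel, so this is the step where the hypothesis is genuinely used; apart from it, everything else is a direct lift of the degree-one argument.
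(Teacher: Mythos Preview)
Your proposal is correct and follows essentially the same approach as the paper's proof: both use Proposition~\ref{prop:diagon} to get a separating linear form $l$ vanishing on $\mathcal{M}$ but not on $A_{i_1^0\ldots i_d^0}$, expand $l(D)$ as a degree-$d$ polynomial in the $\alpha_i$, use the symmetry hypothesis to show the coefficient of $\alpha_{i_1^0}\cdots\alpha_{i_d^0}$ equals $|I_0|\cdot l(A_{i_1^0\ldots i_d^0})\neq 0$, and conclude via Schwartz--Zippel. The only difference is cosmetic: the paper writes the linear form explicitly in matrix-entry coordinates, whereas you keep it abstract.
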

\begin{proof}
We define $\mathcal{U} = \text{span}\{A_{i_1...i_d}\}_{i_1,..,i_d \in [n]}$. 
We define 
\begin{align*}
    \mathcal{M} := \Big\{M \Big| M \text{ is diagonalisable and } M \in \mathcal{U}\Big\}.
\end{align*}
So the probability of error is 
\begin{equation}
\label{eq:diagprob}
    \text{Pr}_{\alpha \in S}\Big[D \in \mathcal{M}\Big].
\end{equation}
Now using Proposition \ref{prop:diagon}, and the hypothesis that there exists $A_{i_1^0...i_d^0} \in \mathcal{U} \setminus \mathcal{M}$, we get that $\mathcal{M}$ is a proper linear subspace of $\mathcal{U}$. So $\mathcal{M}$ is an intersection of hyperplanes.

Since $A_{i_1^0...i_d^0} \not\in \mathcal{M}$, there exists a linear form $l_{\mathcal{M}}(X) = \sum_{i,j \in [n]}a_{ij}x_{ij}$ corresponding to a hyperplane such that $l_{\mathcal{M}}(M) = 0$ for all $M \in \mathcal{M}$ and $l_{\mathcal{M}}(A_{i_1^0...i_d^0}) \neq 0$. We know that if $D$ is diagonalisable, then $l_{\mathcal{M}}(D) \neq 0$.

Now we want to compute 
\begin{align*}
    l_{\mathcal{M}}(D)(\alpha) &= \sum_{k,l \in [n]} a_{kl} \sum_{i_1,...,i_d\in [n]} (\prod_{m \in [d]}\alpha_{i_m}) (A_{i_1...i_d})_{k,l} \\
    &= \sum_{i_1,...,i_d\in [n]} (\prod_{m \in [d]}\alpha_{i_m}) m_{i_1...i_d}
\end{align*}
where
\begin{align*}
    m_{i_1...i_d} = (\sum_{k,l \in [n]} a_{kl}(A_{i_1...i_d})_{k,l}).
\end{align*}
Now we claim that $l_{\mathcal{M}}(D) \not\equiv 0$. We show this by proving that the coefficient of $\alpha_{i_1^0}...\alpha_{i_d^0}$ in $l_{\mathcal{M}}(D)(\alpha)$ is not equal to $0$. Let $I_0 = \{(i^0_{\sigma(1)},...,i^0_{\sigma(d)})|\sigma \in S_d\}$. Then
\begin{align*}
    \text{coeff}_{\alpha_{i_1^0}...\alpha_{i_d^0}}(D) = \sum_{\bar{i} \in I_0} m_{\bar{i}}.
\end{align*}
Since the matrices $A_{i_1...i_d}$ form a symmetric family, the $m_{\bar{i}}$ are equal for all $\bar{i} \in I_0$. Also, since, $l_{\mathcal{M}}(A_{i_1^0...i_d^0}) \neq 0$, we get that 
\begin{align*}
    \sum_{k,l \in [n]}a_{k,l}(A_{i_1^0...i_d^0})_{k,l} \neq 0.
\end{align*}
This gives us that $m_{i_1^0...i_d^0} \neq 0$. Hence, we get that
\begin{equation}
    \text{coeff}_{\alpha_{i_1^0}...\alpha_{i_d^0}}(D) = |I_0|m_{i_1^0...i_d^0} \neq 0.
\end{equation}
Thus, $l_{\mathcal{M}}(D) \not\equiv 0$ and $\text{deg}(l_{\mathcal{M}}(D)) \leq d$. Using Schwartz-Zippel lemma, the probability of error satisfies:
\begin{align*}
    \text{Pr}_{\alpha \in S}\Big[D \in \mathcal{M}\Big] \leq \text{Pr}_{\alpha \in S}[l_{\mathcal{M}}(D)(\alpha) = 0] \leq \frac{d}{|S|}.
\end{align*}
\end{proof}
Recall from Section \ref{sec:faildiagd} that if $i^0_1$ occurs $n_1$ times,$i^0_2$ occurs $n_2$ times,...,$i^0_r$ occurs $n_r$ times, then $|I_0| = \frac{d!}{n_1!...n_r!}$.
\par
The last result for this section is an analogue of the Theorem \ref{thm:pdalgcomm} for the diagonalisability property.
\begin{theorem}\label{thm:pdiagonalisability}
Let $f \in \mathbb{K}[x_1,...,x_n]_d$ be a degree-d form with such that the subspace $\mathcal{V}$ spanned by its slices  is a not weakly-singular subspace, satisfies the commutativity property, but does not satisfy the diagonalisability property. Let $h(x) = f(Rx)$ where the entries $r_{i,j}$ of $R$ are chosen uniformly and independently at random from a finite set $S \subset \mathbb{K}$, Let $\{T_{i_1...i_{d-2}}\}_{i_1,...,i_{d-2} \in [n]}$ be the slices of $h$. If $T_{\bar{1}}$ is invertible, define $T'_{\bar{1}} = (T_{\bar{1}})^{-1}$. Then 
\begin{align*}
    \text{Pr}[T_{\bar{1}} \text{ is invertible and }T'_{\bar{1}}T_{\bar{2}} \text{ is diagonalisable }] \leq \frac{d-2}{|S|}.
\end{align*}
\end{theorem}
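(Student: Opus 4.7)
The plan is to mirror the proof of Theorem \ref{thm:diagonalisability}, substituting Lemma \ref{lem:gendiag} for Corollary \ref{corr:diag} at the crucial step. First I would rewrite $T'_{\bar{1}} T_{\bar{2}}$ using the slice structure of $h$. Exactly as in equation (\ref{eq:comm}) from the proof of Theorem \ref{thm:pdalgcomm},
\[
T'_{\bar{1}} T_{\bar{2}} = R^{-1}\Big(\sum_{j_1,\dots,j_{d-2}\in[n]} \big(\prod_{m\in[d-2]} r_{j_m,2}\big)(D_{\bar{1}})^{-1} S_{j_1\dots j_{d-2}}\Big) R,
\]
where $D_{\bar{1}}=\sum_{j_1,\dots,j_{d-2}\in[n]}(\prod_{m\in[d-2]} r_{j_m,1}) S_{j_1\dots j_{d-2}}$ and $\{S_{i_1\dots i_{d-2}}\}$ are the slices of $f$. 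Hence $T'_{\bar{1}}T_{\bar{2}}$ is diagonalisable iff $R$ is invertible and the inner sum, call it $M$, is diagonalisable.

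Next I would set up the relevant events: let $E_1$ be ``$T'_{\bar{1}}T_{\bar{2}}$ is diagonalisable'', $E'_1$ be ``$M$ is diagonalisable'', $E_3$ be ``$T_{\bar{1}}$ is invertible'', $E_4$ be ``$R$ is invertible'', and $E_2$ be the event that $\{(D_{\bar{1}})^{-1}S_{i_1\dots i_{d-2}}\}$ is a commuting family with at least one non-diagonalisable member. I claim $E_2$ coincides with the event ``$D_{\bar{1}}$ is invertible''. Indeed $D_{\bar{1}}\in\mathcal{V}$, so when $D_{\bar{1}}$ is invertible the commutativity property gives that $(D_{\bar{1}})^{-1}\mathcal{V}$ is a commuting subspace by Theorem \ref{thm:commuting subspace}. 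Since the diagonalisability property fails, this subspace cannot be entirely diagonalisable (otherwise $D_{\bar{1}}$ would witness it), so by Proposition \ref{prop:diagon} its diagonalisable part is a proper subspace, and some spanning element $(D_{\bar{1}})^{-1}S_{i_1\dots i_{d-2}}$ must lie outside it.

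I would then apply Lemma \ref{lem:gendiag} with degree parameter $d-2$, matrices $A_{i_1\dots i_{d-2}}=(D_{\bar{1}})^{-1}S_{i_1\dots i_{d-2}}$, and scalars $\alpha_i=r_{i,2}$. Under $E_2$ the $A_{i_1\dots i_{d-2}}$ form a commuting family with a non-diagonalisable member, and they inherit the permutation invariance of the slices $S_{i_1\dots i_{d-2}}$. Crucially $D_{\bar{1}}$ depends only on the first column of $R$ while the coefficients $r_{i,2}$ come from the second column, so $E_2$ and the randomness in $\alpha$ are independent, and Lemma \ref{lem:gendiag} yields $\text{Pr}[E'_1\mid E_2]\le (d-2)/|S|$.

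Finally I would assemble the bound. Since $T_{\bar{1}}$ is invertible iff both $R$ and $D_{\bar{1}}$ are invertible, we have $E_3=E_2\cap E_4$, and from the first paragraph $E_1=E'_1\cap E_4$. Therefore $\text{Pr}[E_1\cap E_3] = \text{Pr}[E'_1\cap E_2\cap E_4] \le \text{Pr}[E'_1\mid E_2] \le (d-2)/|S|$. The main subtlety I expect is the identification of $E_2$ with ``$D_{\bar{1}}$ invertible'': one must use Proposition \ref{prop:diagon} to pass from ``some matrix in $(D_{\bar{1}})^{-1}\mathcal{V}$ is not diagonalisable'' to the stronger statement ``some generator $(D_{\bar{1}})^{-1}S_{i_1\dots i_{d-2}}$ is not diagonalisable'', and one should verify that these generators satisfy the hypotheses of Lemma \ref{lem:gendiag} despite not being symmetric matrices themselves (the proof of that lemma only uses the permutation invariance of the family, which holds here).
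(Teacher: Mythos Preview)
Your proposal is correct and follows essentially the same route as the paper's proof: rewrite $T'_{\bar 1}T_{\bar 2}$ via the slice formula, set up the same events $E_1,E'_1,E_2,E_3,E_4$, identify $E_2$ with ``$D_{\bar 1}$ invertible'', invoke Lemma~\ref{lem:gendiag} with $A_{i_1\dots i_{d-2}}=(D_{\bar 1})^{-1}S_{i_1\dots i_{d-2}}$ and $\alpha_i=r_{i,2}$, and conclude with the same chain of inequalities. Your treatment is in fact slightly more explicit than the paper's on two points: the paper simply cites Theorem~\ref{thm:diagonalisable subspace} for the identification of $E_2$, whereas you spell out the passage from ``some element of $(D_{\bar 1})^{-1}\mathcal V$ is non-diagonalisable'' to ``some generator is non-diagonalisable'' via Proposition~\ref{prop:diagon}; and you correctly flag that Lemma~\ref{lem:gendiag} is stated for symmetric families of symmetric matrices but its proof only uses the permutation invariance, which the $(D_{\bar 1})^{-1}S_{i_1\dots i_{d-2}}$ inherit.
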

\begin{proof}
As in the proof of Theorem \ref{thm:pdalgcomm}, we have
\begin{align*}
    T'_{\bar{1}}T_{\bar{2}} =   R^{-1}(\sum_{j_1...j_{d-2} \in [n]} (\prod_{m \in [d-2]}r_{j_m,2})(D_{\bar{1}})^{-1}S_{j_1...j_{d-2}})R
\end{align*}
where $D_{\bar{1}} = \sum_{j_1...j_{d-2} \in [n]}(\prod_{m \in [d-2]}r_{j_m,1})S_{j_1...j_{d-2}}$. So $T_{\bar{1}}'T_{\bar{2}}$ is diagonalisable iff $R$ is invertible and $M = (\sum_{j_1...j_{d-2} \in [n]} (\prod_{m \in [d-2]}r_{j_m,2})(D_{\bar{1}})^{-1}S_{j_1...j_{d-2}})$ is diagonalisable. We denote by $E_1$ the event  that $T_{\bar{1}}'T_{\bar{2}}$ is diagonalisable and by $E_1'$ the event that $M$ is diagonalisable.
\par
Let $E_2$ be the event  that $\{(D_{\bar{1}})^{-1}S_{i_1...i_{d-2}}\}_{i_1,...,i_{d-2} \in [n]}$ is a commuting family and there exists $j_1...j_{d-2} \in [n]$ such that $(D_{\bar{1}})^{-1}S_{j_1...j_{d-2}}$ is not diagonalisable.
Since $\mathcal{V}$ satisfies the commutativity property and does not satisfy the diagonalisability property, by Theorem \ref{thm:diagonalisable subspace}, the event  that $D_{\bar{1}}$ is invertible is the event same as $E_2$.
\par
Setting $A_{i_1...i_{d-2}} = (D_{\bar{1}})^{-1}S_{i_1...i_{d-2}}$ and setting $\alpha_{i} = r_{i,2}$ for all $i \in [n]$ and using Lemma \ref{lem:gendiag}, we get that 
\begin{equation}
    \text{Pr}_{R \in S}\big[E_1'\big|E_2\big] \leq \frac{d-2}{|S|} .
\end{equation}
%Note here we use the notation $\text{Pr}_{R \in S}$ to denote the probability when $r_{i,j}$'s are chosen uniformly and independently at random from $S$.
%
Now we know that $T_{\bar{1}}$ is invertible iff $R$ and $D_{\bar{1}}$ is invertible.
Let $E_3$ be the event  that  $T_{\bar{1}}$ is invertible and  $E_4$, the event  that $R$ is invertible. We have $E_3 = E_2 \cap E_4$ and we have seen that $E_1 = E_1' \cap E_4$. The probability of error can finally be bounded as follows:
\begin{align*}
    \text{Pr}_{R \in S}[E_1 \cap E_3] = \text{Pr}_{R \in S}[E_1' \cap E_2 \cap E_4]\leq \text{Pr}_{R \in S}[E_1'| E_2] \leq \frac{d-2}{|S|}.
\end{align*}
\end{proof}

\subsection{Analysis for negative inputs}\label{sec:negd}
In this section we complete the proof of Theorem \ref{thm:pdproof}. The case of positive inputs was treated in Section \ref{sec:positived}. It therefore remains to prove the following
result.
\begin{theorem}\label{thm:pdnegproof}
If an input $f \in \mathbb{K}[x_1,...,x_n]_d$ is not equivalent to some polynomial $P_d \in \mathcal{P}_d$, then $f$ is rejected by the algorithm with high probability over the choice of the random matrix $R$. More precisely, if the entries $r_{i,j}$ are chosen uniformly and independently at random from a finite set $S \subseteq \mathbb{K}$, then the input will be rejected with probability $\geq (1-\frac{2(d-2)}{|S|})$
\end{theorem}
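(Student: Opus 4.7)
The plan is to mirror the proof of the degree 3 negative case (Section \ref{sec:neg}), but use the degree-$d$ characterization from Theorem \ref{sodcharspl} in place of Theorem \ref{thm:main}, together with the higher-order technical lemmas already established. Since $f$ is not equivalent to any polynomial in $\mathcal{P}_d$, Theorem \ref{sodcharspl} tells us that the subspace $\mathcal V$ spanned by the slices of $f$ must fail at least one of three conditions, giving three disjoint cases: (i) $\mathcal V$ is weakly singular; (ii) $\mathcal V$ is not weakly singular but does not satisfy the commutativity property; (iii) $\mathcal V$ is not weakly singular and satisfies the commutativity property but fails the diagonalisability property. I would bound the acceptance probability separately in each case.

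For case (i), I would first observe that by Theorem \ref{thm:Pdslicestucture} we have $T_{\bar 1} = R^T D_{\bar 1} R$ where $D_{\bar 1} = \sum_{j_1,\ldots,j_{d-2}\in[n]} \bigl(\prod_{m\in[d-2]} r_{j_m,1}\bigr) S_{j_1\ldots j_{d-2}}$. By the definition of weak singularity applied with $\alpha_i = r_{i,1}$, the matrix $D_{\bar 1}$ is singular deterministically for every choice of the first column of $R$. Hence $T_{\bar 1}$ is singular with probability $1$, and the algorithm rejects with probability $1$ in this case.

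For cases (ii) and (iii), I would invoke the technical theorems already proved. In case (ii) the acceptance event is contained in the event that $T_{\bar 1}$ is invertible and $T'_{\bar 1}T_{\bar 2}$, $T'_{\bar 1}T_{\bar 3}$ commute, so Theorem \ref{thm:pdalgcomm} bounds the acceptance probability by $2(d-2)/|S|$. In case (iii) the acceptance event is contained in the event that $T_{\bar 1}$ is invertible and $T'_{\bar 1}T_{\bar 2}$ is diagonalisable, so Theorem \ref{thm:pdiagonalisability} bounds the acceptance probability by $(d-2)/|S|$. Taking the maximum of the three bounds gives a uniform upper bound of $2(d-2)/|S|$ on the acceptance probability, hence a rejection probability of at least $1-2(d-2)/|S|$ in all cases.

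The main conceptual obstacle in this proof is essentially already handled by the preparatory work: the key subtle point is why case (i) must use weak singularity (rather than ordinary singularity of $\mathcal V$ as in the degree 3 analysis). The reason is that $T_{\bar 1}$ is not an arbitrary element of $R^T\mathcal V R$ but a very specific element in which the combination coefficients factor as products $\prod_m r_{j_m,1}$, so it is precisely weak singularity of $\mathcal V$ that is equivalent to deterministic singularity of $T_{\bar 1}$. Fortunately Theorem \ref{sodcharspl} is stated exactly so as to match this failure mode, so once the three-case decomposition is set up correctly the proof is a straightforward assembly of the earlier lemmas.
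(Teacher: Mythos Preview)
Your proposal is correct and follows essentially the same approach as the paper: the same three-case split via Theorem~\ref{sodcharspl}, with case (i) handled by observing that weak singularity forces $D_{\bar 1}$ (hence $T_{\bar 1}$) to be singular deterministically, and cases (ii) and (iii) handled by Theorems~\ref{thm:pdalgcomm} and~\ref{thm:pdiagonalisability} respectively. Your added remark explaining why weak singularity is the right notion for case (i) is a nice clarification that the paper leaves implicit.
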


\begin{proof}
Let $\{S_{i_1,...,i_{d-2}}\}_{i_1,...,i_{d-2} \in [n]}$ be the slices of $f$ and $\mathcal{V} =  \text{span}\{S_{\bar{1}},...,S_{\bar{n}}\}$. From Theorem \ref{thm:sodchar} and Theorem \ref{thm:main}, we know that if $f \not\sim P_d$, then there are three disjoint cases to consider
\begin{enumerate}
    \item \textbf{Case 1: }$\mathcal{V}$ is a weakly singular subspace of matrices.
    \item \textbf{Case 2: }$\mathcal{V}$ is not a weakly singular subspace and $\mathcal{V}$ does not satisfy the commutativity property.
    \item \textbf{Case 3: }$\mathcal{V}$ is not a weakly singular subspace, $\mathcal{V}$ satisfies the commutativity property but does not satisfy the diagonalisability property.
\end{enumerate}
Now we try to upper bound the probability of error in each case. 
\newline
In case 1, $T_{\bar{1}} = R^T(\sum_{j_1...j_{d-2} \in [n]} r_{j_1,1}...r_{j_{d-2},1}S_{j_1...j_{d-2}})R \in \mathcal{V}$ is always singular for any choice of $r_{j,1}$. So $f$ is rejected with probability $1$ in this case.
\newline
In case 2, we can upper bound the probability of error as follows: 
\begin{align*}
        &\text{Pr}_{R \in S}[f \text{ is accepted by the algorithm}] \\
        &=\text{Pr}_{R \in S}[T_{\bar{1}} \text{ is invertible, }T'_{\bar{1}}T_{\bar{2}}, T'_{\bar{1}}T_{\bar{3}} \text{ commute }, T'_{\bar{1}}T_{\bar{2}} \text{ is diagonalisable}] \\
        &\leq \text{Pr}_{R \in S}[T_{\bar{1}} \text{ is invertible, }T'_{\bar{1}}T_{\bar{2}}, T'_{\bar{1}}T_{\bar{3}} \text{ commute }] .
    \end{align*}
    Using Theorem \ref{thm:pdalgcomm}, we get that this occurs with probability at most $ \frac{2(d-2)}{|S|}$. In Case 3, we have the following upper bound on the probability of error,
    \begin{align*}
        &\text{Pr}_{R \in S}[f \text{ is accepted by the algorithm}] \\
        &=\text{Pr}_{R \in S}[T_{\bar{1}} \text{ is invertible, }T'_{\bar{1}}T_{\bar{2}}, T'_{\bar{1}}T_{\bar{3}} \text{ commute }, T'_{\bar{1}}T_{\bar{2}} \text{ is diagonalisable}] \\
        &\leq \text{Pr}_{R \in S}[T_{\bar{1}} \text{ is invertible, }T'_{\bar{1}}T_{\bar{2}} \text{ is diagonalisable}] .
    \end{align*}
    By Theorem \ref{thm:pdiagonalisability}, we can show that this occurs with probability $\leq \frac{d-2}{|S|}$. Therefore in all these three cases, the algorithm rejects $f$ with probability at least $ 1- \frac{2(d-2)}{|S|}$.
\end{proof}

\section{Variable Minimization}\label{sec:minvars}
We first recall the notion of redundant and essential variables studied by
Carlini \cite{10.1007/978-3-540-33275-6_15} and Kayal \cite{kayal11}.
\begin{definition}
A variable $x_i$ in a polynomial $f (x_1,...,x_n )$ is redundant if $f$
does not depend on $x_i$ , i.e., $x_i$ does not appear in any monomial of $f$.
\par
Let $f \in \mathbb{K}[x_1,...,x_n]$. The number of essential variables is the smallest number $t$ such that there exists an invertible linear transformation $A \in \mathbb{K}^{n \times n}$ on the variables such that every monomial of $f(Ax)$ contains only the variables $x_1,...,x_t$.
\end{definition}
In this section we propose the following algorithm for variable minimization.

\begin{algorithm}[H]\label{alg:minvars}
\SetAlgoLined
%\KwResult{Write here the result }
 \textbf{Input:} A degree-$d$ homogeneous polynomial $P$ given by a blackbox
 \\
% \While{While condition}
Pick $(\alpha_1,...,\alpha_n)$ where $\alpha_j = (\alpha_j^{(1)},...,\alpha_j^{(n)})$ and $\alpha_j^{(i)}$ are picked uniformly and independently at random from a finite set $S \subset \mathbb{K}$ \\
Compute $M = (\frac{\partial P}{\partial x_j}(\alpha_i))$, such that $i,j \in [n]$ \\
Perform Gaussian elimination on $M$ and define the basis of the kernel $B = \{v_1,...,v_{n-t}\}$ \\
Add vectors $u_1,...,u_{t}$ to $B$ to obtain a basis for $\mathbb{K}^n$ \\
Define $n \times n$ matrix $A = (u_1,...,u_{t},v_1,...,v_{n-t})$ where $u_i$ and $v_j$ are the columns of $A$\\
Let $f(x) = P(Ax)$ \\
Run Algorithm \ref{alg:pd} on $f(x_1,...,x_t,0,...,0)$ \\
\eIf{Algorithm \ref{alg:pd} accepts}{accept}{reject}
 \caption{Randomized algorithm for variable minimization}
\end{algorithm}
A randomized algorithm for minimizing the number of variables is given
in (\cite{kayal11}, Theorem 4.1). More precisely, if the input $f$ has $t$ essential variables the algorithm finds (with high probability) an invertible matrix $A$ such that $f(Ax)$ depends on its first $t$ variables only. It is based on the observation that $t= \text{dim}(\partial f)$ where $\partial(f)$ denotes the tuple of $n$ first-order partial derivatives $\frac{\partial f}{\partial x_i}$ (and $\text{dim}(\partial f)$ denotes the dimension of the spanned subspace). \cite{kayal11} then uses Lemma \ref{lem:matrixminvars} along with Theorem \ref{thm:carlini} to return the required invertible matrix.
\par
We combine the algorithm for minimizing the number of variables along with Algorithm \ref{alg:pd} to check if there exists a decomposition of the polynomial into linear combination of $d$-th powers of $\leq n$ many linearly independent linear forms. More formally, Algorithm \ref{alg:minvars} decides in polynomial time over $\mathbb{C}$ or $\mathbb{R}$ whether the input $f$ which is given as blackbox can be written as $\sum_{i=1}^t \alpha_i l_i^d$ for some $t \leq n$ where $l_i$'s are linearly independent linear forms and $\alpha_i \neq 0$ for all $i \in [t]$. We do a detailed complexity analysis of this algorithm in Appendix \ref{app:minvars}.
\begin{definition}
Let $\textbf{f}(x) = (f_1(x),...,f_m(x)) \in (\mathbb{K}[x])^m$ be a vector of polynomials over a field $\mathbb{K}$. The set of $\mathbb{K}$-linear dependencies in $\textbf{f}$, denoted by $\textbf{f}^{\perp}$, is the set of all vectors $\textbf{v} \in \mathbb{K}^m$, whose inner product with $\mathbf{f}$ is the zero polynomial i.e
\begin{align*}
    \textbf{f}^{\perp} := \{(a_1,...,a_m) \in \mathbb{K}^m| \sum_{i \in [m]} a_if_i(x) = 0\}
\end{align*}
\end{definition}
This following lemma from \cite{kayal11} gives a randomized algorithm to compute the basis of linear dependencies of a vector of polynomials. We restate it here for completeness and also calculate the probability bounds which we will need for our correctness proof of Algorithm \ref{alg:minvars}.
\begin{lemma}\label{lem:matrixminvars}
Let $f = (f_1(X),...,f_m(X))$ be a vector of $m$ polynomial with $\text{deg}(f_i) \leq d$ such that $\text{rank}(f^{\perp}) = t$. Pick $a_1,...,a_m$ where $a_j = (a_j^{(1)},...,a_j^{(m)})$ where the $a_j^{(i)}$'s are chosen uniformly and independently at random from a finite set $S \subseteq \mathbb{K}$ for all $i,j \in [m]$. Define matrix
\begin{align*}
 P(a_1,...,a_m)  = \begin{bmatrix} 
    f_1(a_1) & f_2(a_1) & \dots & f_m(a_1) \\
    f_1(a_2) & f_2(a_2) & \dots & f_m(a_2) \\
    \vdots &  \vdots & \ddots & \vdots\\
    f_1(a_m) & f_2(a_m) & \dots & f_m(a_m) 
    \end{bmatrix}
\end{align*}
Then 
\begin{align*}
    \text{Pr}_{a \in S}[\text{rank}(P(a_1,...,a_m)) = m-t] \geq  1 - \frac{(m-t)d}{|S|}
\end{align*}
Additionally, if $\text{rank}(P(a_1,...,a_m)) = m-t$, then $\text{ker}(P(a_1,...,a_m)) = f^{\perp}$. 
\end{lemma}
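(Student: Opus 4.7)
The plan is to first observe a one-sided containment that holds deterministically for every choice of sample points. If $v = (v_1,\ldots,v_m) \in f^\perp$, then $\sum_i v_i f_i(x) \equiv 0$ as a polynomial, so evaluating at each $a_i$ gives $P(a_1,\ldots,a_m)\,v = 0$. Hence $f^\perp \subseteq \ker P$ and $\mathrm{rank}\,P \leq m - t$ no matter how the $a_i$ are chosen, and moreover any event on which $\mathrm{rank}\,P = m-t$ automatically forces $\ker P = f^\perp$ by a dimension count. This disposes of the ``Additionally'' clause and reduces the lemma to a probabilistic lower bound on $\mathrm{rank}\,P$.

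Next I would isolate an $(m-t)\times(m-t)$ submatrix whose determinant is nonzero with high probability. Since $\dim f^\perp = t$, the span of the $f_i$ has dimension $m-t$, so there exist indices $i_1 < \cdots < i_{m-t}$ for which $g_j := f_{i_j}$ are linearly independent. Form $Q = (g_j(a_i))_{i,j \in [m-t]}$, a submatrix of $P$ using only the first $m-t$ rows. If $\det Q \neq 0$, then $\mathrm{rank}\,P \geq m-t$, which combined with the deterministic upper bound forces equality. Viewing $\det Q$ as a polynomial in the coordinates of $a_1,\ldots,a_{m-t}$, each entry $g_j(a_i)$ has degree $\leq d$ in the coordinates of $a_i$ and different $a_i$ involve disjoint variables, so $\det Q$ has total degree at most $d(m-t)$. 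Once we know $\det Q \not\equiv 0$, Schwartz--Zippel (Lemma~\ref{lem:SZ}) gives $\Pr[\det Q = 0] \leq (m-t)d/|S|$ directly.

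The main technical step, and the only place that uses linear independence, is showing that $\det Q$ is not identically zero as a formal polynomial. I would prove this by induction on $k = m-t$. The case $k=1$ is immediate since $g_1$ is a nonzero polynomial. For the inductive step, by the induction hypothesis applied to $g_1,\ldots,g_{k-1}$, fix a choice of $a_1,\ldots,a_{k-1}$ for which the top-left $(k-1)\times(k-1)$ minor $\Delta$ is nonzero. Expanding $\det Q$ along the last row yields
$$\det Q(a_k) = \sum_{j=1}^{k}(-1)^{j+k} M_j\, g_j(a_k),$$
where the minors $M_j$ do not depend on $a_k$ and $M_k = \Delta \neq 0$. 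This is a nontrivial linear combination of the linearly independent polynomials $g_1,\ldots,g_k$, hence not the zero polynomial in $a_k$, so some specialization of $a_k$ keeps $\det Q$ nonzero. This completes the induction and closes the argument.

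The hard part is this non-vanishing claim; once it is established, the remainder is the deterministic containment observation plus a direct application of Schwartz--Zippel to a polynomial of known degree bound. The probability bound matches the stated $1 - (m-t)d/|S|$, and the ``additionally'' part is automatic because $\dim \ker P = m - \mathrm{rank}\,P = t = \dim f^\perp$ whenever the rank achieves its generic value.
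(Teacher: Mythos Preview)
Your proof is correct and follows essentially the same route as the paper: pass to an $(m-t)\times(m-t)$ submatrix on a linearly independent subset of the $f_i$, bound the degree of its determinant by $(m-t)d$, argue it is not identically zero, and apply Schwartz--Zippel. The only difference is that the paper outsources the non-vanishing of $\det Q$ to Claim~7 of \cite{kayal11}, whereas you supply a self-contained inductive proof via cofactor expansion; you also make explicit the deterministic containment $f^\perp \subseteq \ker P$ and the resulting dimension count for the ``Additionally'' clause, which the paper leaves implicit.
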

\begin{proof}
Without loss of generality, we assume that the polynomial $f_1,...,f_{m-t}$ are $\mathbb{K}$-linearly independent and the rest of the polynomials are $\mathbb{K}$ linear combinations of the first $m-t$ polynomials. So it is enough to prove that
\begin{align*}
     Q(a_1,...,a_m)  = \begin{bmatrix} 
    f_1(a_1) & f_2(a_1) & \dots & f_{m-t}(a_1) \\
    f_1(a_2) & f_2(a_2) & \dots & f_m(a_2) \\
    \vdots &  \vdots & \ddots & \vdots\\
    f_1(a_{m-t}) & f_2(a_{m-t}) & \dots & f_m(a_{m-t}) 
    \end{bmatrix}
\end{align*}
has full rank, which is equivalent to proving that $\text{det}(Q)(a_1,...,a_m) \neq 0$.
\newline
Now $\text{deg}(\text{det}(Q)(x_1,...,x_n)) \leq (m-t)d$.  Claim 7 in \cite{kayal11} shows that $\text{det}(Q) \not\equiv 0$. Applying Schwartz-Zippel lemma, we get that,
\begin{align*}
    \text{Pr}_{a \in S}[\text{rank}(P(a_1,...,a_m)) = m-t] \geq 1 - \frac{(m-t)d}{|S|}
\end{align*}
\end{proof}
Recall that we define $\partial(f) = (\frac{\partial f}{\partial x_1},...,\frac{\partial f}{\partial x_n})$. Let $b_1,...,b_{n-t}$ be a basis for $\partial(f)^{\perp}$. Now there exists $t$ independent vectors $a_1 ,..., a_{t}$ such that the vector space $\mathbb{K}^n$ is spanned by $a_1,...,a_{t},b_1,..., b_{n-t}$. We define $A_f$ to be the invertible matrix whose columns are $a_1,...,a_{t},b_1,..., b_{n-t}$.
\begin{theorem}\cite{10.1007/978-3-540-33275-6_15}\label{thm:carlini}
The number of redundant variables in a polynomial $f (x)$ equals the dimension
of $\partial(f)^{\perp}$. Furthermore, given a basis of $\partial(f)^{\perp}$ , the polynomial $f (A_fX)$ depends on only the first $(n - \text{dim}(\partial(f)^{\perp}) ))$ variables.
\end{theorem}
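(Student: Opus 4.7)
The plan is to reduce the theorem to two basic facts: (i) the dimension of the span of the partial derivatives is invariant under invertible linear changes of variables, and (ii) a polynomial is constant along every direction $b$ that lies in $\partial(f)^\perp$. Throughout, I identify ``number of redundant variables after change of basis'' with $n - t$, where $t$ is the number of essential variables of $f$.

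First I would establish the invariance of $\dim \mathrm{span}(\partial f)$ under invertible linear substitution. If $g(x) = f(Ax)$ with $A$ invertible, the chain rule gives $\partial g/\partial x_i = \sum_j A_{ji}\,(\partial f/\partial x_j)(Ax)$. Pre-composition with $A$ is a bijection on $\mathbb{K}[x]$, and post-multiplication of a spanning family by the invertible matrix $A^T$ preserves the dimension of the span, so $\dim \mathrm{span}(\partial g) = \dim \mathrm{span}(\partial f)$. As an immediate consequence, if $f$ has $t$ essential variables, pick $A$ such that $f(Ax)$ depends only on $x_1,\dots,x_t$; then at most $t$ of its partials are nonzero, giving $\dim \mathrm{span}(\partial f) \le t$, equivalently $\dim \partial(f)^\perp \ge n - t$.

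Next I would handle the key ``invariance along kernel directions'' lemma: for any $b \in \partial(f)^\perp$, one has $f(x + sb) = f(x)$ as polynomials in $(x,s)$. This is where the actual content lies. The hypothesis is exactly $\partial_b f := \sum_i b^{(i)} \partial f/\partial x_i = 0$, and then $\partial_b^k f = 0$ for every $k \ge 1$. The Taylor expansion $f(x + sb) = \sum_{k \ge 0} \frac{s^k}{k!}\,\partial_b^k f(x)$ (valid over $\mathbb{K}\subseteq \mathbb{C}$) collapses to the constant term $f(x)$. This is the main — and essentially only — technical step; everything else is either chain-rule bookkeeping or linear algebra.

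Finally I would assemble the construction of $A_f$. Let $t' := n - \dim \partial(f)^\perp$, pick a basis $b_1,\dots,b_{n-t'}$ of $\partial(f)^\perp$, and complete it with $a_1,\dots,a_{t'}$ to a basis of $\mathbb{K}^n$; set $A_f := (a_1 \mid \cdots \mid a_{t'} \mid b_1 \mid \cdots \mid b_{n-t'})$. Then
\[
    f(A_f x) \;=\; f\!\left(\sum_{i=1}^{t'} x_i a_i + \sum_{j=1}^{n-t'} x_{t'+j}\, b_j\right),
\]
and by iterating the lemma above (once for each $b_j$, with parameter $s = x_{t'+j}$) this is equal to $f(\sum_{i=1}^{t'} x_i a_i)$, which depends only on $x_1,\dots,x_{t'}$. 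Hence $t \le t'$, and combined with the earlier inequality we get equality $t = n - \dim \partial(f)^\perp$, proving both assertions of the theorem. The only place where care is needed is the Taylor argument; the rest is formal.
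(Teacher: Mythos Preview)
Your argument is correct and self-contained: the chain-rule invariance, the Taylor-expansion lemma showing $f(x+sb)=f(x)$ for $b\in\partial(f)^\perp$, and the two-sided inequality assembling to $t=n-\dim\partial(f)^\perp$ all go through over any $\mathbb{K}\subseteq\mathbb{C}$. Note, however, that the paper does not actually prove this theorem; it is quoted from Carlini~\cite{10.1007/978-3-540-33275-6_15} and used as a black box, so there is no in-paper proof to compare against. Your write-up is essentially the standard argument behind Carlini's result (and the one sketched in Kayal~\cite{kayal11}), so it serves well as a proof the paper could have included.
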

\begin{lemma}\label{lem:linformsessvarseq}
If $f$ can be written as a sum of $r$ powers of linearly independent linear forms, then the number of essential variables of $f$ is equal to $r$.
\end{lemma}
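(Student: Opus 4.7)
The plan is to prove both inequalities: that the number of essential variables is at most $r$ (the easy direction, by exhibiting a change of variables) and that it is at least $r$ (using Carlini's characterization via first-order partial derivatives, recalled just below the statement).

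For the upper bound, I would use the linear independence of $l_1,\ldots,l_r$ to extend them to a basis $l_1,\ldots,l_n$ of the space of linear forms in $x_1,\ldots,x_n$. Letting $A^{-1}$ be the $n\times n$ invertible matrix whose $i$-th row is the coefficient vector of $l_i$, the substitution $y=A^{-1}x$ (equivalently $x=Ay$) yields $l_i(Ay)=y_i$, so $f(Ay)=\sum_{i=1}^r \alpha_i y_i^d$. Every monomial of $f(Ay)$ thus involves only $y_1,\ldots,y_r$, so by definition the number of essential variables of $f$ is at most $r$.

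For the lower bound, I would invoke the identity $t=\dim(\partial f)$, where $\partial f=(\partial f/\partial x_1,\ldots,\partial f/\partial x_n)$ and $\dim(\partial f)$ denotes the dimension of the subspace of $\mathbb{K}[x_1,\ldots,x_n]$ spanned by these $n$ partial derivatives. (This follows from Carlini's theorem quoted just above the statement, via $t = n - \dim(\partial f^{\perp}) = \dim(\partial f)$.) A direct computation gives
\[
\frac{\partial f}{\partial x_j} \;=\; d\sum_{i=1}^{r}\alpha_i\,(l_i)_j\,l_i^{d-1},
\]
where $(l_i)_j$ denotes the coefficient of $x_j$ in $l_i$. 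Consequently every partial derivative lies in $V:=\mathrm{span}(l_1^{d-1},\ldots,l_r^{d-1})$, and the partials themselves are the images of the standard basis vectors $e_1,\ldots,e_n$ under the linear map $\mathbb{K}^n\to V$ whose matrix in the basis $(l_i^{d-1})_i$ is $M_{ij}=d\alpha_i(l_i)_j$.

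The key facts I would then combine are: (i) since $\alpha_i\neq 0$ for all $i$ and the row vectors of coefficients of the $l_i$ are linearly independent in $\mathbb{K}^n$, the matrix $M$ has rank $r$, so the image of the above map is all of $V$; and (ii) $\dim V = r$, because powers $l_1^{d-1},\ldots,l_r^{d-1}$ of linearly independent linear forms are themselves linearly independent (a standard fact I would cite, provable e.g.\ by choosing an invertible change of variables sending $l_i$ to $x_i$). Together these give $\dim(\partial f)=r$, hence $t\geq r$, completing the proof. The only non-routine ingredient is fact~(ii); everything else is an elementary manipulation of the explicit expression for $f$, so I do not anticipate a substantive obstacle.
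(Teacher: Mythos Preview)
Your proof is correct and essentially self-contained, whereas the paper does not give an argument at all: it simply writes ``Follows from Example~42 in \cite{koiran2020derandomization}.'' What you have written is precisely the standard argument behind that cited example --- extend the $l_i$ to a basis for the upper bound, and compute $\dim(\partial f)$ directly for the lower bound using the chain-rule formula $\partial f/\partial x_j = d\sum_i \alpha_i (l_i)_j\, l_i^{d-1}$ together with the linear independence of $l_1^{d-1},\ldots,l_r^{d-1}$.

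Two minor remarks. First, once you establish $\dim(\partial f)=r$ you have $t=r$ exactly (not just $t\ge r$), so the upper-bound paragraph is logically redundant, though it does no harm. Second, your step~(ii) tacitly needs $d\ge 2$ (for $d=1$ the forms $l_i^{0}$ collapse); this is of course implicit throughout the paper, but worth noting if you want the argument to be fully self-contained.
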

\begin{proof}
Follows from Example 42 in \cite{koiran2020derandomization}.
\end{proof}
We combine all the results in this section along with Theorem \ref{thm:pdproof} to give a correctness proof for Algorithm \ref{alg:minvars}. 
\begin{theorem}
If an input $P \in \mathbb{K}[x_1,...,x_n]_d$ can not be written as $\sum_{i=1}^t \alpha_il_i^d$, where $l_i$ are linearly independent linear forms and $\alpha_i \neq 0$ for all $i \in [t]$, for any $t \leq n$, then $P$ is rejected by Algorithm \ref{alg:minvars} with high probability over the choice of the random matrix $R$ and the points $\alpha_1,...,\alpha_n$. More formally, if the entries $\alpha_j^i$ and $r_{i,j}$ are chosen uniformly and independently at random from a finite set $S$, then the input will be rejected with probability $\geq (1-\frac{2(d-2)}{|S|})(1-\frac{n(d-1)}{|S|})$.
\newline
Conversely, if $P \in \mathbb{K}[x_1,...,x_n]_d$ can be written as $\sum_{i=1}^t \alpha_il_i^d$ where $l_i$ are linearly independent linear forms and $\alpha_i \neq 0$ for all $i \in [t]$, then $P$ will be accepted by Algorithm \ref{alg:minvars}
with high probability over the choice of the random matrix $R$ and the points $a_1,...,a_n$. More formally, if the entries $\alpha_j^{(i)}$ and $r_{i,j}$  are chosen uniformly and independently at random from a finite set $S$, then the input will be accepted with probability $\geq (1-\frac{t(d-1)}{|S|})^2$ .
\end{theorem}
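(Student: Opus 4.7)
The plan is to decompose the analysis into two independent stages: (i) the variable minimization step that produces $A$ and $f(x) = P(Ax)$, and (ii) the invocation of Algorithm~\ref{alg:pd} on the restricted polynomial $\tilde f(x_1,\ldots,x_t) := f(x_1,\ldots,x_t,0,\ldots,0)$. Let $t$ denote the number of essential variables of $P$. For stage~(i), I apply Lemma~\ref{lem:matrixminvars} to the vector $\partial P$ of $n$ first-order partial derivatives, each of degree at most $d-1$. Since $\dim(\partial(P)^\perp) = n - t$ by Theorem~\ref{thm:carlini}, the lemma yields
\[
  \Pr[\operatorname{rank}(M) = t] \;\ge\; 1 - \frac{t(d-1)}{|S|} \;\ge\; 1 - \frac{n(d-1)}{|S|},
\]
and on this event $E_1$ the kernel of $M$ equals $\partial(P)^\perp$, so the matrix $A$ built by the algorithm agrees with the matrix $A_P$ of Theorem~\ref{thm:carlini}. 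Hence $f$ depends only on $x_1,\ldots,x_t$, and $\tilde f$ is $f$ itself, viewed as a polynomial in $t$ variables.

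The key bridge between the two stages is the observation that $P$ admits a decomposition $\sum_{i=1}^t \alpha_i l_i^d$ into linearly independent linear forms with nonzero coefficients if and only if (conditional on $E_1$) $\tilde f$ is equivalent to some element of $\mathcal{P}_t$. For the forward direction I use Lemma~\ref{lem:linformsessvarseq} to match the essential-variable count $t$ with the number of linear forms: if such a decomposition exists, then $f(x) = \sum_i \alpha_i (l_i \circ A)^d$ depends only on $x_1,\ldots,x_t$, the linear forms $l_i \circ A$ are linearly independent in $t$ variables, and so $\tilde f$ is equivalent to an element of $\mathcal{P}_t$. Conversely, invertibility of $A$ transports a decomposition of $\tilde f$ back to $P$ without loss of linear independence. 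I expect this transfer-of-decomposability step to be the main (though not deep) obstacle, since one must verify that zeroing out the redundant coordinates truly preserves the decomposition structure and does not introduce new essential variables.

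Stage~(ii) is then handled by Theorem~\ref{thm:pdproof} applied to $\tilde f \in \mathbb{K}[x_1,\ldots,x_t]_d$: conditional on $E_1$ and on $P$ being decomposable, $\tilde f$ is accepted with probability at least $1 - \frac{t(d-1)}{|S|}$; conditional on $E_1$ and on $P$ not being decomposable, $\tilde f$ is rejected with probability at least $1 - \frac{2(d-2)}{|S|}$. Since the random points $\alpha_j$ used to form $M$ in stage~(i) are drawn independently of the random matrix $R$ used inside Algorithm~\ref{alg:pd} in stage~(ii), the two error events are independent, and bounding $\Pr[\text{accept}] \geq \Pr[E_1] \cdot \Pr[\text{stage (ii) accepts}\mid E_1]$ in the positive case and $\Pr[\text{reject}] \geq \Pr[E_1] \cdot \Pr[\text{stage (ii) rejects}\mid E_1]$ in the negative case gives the stated product bounds $(1-\tfrac{t(d-1)}{|S|})^2$ and $(1-\tfrac{n(d-1)}{|S|})(1-\tfrac{2(d-2)}{|S|})$, respectively.
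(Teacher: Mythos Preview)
Your proposal is correct and follows essentially the same approach as the paper: both proofs apply Lemma~\ref{lem:matrixminvars} to the partial derivatives to bound the success probability of the variable-minimization step, invoke Theorem~\ref{thm:carlini} to conclude that $f=P(Ax)$ depends only on its first $t$ variables, use Lemma~\ref{lem:linformsessvarseq} to equate decomposability of $P$ with equivalence of $\tilde f$ to an element of $\mathcal{P}_t$, and then apply Theorem~\ref{thm:pdproof} to bound the error of the inner call, multiplying the two bounds by independence of the random choices. If anything, your write-up is slightly more explicit than the paper's about the ``bridge'' step (transferring decomposability of $P$ to $\tilde f$ and back via the invertible $A$) and about why the two stages are probabilistically independent.
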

\begin{proof}
Let us assume that $P$ has $t$ essential variables and we fix $M$ as given by the algorithm. Then using Lemma \ref{lem:matrixminvars} for $f_i = \frac{\partial P}{\partial x_i}$ and $a_i = \alpha_i$, we get that 
\begin{align*}
    \text{Pr}_{\alpha}[v_1,...,v_{n-t} \text{ is a basis for } \partial(P)^{\perp}] \geq 1 - \frac{t(d-1)}{|S|}.
\end{align*}
The linear transformation $A$ defined in the algorithm, satisfies the conditions of the linear transformation defined in Theorem \ref{thm:carlini} with $b_i = v_i$ and $a_i = u_i$ with probability $\geq 1 - \frac{t(d-1)}{|S|}$. Now, we define $f(x) = P(Ax)$. This gives us that
\begin{align*}
    \text{Pr}_{\alpha}[f(x) \text{ depends only on the first } t \text{ variables} ] &\geq 1 - \frac{t(d-1)}{|S|} \\
    &\geq 1 - \frac{n(d-1)}{|S|}.
\end{align*}
Using Theorem \ref{thm:pdproof}, we get that if $f$ can not be written as a sum of $t$-many sum of $d$-th powers of linear forms for any $t \leq n$, then the algorithm rejects the polynomial with probability $\geq (1-\frac{2(d-2)}{|S|})(1-\frac{n(d-1)}{|S|})$. 
\par
For the converse, if $f$ can be written as a sum of $t$-many sum of $d$-th powers of linear forms, using Lemma \ref{lem:linformsessvarseq}, then the polynomial has $t$ essential variables. Then the algorithm accepts $P$ with probability $\geq (1-\frac{t(d-1)}{|S|})^2$.
\end{proof}

\section{Reconstruction Algorithm for $P_d$}\label{sec:reconstruction}
The general reconstruction problem for a special class of arithmetic circuits can be stated as follows: Given a homogeneous degree-$d$ polynomial, output the smallest circuit that computes it. In this section, we look at the reconstruction problem for linear combination of $d$-th powers of linearly independent linear forms. Notice that Algorithm \ref{alg:pd} already solves the decision version of this problem in polynomial time i.e. Given a homogeneous degree-$d$ polynomial, can it be written as a linear combination of $d$-th powers of linearly independent linear forms?
\par
In terms of the polynomial equivalence problem, we have already shown that given a homogeneous degree-$d$ polynomial $f$, we can check in polynomial time if there exists an invertible matrix $A$ and some $P_d \in \mathcal{P}_d$ such that $f(x) = P_d(Ax)$. We give an algorithm (see Algorithm \ref{alg:reconstruction} below) that uses Algorithm \ref{alg:pd} to check the existence of such an $A$ and then outputs the $A$ and the corresponding $P_d$. When $f \in \mathbb{C}[x_1,...,x_n]_d$, this algorithm runs in polynomial time, if we allow the computation of polynomial roots in our model.

\begin{algorithm}[H]\label{alg:reconstruction}\label{alg:reconstruction}
\SetAlgoLined
\KwResult{The algorithm checks if $f$ is equivalent to some polynomial in $ \mathcal{P}_d$ and outputs $\{(\alpha_1,l_1),...,(\alpha_n,l_n)\}$ where $0 \neq \alpha_i \in \mathbb{K}$ and $l_i$'s are linearly independent linear forms such that $f= \sum_{i=1}^n \alpha_il_i^d$}
 \textbf{Input:} A degree-d homogeneous polynomial $f$ given as blackbox\\
% \While{While condition}{
Let $R \in M_n(\mathbb{K})$ be a matrix such that its entries $r_{ij}$ are picked uniformly and independently at random from a finite set $S$ and set $h(x) = f(Rx)$  \\
Let $T_{i_1,...,i_{d-2}}$ be the slices of $h$ for all $i_1,...,i_{d-2} \in [n]$ \\
Compute $T_{\bar{1}}, T_{\bar{2}}, T_{\bar{3}}$ \\
\eIf{$T_{\bar{1}}$ is singular}{reject}{compute $T_{\bar{1}}' = T_{\bar{1}}^{-1}$ \\
   \eIf{ $T_{\bar{1}}'T_{\bar{2}}$ and $T_{\bar{1}}'T_{\bar{3}} $ commute and $T_{\bar{1}}'T_{\bar{2}}$ is diagonalisable over $\mathbb{K}$}{
   diagonalise $T'_{\bar{1}}T_{\bar{2}} = P\Lambda P^{-1}$ \\
   Let $l_i$ be the $i$-th row of $(R^{-1}P^{-1})$ and let $\alpha_i = f(PRx)(e_i)$ where $e_i \in \mathbb{K}^n$ is the $i$-th standard basis vector for all $i \in [n]$ \\
   Output $\{(\alpha_1,l_1),...,(\alpha_n,l_n)\}$
   }{
   reject
  }
  }
 \caption{Randomized Reconstruction Algorithm}
\end{algorithm}
{ Note that this output is unique up to the permutation and scaling of the linear forms by a constant. If the linear form $l_i$ is scaled by a constant $c$, then it is reflected in the $\alpha_i$ which becomes $\frac{\alpha_i}{c^d}$}.
\par
A natural question is to study the approximate version for this reconstruction problem i.e. if the input polynomial admits such a decomposition, the linear forms which are returned by the algorithm are "arbitrarily close" to the required linear forms. In this context, one should note that interestingly, the only non-algebraic step in this algorithm is the step of matrix diagonalization. All other steps can be computed exactly in polynomial time. Recently, \cite{BGKS20} gave a poly-time randomized algorithm for the approximate version of matrix diagonalization problem. Referring to that algorithm for our diagonalization step combined with our reconstruction algorithm should effectively give a polynomial time randomized algorithm for the approximate version of the reconstruction problem. A more precise analysis is left for future work.
\par
For the next lemma, we assume that the input polynomial $f$ is equivalent to some polynomial $ P_d \in \mathcal{P}_d$ i.e. $f(x) = P_d(Ax)$ where $A$ is an invertible matrix. Take a random change of variables and let $T_{\bar{1}}$ and $T_{\bar{2}}$ be two slices of the new polynomial. Then, the eigenvalues of $(T_{\bar{1}})^{-1}T_{\bar{2}}$ are distinct with high probability over the random change of variables. Lemma \ref{thm:eigenvalues} along with Corollary \ref{corr:reconstruction} ensures that just diagonalization of a single $(T_{\bar{1}})^{-1}T_{\bar{2}}$ is enough to recover the matrix $A$ uniquely (up to permutation and scaling of the rows).
\begin{theorem}\label{thm:eigenvalues}
Let $f(x) = P_d(Ax)$ for some $P_d \in \mathcal{P}_d$ where $A$ is an invertible matrix. Let $h(x) = f(Rx)$ where the entries $r_{ij}$ of $R$ are chosen uniformly and independently at random from a finite set $S$. Let the slices of $h$ be $\{T_{i_1...i_{d-2}}\}_{i_1,...,i_{d-2} \in [n]}$. If $T_{\bar{1}}$ is invertible, let $T'_{\bar{1}} = T_{\bar{1}}^{-1}$. Let 
$\lambda_1,...,\lambda_n$ be the eigenvalues of $T'_{\bar{1}}T_{\bar{2}}$. Then
\begin{align*}
    &\text{Pr}_{R \in S}[\text{There exists }i,j \in [n] \text{ such that }\lambda_i = \lambda_j \text{ and } T_{\bar{1}} \text{ is invertible }] \\
    &\leq \frac{2\binom{n}{2}(d-2)}{|S|}.
\end{align*}
\begin{proof}
Let us assume that $P_d(x) = \sum_{i=1}^n \alpha_i x_i^d$ where $\alpha_i \neq 0$. We use the fact that $h(x) = P_d(RAx)$. Let $\{S_{i_1...i_{d-2}}\}_{i_1,...,i_{d-2} \in [n]}$ be the slices of $P_d$. Then we know from \eqref{eq:comm}, 
\begin{align*}
    T'_{\bar{1}}T_{\bar{2}} = (RA)^{-1}(\sum_{j_1...j_{d-2} \in [n]} \prod_{m \in [d-2]}(RA)_{j_m,2}(D_{\bar{1}})^{-1}S_{j_1...j_{d-2}})(RA)
\end{align*}
where 
\begin{align*}
    D_{\bar{1}} = \sum_{j_1...j_{d-2} \in [n]} (\prod_{m \in [d-2]}(RA)_{j_m,1})S_{j_1...j_{d-2}}.
\end{align*}
Since $S_{i_1...i_{d-2}}$ are the slices of $P_d$, we know that 
\begin{align*}
    S_{i_1...i_{d-2}} &= \alpha_i\text{diag}(e_i) \text{ if } i_1 = i_2 = ... = i_{d-2} = i \\
    &= 0 \text{ otherwise. }
\end{align*}
Since $T_{\bar{1}}$ is invertible, then $R$ and $D_{\bar{1}}$ are invertible. Now
\begin{align*}
    D_{\bar{1}} = \text{diag}(\alpha_1((RA)_{1,1})^{d-2},...,\alpha_n((RA)_{n,1})^{d-2}).
\end{align*}
Since $D_{\bar{1}}$ is invertible, this gives us that $\alpha_i((RA)_{i,1})^{d-2} \neq 0$ for all $i \in [n]$ and we obtain
\begin{equation}\label{eq:reconeigenvalues}
\begin{split}
    T'_{\bar{1}}T_{\bar{2}} = (RA)^{-1}\Big(\text{diag}(\frac{((RA)_{1,2})^{d-2}}{((RA)_{1,1})^{d-2}},...,\frac{((RA)_{n,2})^{d-2}}{((RA)_{n,1})^{d-2}})\Big)RA
\end{split}
\end{equation}
This gives us that $\lambda_i = \frac{((RA)_{i,2})^{d-2}}{((RA)_{i,1})^{d-2}}$. We define 
\begin{align*}
    P_{i,j}(R) = ((RA)_{i,2}(RA)_{j,1})^{d-2} - ((RA)_{j,2}(RA)_{i,1})^{d-2}.
\end{align*}
We can see that $\lambda_i \neq \lambda_j$ iff $P_{i,j}(R) \neq 0$. Also, $\text{deg}(P_{ij}) \leq 2(d-2)$. We can choose $R$ such that $(RA)_{i,2} = 1$, $(RA)_{j,1} = 1$, $(RA)_{i,1} = 0$, $(RA)_{j,2} = 0$, (If $A$ is invertible, there exists $R$ such that $RA = M$ for any matrix $M$). Hence, $P_{ij} \not\equiv 0$. Using Schwartz- Zippel Lemma, we get that
\begin{align*}
    \text{Pr}_{R \in S}[P_{ij} = 0] \leq \frac{2(d-2)}{|S|}.
\end{align*}
This gives us that
\begin{align*}
    \text{Pr}_{R \in S}[T_{\bar{1}} \text{ is invertible and }\lambda_i = \lambda_j] 
    \leq \text{Pr}_{R \in S}[P_{ij} = 0] \leq \frac{2(d-2)}{|S|}.
\end{align*}
Taking the union bound over all possible pairs of $i,j \in [n]$, we get that
\begin{align*}
    &\text{Pr}_{R \in S}[\text{There exists }i,j \in [n] \text{ such that }\lambda_i = \lambda_j \text{ and } T_{\bar{1}} \text{ is invertible }] \\
    &\leq \frac{2\binom{n}{2}(d-2)}{|S|}.
\end{align*}
\end{proof}
\end{theorem}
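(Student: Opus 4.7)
The natural approach is to compute the eigenvalues of $T'_{\bar 1}T_{\bar 2}$ explicitly in terms of the random matrix $R$, and then apply the Schwartz–Zippel lemma to each pair $(i,j)$ followed by a union bound over the $\binom{n}{2}$ pairs. Since $f(x) = P_d(Ax)$ and $h(x) = f(Rx) = P_d(RAx)$, I would apply Theorem~\ref{thm:Pdslicestucture} with the linear change of variables $RA$ to the polynomial $P_d \in \mathcal P_d$. Because all slices of $P_d$ that are not of the form $S_{\bar i} = \alpha_i \mathrm{diag}(e_i)$ vanish, the sum collapses to a single diagonal matrix, yielding $T_{\bar 1} = (RA)^T D_{\bar 1} (RA)$ and $T_{\bar 2} = (RA)^T D_{\bar 2} (RA)$, where $D_{\bar k}$ is the diagonal matrix whose $i$-th entry is $\alpha_i ((RA)_{i,k})^{d-2}$.

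Assuming $T_{\bar 1}$ is invertible (so that $RA$ is invertible and all $(RA)_{i,1}$ are nonzero), the matrix $D_{\bar 1}$ is invertible and
\[
T'_{\bar 1} T_{\bar 2} = (RA)^{-1} D_{\bar 1}^{-1} D_{\bar 2} (RA),
\]
which is similar to the diagonal matrix $D_{\bar 1}^{-1} D_{\bar 2}$. Hence its eigenvalues are exactly the diagonal ratios $\lambda_i = \big((RA)_{i,2}/(RA)_{i,1}\big)^{d-2}$. For fixed $i \neq j$, the condition $\lambda_i = \lambda_j$ is captured by the polynomial identity
\[
P_{ij}(R) := \bigl((RA)_{i,2}(RA)_{j,1}\bigr)^{d-2} - \bigl((RA)_{j,2}(RA)_{i,1}\bigr)^{d-2} = 0,
\]
a polynomial in the entries of $R$ of degree at most $2(d-2)$.

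The key technical point is to verify that $P_{ij}$ is not identically zero. Since $A$ is invertible, the map $R \mapsto RA$ is a bijection on $M_n(\mathbb K)$, so I can freely prescribe the entries of $RA$. Choosing $RA$ so that $(RA)_{i,1} = 0 = (RA)_{j,2}$ while $(RA)_{j,1}$ and $(RA)_{i,2}$ are nonzero gives $P_{ij} \neq 0$ at that point, so $P_{ij} \not\equiv 0$. The Schwartz–Zippel lemma then yields $\Pr[P_{ij}(R) = 0] \leq 2(d-2)/|S|$, and a union bound over the $\binom{n}{2}$ pairs gives the claimed bound.

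The only mild obstacle is bookkeeping the event ``$T_{\bar 1}$ invertible'', which I handle by noting that the equality of eigenvalues is only well-defined on this event, and that conditioning on this event is absorbed into the probability estimate since we are bounding the joint probability. No genuinely new idea beyond Schwartz–Zippel is needed; the whole proof rides on the fact that after the random change of variables, $T'_{\bar 1} T_{\bar 2}$ becomes similar to a diagonal matrix whose entries are simple rational expressions in the entries of $RA$.
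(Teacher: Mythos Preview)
Your proposal is correct and follows essentially the same route as the paper's proof: you derive $T'_{\bar 1}T_{\bar 2}=(RA)^{-1}D_{\bar 1}^{-1}D_{\bar 2}(RA)$ from Theorem~\ref{thm:Pdslicestucture}, read off the eigenvalues $\lambda_i=\bigl((RA)_{i,2}/(RA)_{i,1}\bigr)^{d-2}$, define the same polynomial $P_{ij}(R)$ of degree $2(d-2)$, verify $P_{ij}\not\equiv 0$ by prescribing the entries of $RA$ via invertibility of $A$, and finish with Schwartz--Zippel plus a union bound. The paper does exactly this (invoking~\eqref{eq:comm} rather than citing Theorem~\ref{thm:Pdslicestucture} directly, and choosing the specific values $(RA)_{i,2}=(RA)_{j,1}=1$, $(RA)_{i,1}=(RA)_{j,2}=0$).
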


\begin{corollary}\label{corr:reconstruction}
Let $f(x)$ be a degree-$d$ form which is equivalent to some polynomial in $\mathcal{P}_d$. Let $h(x) = f(Rx)$ where the entries $r_{ij}$ of $R$ are chosen uniformly and independently at random from a finite set $S \subset \mathbb{K}$. Let the slices of $h$ be $\{T_{i_1,...,i_{d-2}}\}_{i_1,...,i_{d-2} \in [n]}$. If $T_{\bar{1}}$ is invertible, let $T'_{\bar{1}} = T_{\bar{1}}^{-1}$. Suppose $T'_{\bar{1}}T_{\bar{2}}$ can be diagonalised as $P \Lambda P^{-1}$. Let $l_i$ be the rows of $R^{-1}P^{-1}$. Define $\alpha_i = f(PRx)(e_i)$ where $e_i \in \mathbb{K}^n$ is the $i$-th standard basis vector for all $i \in [n]$. Then 
\begin{align*}
    \text{Pr}_{R \in S}[T_{\bar{1}} \text{ is invertible and }f(x) = \sum_{i=1}^n \alpha_il_i^d] \geq 1 - \frac{2\binom{n}{2}(d-2)}{|S|}.
\end{align*}
\end{corollary}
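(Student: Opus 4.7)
The plan is to combine a deterministic correctness statement with the high-probability guarantee already established by Theorem~\ref{thm:eigenvalues}. More precisely, I would show that whenever $T_{\bar 1}$ is invertible and the eigenvalues of $T'_{\bar 1}T_{\bar 2}$ are pairwise distinct, the pairs $(\alpha_i, l_i)$ produced by the algorithm assemble into a genuine Waring decomposition $f = \sum_{i=1}^{n} \alpha_i l_i^d$. The claimed probability bound then follows immediately: outside the bad event bounded by Theorem~\ref{thm:eigenvalues}, the algorithm succeeds, so the failure probability is at most $2\binom{n}{2}(d-2)/|S|$.

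For the deterministic step, I would lean on the identity $T'_{\bar 1}T_{\bar 2} = M^{-1}\Delta M$ derived inside the proof of Theorem~\ref{thm:eigenvalues} (equation \eqref{eq:reconeigenvalues}), where $M$ is the composed change of variables (essentially $RA$) and $\Delta$ is the explicit diagonal matrix of eigenvalues. Once the spectrum is simple, the diagonalisation $T'_{\bar 1}T_{\bar 2} = P\Lambda P^{-1}$ is determined uniquely up to column permutation and column rescaling, so $P$ satisfies $P^{-1} = D^{-1}\Pi^{-1}M$ for some permutation matrix $\Pi$ and invertible diagonal matrix $D$. Plugging this into $R^{-1}P^{-1}$, the rows of the resulting matrix become, up to the unavoidable permutation and rescaling ambiguity of the Waring decomposition, the rows of the unknown matrix $A$ for which $f = P_d \circ A$; these are exactly the linear forms that should appear in the decomposition.

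To finish, I would verify that the coefficients $\alpha_i = f(PRe_i)$ exactly absorb the rescaling. Setting $y = R^{-1}P^{-1}x$, so that $x = PRy$, the target identity $f(x) = \sum_i \alpha_i l_i(x)^d$ becomes the simpler statement $f(PRy) = \sum_i \alpha_i\, y_i^d$ with $\alpha_i = f(PRe_i)$. Hence it suffices to prove that the polynomial $g(y) := f(PRy) = P_d(APRy)$ is of pure diagonal form $\sum_i \beta_i y_i^d$; once this is known, evaluating at $e_i$ forces $\beta_i = g(e_i) = f(PRe_i) = \alpha_i$ with no further work. Diagonality in turn reduces, after substituting $P = M^{-1}\Pi D$, to checking that the composite linear map $APR$ simplifies to a monomial matrix (a permutation composed with a diagonal), so that composing $P_d$ with it preserves the diagonal shape.

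The main obstacle is precisely this last algebraic collapse. One must carefully track the order in which $R$ and $A$ compose under the random change of variables, and make sure that the conjugations by $R$ introduced by passing from $f$ to $h(x) = f(Rx)$ cancel cleanly against those hidden inside the diagonalising matrix $P$. This is where the interplay between the random preprocessing and the uniqueness of simultaneous diagonalisation (Theorem~\ref{thm:commuting subspace} together with the distinct-eigenvalue hypothesis coming from Theorem~\ref{thm:eigenvalues}) does all the work, and where the bookkeeping, rather than any deeper idea, is the delicate part.
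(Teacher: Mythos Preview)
Your proposal is correct and follows essentially the same approach as the paper: both invoke Theorem~\ref{thm:eigenvalues} for the probability bound, and then argue deterministically that distinct eigenvalues force the diagonalising matrix $P$ to agree with the ``true'' change of variables up to permutation and column scaling, so that $h(x)=P_d'(P^{-1}x)$ for some $P_d'\in\mathcal{P}_d$, whence $f(x)=P_d'((PR)^{-1}x)$ and evaluating at $e_i$ recovers the coefficients. Your version is somewhat more explicit about the monomial-matrix bookkeeping (and your caution about the order of composition of $R$ and $A$ is well placed), but the underlying argument is the same.
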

\begin{proof}
By Theorem \ref{thm:eigenvalues}, we get that
\begin{align*}
    &\text{Pr}_{R \in S}[T_{\bar{1}} \text{ is invertible and } T'_{\bar{1}}T_{\bar{2}}\text{ has distinct eigenvalues}] \\
    &\geq 1 - \frac{2\binom{n}{2}(d-2)}{|S|}.
\end{align*}
We will now show that if $T_{\bar{1}} \text{ is invertible and } T'_{\bar{1}}T_{\bar{2}}\text{ has distinct eigenvalues}$, then $f(x) = \sum_{i=1}^n \alpha_i l_i^d$.
\par
If the eigenvalues are distinct, then the rank of the eigenspaces corresponding to each eigenvalue is $1$. Hence, the eigenvectors of $T_{\bar{1}}'T_{\bar{2}}$ are unique (up to a scaling factor). We already know that $h(x) = P_d(Bx)$ for some $ P_d \in \mathcal{P}_d$ such that $B$ is invertible. Then the columns of $B^{-1}$ form the eigenvectors of $T_{\bar{1}}'T_{\bar{2}}$. We take the diagonalization of $T_{\bar{1}}'T_{\bar{2}}$ into $P\Lambda P^{-1}$. Note here that the columns of $P$ form the eigenvectors for $T_{\bar{1}}'T_{\bar{2}}$. The uniqueness of eigenvectors of $T_{\bar{1}}'T_{\bar{2}}$ gives us that the set of columns of $P$ are essentially the  set of columns of $B^{-1}$ upto a scaling factor. So this gives us that $h(x) = P_{d}'(P^{-1}x)$ for some $P_{d}' \in \mathcal{P}_d$. We know that $f(x) = h(R^{-1}x)$. This gives us that $f(x) = P_d'(R^{-1}P^{-1})x$.
We define $A = R^{-1}P^{-1}$ and the $i$-th row of $A$ as $l_i$. This fixes the set of linear forms of the decomposition of the input polynomial which are unique up to a scaling factor. 
\newline
Taking $P_d = f(A^{-1}x)$ gives the corresponding polynomial in $\mathcal{P}_d$ such that $f$ is equivalent to $P_d$. Now $P_d = \sum_{i=1}^n \alpha_i x_i^d$ where $\alpha_i \neq 0$. Evaluating $P_d$ at $e_i \in \mathbb{K}^n$ where $e_i$ is the $i$-th standard basis vector, returns the corresponding $\alpha_i$. Hence,
\begin{align*}
    &\text{Pr}_{R \in S}[f(x) = \sum_{i=1}^n \alpha_il_i^d \text{ and } T_{\bar{1}} \text{ is invertible}] \geq 1 - \frac{2\binom{n}{2}(d-2)}{|S|}.
\end{align*}
\end{proof}
We combine all the results in this section along with Theorem \ref{thm:pdproof} to give a correctness proof of Algorithm \ref{alg:reconstruction}.
\begin{theorem}
If an input $f \in \mathbb{K}[x_1,...,x_n]_d$ is not equivalent to some polynomial in $\mathcal{P}_d$, then $f$ is rejected by Algorithm \ref{alg:reconstruction} with high probability. More formally, if the entries $r_{i,j}$ of $R$ are chosen uniformly and independently at random from a finite set $S$, then the input will be rejected with probability $\geq (1 - \frac{2(d-2)}{|S|})$.
\newline
Conversely, if an input $f \in \mathbb{K}[x_1,...,x_n]_d$ is equivalent to some polynomial in $\mathcal{P}_d$, then Algorithm \ref{alg:reconstruction}  outputs such a polynomial with high probability. More formally,if the entries of a matrix $R$ are chosen uniformly and independently from a finite set $S$, then the algorithm outputs a set of linearly independent linear forms $l_i$ and corresponding coefficients $\alpha_i \neq 0$ such that $f = \sum_{i=1}^n \alpha_il_i^d$ with probability  $\geq 1 - (\frac{2\binom{n}{2}(d-2)}{|S|} + \frac{n(d-1)}{|S|})$.
\end{theorem}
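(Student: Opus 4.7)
The plan is to reduce this correctness statement to results already established earlier in the paper, so most of the work has been done. I would first observe that the decision part of Algorithm~\ref{alg:reconstruction}---the branching structure that determines whether the algorithm rejects or enters the reconstruction block---is identical to Algorithm~\ref{alg:pd}. The extra lines only execute in the acceptance branch, to produce an output $\{(\alpha_i, l_i)\}$; they never cause a rejection. Hence for the negative direction (input not equivalent to any element of $\mathcal{P}_d$), the rejection probability is exactly the one established in Theorem~\ref{thm:pdnegproof}, namely at least $1-\frac{2(d-2)}{|S|}$. Nothing more is needed here.

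For the positive direction, suppose $f(x)=P_d(Bx)$ for some $P_d \in \mathcal{P}_d$ and invertible $B$. The proof would split the ``bad event'' into two sub-events and union-bound them. Let $E_{\mathrm{acc}}$ be the event that Algorithm~\ref{alg:pd}'s test accepts $h(x)=f(Rx)$, and let $E_{\mathrm{rec}}$ be the event that whenever $T_{\bar1}$ is invertible, the reconstruction block produces linear forms $l_1,\ldots,l_n$ (rows of $R^{-1}P^{-1}$) and scalars $\alpha_i=f(PRx)(e_i)$ satisfying $f=\sum_{i=1}^n \alpha_i l_i^d$. By the positive-input half of Theorem~\ref{thm:pdproof}, $\Pr[\overline{E_{\mathrm{acc}}}] \leq \frac{n(d-1)}{|S|}$ and in particular $T_{\bar1}$ is invertible on $E_{\mathrm{acc}}$. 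By Corollary~\ref{corr:reconstruction}, $\Pr[\overline{E_{\mathrm{rec}}} \cap \{T_{\bar1}\text{ invertible}\}] \leq \frac{2\binom{n}{2}(d-2)}{|S|}$. On $E_{\mathrm{acc}} \cap E_{\mathrm{rec}}$ the algorithm enters the reconstruction block and outputs a correct decomposition; a union bound therefore gives the success probability $\geq 1-\left(\frac{n(d-1)}{|S|}+\frac{2\binom{n}{2}(d-2)}{|S|}\right)$, as claimed.

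Finally, I would spend one sentence verifying the minor side conditions built into the statement: that the $l_i$ are linearly independent and the $\alpha_i$ are nonzero. Both follow from the fact that, in the good event, $R^{-1}P^{-1}$ is invertible (so its rows $l_i$ are linearly independent) and $f(A^{-1}x)\in \mathcal{P}_d$ with $A=R^{-1}P^{-1}$, forcing $\alpha_i \neq 0$ by definition of $\mathcal{P}_d$. The only step that required genuine work is the one already carried out in Theorem~\ref{thm:eigenvalues} and Corollary~\ref{corr:reconstruction}, namely showing that a single diagonalization suffices to pin down the correct change of variables up to permutation and scaling; this depends on the generic simplicity of the spectrum of $T'_{\bar1}T_{\bar 2}$, proved via Schwartz--Zippel applied to the discriminant-type polynomial $P_{i,j}(R)$. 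Assuming those results, the present theorem is a clean union-bound corollary and there is no real obstacle beyond careful bookkeeping.
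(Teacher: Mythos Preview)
Your proposal is correct and follows essentially the same approach as the paper: invoke Theorem~\ref{thm:pdnegproof} for the negative direction, and for the positive direction split the failure event into ``$T_{\bar1}$ not invertible'' (bounded via Lemma~\ref{lem:pdt1invertible}/Theorem~\ref{thm:pdproof}) and ``$T_{\bar1}$ invertible but the output decomposition is wrong'' (bounded via Corollary~\ref{corr:reconstruction}), then union-bound. The only cosmetic difference is that the paper conditions directly on the invertibility of $T_{\bar1}$ (using Lemma~\ref{lem:pdopp} to equate this with acceptance in the positive case) whereas you phrase things via an acceptance event $E_{\mathrm{acc}}$; your extra sentence checking that the $l_i$ are independent and the $\alpha_i$ nonzero is a welcome addition that the paper leaves implicit.
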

\begin{proof}
From Theorem \ref{thm:pdnegproof}, we get that if $f$ is not equivalent to any polynomial in $\mathcal{P}_d$, then $f$ is rejected by Algorithm \ref{alg:reconstruction} with probability $\geq (1 - \frac{2(d-2)}{|S|})$.
\par
For the converse, we start by assuming that $f$ is equivalent to some polynomial in $\mathcal{P}_d$. We know that if the first slice $T_{\bar{1}}$ of $h(x) = f(Rx)$ is not invertible, the Algorithm always makes an error and rejects the input. From Theorem \ref{sodcharspl}, we know that the subspace spanned by the slices of $f$ is not weakly singular. We can therefore apply Lemma~\ref{lem:pdt1invertible}, we get that
\begin{equation} \label{eq:recont1notinv}
    \text{Pr}_{R \in S}[T_{\bar{1}} \text{ is not invertible}] \leq \frac{n(d-1)}{|S|}. 
\end{equation}

\par
Moreover if $T_{\bar{1}}$ is invertible, Lemma \ref{lem:pdopp} shows that $f$ will always be accepted. Let the output of the algorithm be $\{(\alpha_i,l_i)\}_{i \in [n]}$. So the only possible error is when $f \neq \sum_{i \in [n]} \alpha_i l_i^d$.
From Corollary \ref{corr:reconstruction}, we get that
\begin{equation}\label{eq:recont1invertible}
    \text{Pr}_{R \in S}[\text{Algorithm makes an error and } T_{\bar{1}} \text{ is invertible}] \leq \frac{2\binom{n}{2}(d-2)}{|S|}.
\end{equation}
Combining (\ref{eq:recont1notinv}) and (\ref{eq:recont1invertible}), we get that if $f$ is equivalent to some polynomial in $\mathcal{P}_d$, then Algorithm \ref{alg:reconstruction} returns a set of linearly independent linear forms $l_i$ and corresponding coefficients $\alpha_i \neq 0$ (which are unique up to scaling and permutation) such that $f = \sum_{i=1}^n \alpha_il_i^d$ with probability $\geq 1 - (\frac{2\binom{n}{2}(d-2)}{|S|} + \frac{n(d-1)}{|S|})$.
\end{proof}
We can also replace the call to Algorithm \ref{alg:pd} in Algorithm \ref{alg:minvars} by a call to Algorithm \ref{alg:reconstruction} to similarly get a reconstruction algorithm for linear combination of powers of at most $n$ linearly independent linear forms. More specifically, given a polynomial $f$ in blackbox, it will check if there exists a decomposition of $f = \sum_{i=1}^t \alpha_i l_i^d$ for some $t \leq n$ where $l_i$'s are linearly independent and $\alpha_i \neq 0$ and outputs the decomposition, if it exists.
\appendix
\section{Appendix: Complexity analysis for equivalence to a sum of cubes} \label{app:cubes}
We first explain how the diagonalizability of a matrix can be tested efficiently with an algebraic algorithm. This can be done thanks to the following classical result from linear algebra (see e.g. \cite{horn13}, Corollary 3.3.8 for the case $\mathbb{K} = \mathbb{C}$).
\begin{lemma}\label{lem:compdiag}
Let $\mathbb{K}$ be a field of characteristic $0$ and let $\chi_M$ be the characteristic polynomial of a matrix $M \in M_n(\mathbb{K})$. Let $P_M = \frac{\chi_M}{\text{gcd}(\chi_M,\chi_M' )}$ be the square-free part of $\chi_M$. The matrix $M$ is diagonalisable over $\bar{\mathbb{K}}$ iff
$P_M(M) = 0$. Moreover, in this case $M$ is diagonalisable over $\mathbb{K}$ iff all the
roots of $P_M$ lie in  $\mathbb{K}$.
\end{lemma}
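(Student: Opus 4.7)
The plan is to reduce the statement to the classical characterization that a matrix is diagonalizable over $\bar{\mathbb{K}}$ if and only if its minimal polynomial $m_M$ is squarefree. Granting this, I will identify the squarefree part $P_M$ of the characteristic polynomial with the squarefree part of $m_M$, so that the equation $P_M(M)=0$ exactly captures the diagonalizability condition.

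First, I would recall two facts: (i) $m_M$ divides $\chi_M$ (Cayley--Hamilton), and (ii) $m_M$ and $\chi_M$ have the same set of roots in $\bar{\mathbb{K}}$, namely the eigenvalues of $M$. Fact (ii) is standard: any root of $\chi_M$ is an eigenvalue, and every eigenvalue $\lambda$ satisfies $m_M(\lambda)=0$ since $m_M(M)v=0$ for an eigenvector $v$ associated to $\lambda$. From (i) and (ii) it follows that $m_M$ and $\chi_M$ have the same squarefree part, so the squarefree part of $m_M$ equals $P_M$. In characteristic $0$ (which ensures that $\gcd(\chi_M,\chi_M')$ really strips out multiplicities), this squarefree part is $\prod_i (x-\lambda_i)$ where the $\lambda_i$ are the distinct eigenvalues of $M$.

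Next, I would prove the first equivalence. If $M$ is diagonalizable over $\bar{\mathbb{K}}$, then on a basis of eigenvectors the matrix $P_M(M)$ is diagonal with entries $P_M(\lambda_i)=0$, so $P_M(M)=0$. Conversely, if $P_M(M)=0$ then $m_M$ divides $P_M$; since $m_M$ has the same roots as $P_M$ and $P_M$ is squarefree, we conclude $m_M=P_M$, hence $m_M$ is squarefree, which is the classical criterion for diagonalizability over $\bar{\mathbb{K}}$. (The classical criterion itself follows by reducing to the Jordan form over $\bar{\mathbb{K}}$: a Jordan block of size $>1$ for $\lambda$ forces a factor $(x-\lambda)^2$ in $m_M$.)

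For the \emph{moreover} part, assume $M$ is diagonalizable over $\bar{\mathbb{K}}$. Then $M$ is similar over $\bar{\mathbb{K}}$ to $\mathrm{diag}(\lambda_1,\ldots,\lambda_n)$, so each $\lambda_i$ must lie in $\mathbb{K}$ for $M$ to be similar over $\mathbb{K}$ to a diagonal matrix. Conversely, if every root of $P_M$ lies in $\mathbb{K}$ then for each eigenvalue $\lambda_i\in\mathbb{K}$ the kernel of $M-\lambda_i I$ is a $\mathbb{K}$-subspace that can be computed by Gaussian elimination over $\mathbb{K}$; concatenating bases of these eigenspaces gives an eigenbasis of $\mathbb{K}^n$, so $M$ is diagonalizable over $\mathbb{K}$. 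The main obstacle is essentially bookkeeping — making sure the characteristic-$0$ hypothesis is used precisely where needed (so that $\chi_M/\gcd(\chi_M,\chi_M')$ equals the product of distinct linear factors) and that the standard criterion ``diagonalizable $\Leftrightarrow$ minimal polynomial squarefree'' is invoked cleanly; no genuinely new idea is required.
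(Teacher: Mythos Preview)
Your proof is correct and follows the standard route via the minimal polynomial. Note, however, that the paper does not actually prove this lemma: it is stated as a classical result from linear algebra with a reference to Horn and Johnson, \emph{Matrix Analysis}, Corollary~3.3.8, so there is no paper-proof to compare against beyond that citation.
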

We prove the following lemma which shows that each entry of the slices that we need to compute can be computed using $O(d)$ calls to the blackbox and $O(d\log^3 d)$ many arithmetic operations. This proof is motivated from the idea of polynomial interpolation and the proof strategy of Lemma 4 in \cite{forbes_et_al:LIPIcs:2018:9058}.Their algorithm gives a $\text{poly}(sd)$ runtime in our setting. In this section, we will require this lemma only for $d = 3$, but we prove the general form so that we can use it later in Appendix \ref{app:dthpowers}.
\begin{lemma}\label{lem:circuitcomplexity}
Let $f \in \mathbb{K}[x_1,...,x_n]$ be a homogeneous polynomial of degree $d$ where %PK $|\marF| > d$. 
$|\mathbb{K}| > d$. 
If $f$ is input as a blackbox $C$, then for some $i \in [n]$ can compute the coefficient of $x_i^{d-2}x_kx_j$ using $O(d)$ many oracle calls to the blackbox and $O(M(d)\log d)$ many arithmetic operations.
\end{lemma}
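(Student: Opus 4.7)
My plan is to reduce the extraction of the single coefficient $c := \text{coeff}_{x_i^{d-2} x_k x_j}(f)$ to three univariate interpolations of polynomials of degree at most $d$. I will describe the generic case where the indices $i, k, j$ are pairwise distinct; the degenerate sub-cases in which some of these indices coincide collapse to a single univariate interpolation and are strictly easier. First I would freeze all variables other than $x_i, x_k, x_j$ to $0$ and normalise $x_i = 1$, obtaining the bivariate polynomial
\[
\tilde{g}(y, z) \;:=\; f\bigl(\ldots, x_i{=}1, \ldots, x_k{=}y, \ldots, x_j{=}z, \ldots\bigr)\Big|_{x_l = 0,\ l \notin \{i,k,j\}} \;=\; \sum_{a+b+c=d} c_{a,b,c}\, y^b z^c,
\]
so that the target coefficient is exactly $c_{d-2,1,1}$. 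A single evaluation of $\tilde g$ at a point is one query to the blackbox $C$.

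The key device is a polarization-style identity that isolates $c_{d-2,1,1}$ using three univariate restrictions of $\tilde g$. Setting $P(t) := \tilde g(t, t)$, $Q(t) := \tilde g(t, 0)$ and $R(t) := \tilde g(0, t)$, each a polynomial of degree at most $d$ in $t$, a direct expansion gives $\text{coeff}_{t^2}(P) = c_{d-2,0,2} + c_{d-2,1,1} + c_{d-2,2,0}$, $\text{coeff}_{t^2}(Q) = c_{d-2,2,0}$ and $\text{coeff}_{t^2}(R) = c_{d-2,0,2}$, so
\[
c_{d-2, 1, 1} \;=\; \text{coeff}_{t^2}\bigl(P - Q - R\bigr).
\]
The constant term $c_{d,0,0}$ (the value of $f$ at $x_i = 1$ with the other variables set to $0$) appears in each of the three univariate polynomials but does not interfere with the $t^2$ coefficient, which is why I do not bother subtracting an extra $\tilde g(0,0)$ term.

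To turn this into an algorithm I would pick $d+1$ distinct scalars $t_0, \ldots, t_d \in \mathbb{K}$ (possible because $|\mathbb{K}| > d$) and query the blackbox to obtain $P(t_s), Q(t_s), R(t_s)$ for $s = 0, \ldots, d$, using $3(d+1) = O(d)$ blackbox calls in total. Then fast univariate interpolation from $d+1$ evaluations, via the standard subproduct-tree divide-and-conquer, recovers $P$, $Q$ and $R$ explicitly in $O(M(d)\log d)$ arithmetic operations each; the desired $t^2$ coefficients are read off and combined in constant time, giving the claimed $O(M(d)\log d)$ total. No step is genuinely difficult once the identity is in hand; the only non-obvious design choice is to replace a naive bivariate grid of $(d+1)^2$ evaluations by three univariate slices, which is what keeps the blackbox count linear in $d$ rather than quadratic.
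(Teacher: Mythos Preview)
Your proof is correct and follows essentially the same approach as the paper: restrict to the (at most three) relevant variables, perform up to three univariate interpolations, and recover the cross term via the inclusion--exclusion identity $\text{coeff}_{t^2}(P)-\text{coeff}_{t^2}(Q)-\text{coeff}_{t^2}(R)$. The only cosmetic difference is that the paper makes the dominant variable $x_i$ the interpolation parameter $t$ (fixing $x_j,x_k\in\{0,1\}$) and therefore reads off the coefficient of $t^{d-2}$ instead of $t^2$; homogeneity makes the two parameterizations equivalent.
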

\begin{proof}
Here we use the standard trick of polynomial interpolation. Without loss of generality, we assume that $i=1$, that is we need to compute $\text{coeff}_{x_1^{d-2}x_jx_k}(f)$. So we can write,
\begin{align*}
    C(x_1,...,x_n) = \sum_{i = 0}^d c_ix_1^i \text{ where } c_j \in \mathbb{K}[x_2,...,x_n].
\end{align*}
Now there are three cases:
\begin{itemize}
    \item $j = k = 1$
    \item only one of $j$ or $k = 1$
    \item $j,k \neq 1$.
\end{itemize}
\textbf{Case 1:}
Evaluate the polynomial at the point $(1,0,...,0) \in \mathbb{K}^n$. This gives us the coefficient of $x_1^d$ in $f$.
\newline
\textbf{Case 2:} Exactly one of $j$ or $k$ is $1$. Without loss of generality, we assume $j = 1$ and $k=2$. So we want to compute $\text{coeff}_{x_1^{d-1}x_2}(f)$. We evaluate the polynomial $f$ at the point $\bar{t} = (t,1,0,...,0) \in \mathbb{K}^n$. Now it's easy to check that
$\text{coeff}_{t^{d-1}}(f(\bar{t})) = \text{coeff}_{x_1^{d-1}x_2}(f)$. So we need to only interpolate and calculate the coefficient of $t^{d-1}$ in $f(\bar{t})$. 

\textbf{Case 3:} $j,k \neq 1$.
\newline
Now in this, there are two cases : 
\begin{itemize}
    \item $j = k = 2$ : Here we take a similar strategy as Case 2. We evaluate $f$ at $\bar{t}  = (t,1,0,...,0) \in \mathbb{K}^n$. Then $\text{coeff}_{t^{d-2}}(f(\bar{t})) = \text{coeff}_{x_1^{d-2}x_2^2}(f)$
    So interpolate and compute the coefficient of $t^{d-2}$ in $f(\bar{t})$.
    \item The final case is when the indices are all distinct. Let $j = 2 $ and $k = 3$ without loss of generality. We evaluate the polynomial $f$ at $\bar{t} = (t,1,1,0,...,0) \in \mathbb{K}^n$. 
    Now
    \begin{align*}
        \text{coeff}_{t^{d-2}}(f(\bar{t})) = \text{coeff}_{x_1^{d-2}x_2^2}(f) + \text{coeff}_{x_1^{d-2}x_2x_3}(f) + \text{coeff}_{x_1^{d-2}x_3^2}(f) 
    \end{align*}
    Now, using the previous case, we compute $\text{coeff}_{x_1^{d-2}x_2^2}(f)$ and $\text{coeff}_{x_1^{d-2}x_3^2}(f)$, subtract them from $\text{coeff}_{t^{d-2}}(f(\bar{t}))$ and return the answer.
\end{itemize}
Each case of this algorithm requires us to do univariate polynomial interpolation at most constantly many number of times and this can be done in time $O(M(d)\log d)$  \cite{10.5555/2512973} (Section 10.2)  where $M(d)$ is the number of arithmetic operations for polynomial multiplication. Rest of the operations can be done in time $O(1)$. It also requires $O(d)$ many oracle calls to the blackbox.
\newline
So the algorithm uses $O(M(d)\log d)$ many arithmetic operations and $O(d)$ many oracle calls to the blackbox.
\end{proof}

\begin{theorem}\label{thm:complexitycubes}
If a degree $3$ form $f \in \mathbb{K}[x_1,...,x_n]$ is given in dense representation, Algorithm \ref{algo:cubes} runs in time $O(n^{\omega+1})$ where $\omega$ is the exponent of matrix multiplication.
\newline
If the degree $3$ form $f \in \mathbb{K}[x_1,...,x_n]$ is given as a blackbox then the algorithm makes $O(n^2)$ many calls to the blackbox and $O(n^{\omega+1})$ many arithmetic operations.
\end{theorem}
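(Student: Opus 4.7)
The plan is to walk through Algorithm~\ref{algo:cubes} step by step, bound each step separately, and identify the diagonalisability test as the sole bottleneck contributing $\Theta(n^{\omega+1})$; since $\omega \geq 2$, all other costs (namely $O(n^\omega)$ matrix-algebra operations and $O(n^3)$ slice computation) are absorbed.

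First I would account for computing the three slices $T_1, T_2, T_3$ of $h(x) = f(Rx)$. In the dense setting I would invoke Theorem~\ref{thm:P3structural}: writing $T_k = R^T D_k R$ with $D_k = \sum_i r_{ik} S_i$, each $D_k$ is a weighted sum of $n$ slices of $f$ at cost $O(n^3)$, and each congruence is $O(n^\omega)$, for a total of $O(n^3)$ over the three slices. In the blackbox setting I would apply Lemma~\ref{lem:circuitcomplexity} to the blackbox for $h$ induced by the blackbox for $f$: for $d=3$ each of the $O(n^2)$ distinct entries of a slice needs $O(1)$ calls to $h$ and $O(1)$ arithmetic operations, so the three slices together need $O(n^2)$ blackbox calls to $h$. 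Each evaluation $h(x)=f(Rx)$ reduces to one call to $f$ after forming $Rx$; since the interpolation points used in the proof of Lemma~\ref{lem:circuitcomplexity} have $O(1)$ nonzero coordinates, each $Rx$ costs only $O(n)$ operations. Hence the slice-computation phase uses $O(n^2)$ blackbox calls to $f$ and $O(n^3)$ arithmetic operations.

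Next I would dispatch the remaining linear-algebraic subroutines in one stroke: the singularity test on $T_1$ and the inverse $T_1^{-1}$ each take $O(n^\omega)$ by Gaussian elimination, the two products $T_1' T_2$ and $T_1' T_3$ cost $O(n^\omega)$, and the commutativity check comes down to comparing $(T_1'T_2)(T_1'T_3)$ with $(T_1'T_3)(T_1'T_2)$, again $O(n^\omega)$.

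The main obstacle is the diagonalisability test for $M := T_1' T_2$, which dictates the final bound. I would invoke Lemma~\ref{lem:compdiag}: compute the characteristic polynomial $\chi_M$ in $O(n^\omega)$ operations, form the square-free part $P_M = \chi_M / \gcd(\chi_M, \chi_M')$ in $O(M(n) \log n)$ operations via fast univariate polynomial gcd, and verify that $P_M(M) = 0$ by Horner's rule: with $\deg P_M \leq n$ this amounts to $n$ products of $n \times n$ matrices, at total cost $O(n \cdot n^\omega) = O(n^{\omega+1})$. Over $\mathbb{K} = \mathbb{R}$ one additionally checks via Sturm's theorem that all roots of $P_M$ are real, in time polynomial in $n$. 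The Horner substitution is the unique step whose cost reaches $\Theta(n^{\omega+1})$, so it dominates every other step and yields the stated arithmetic bound in both models; in the blackbox case the algorithm additionally performs the $O(n^2)$ calls to $f$ already accounted for above.
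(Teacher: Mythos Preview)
Your proposal is correct and follows essentially the same step-by-step cost accounting as the paper, with the Horner evaluation of $P_M(M)$ identified as the $O(n^{\omega+1})$ bottleneck. The only cosmetic differences are that the paper quotes $O(n^{\omega}\log n)$ (Keller--Gehrig) for the characteristic polynomial and $O(n^2)$ Euclidean gcd rather than your fast variants, and your observation that the interpolation points from Lemma~\ref{lem:circuitcomplexity} have $O(1)$ nonzero coordinates (so forming $Rx$ costs $O(n)$ per call) is a refinement the paper does not spell out; none of this affects the final bound.
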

\begin{proof}
The following are the different stages of computation required in this algorithm:
\begin{enumerate}
    \item Recall from Theorem \ref{thm:P3structural}, the slices $T_i$ of $h = f(Rx)$ are given by the formula $T_k = R^T(\sum_{i\in [n]}r_{i,k}S_i)R$. If the polynomial is input in dense representation, then the elements of $S_i$ can be computed from the coefficients of $f$. Then we take linear combinations of the $S_i$'s and computing $T_1,T_2,T_3$ takes $O(n^3)$ many arithmetic operations.
    \newline
    If the polynomial is given as a blackbox, we compute $x' = Rx$ and we call the blackbox on this input.  
    \item \textbf{Compute } $T_{1}$, $T_{2}$, $T_{3}$
        We know 
        \begin{align*}
            (T_{k})_{ij} = \frac{1}{3!}\partial_{x_ix_jx_k}(h) 
        \end{align*}
         So we can extract each entry of $T_{k}$ using constant many calls to the blackbox and constantly many arithmetic operations using Lemma~\ref{lem:circuitcomplexity}. There are in total $3n^2$ such entries that we need to compute. So the total number of calls to the blackbox is $O(n^2)$ and the number of arithmetic operations is $O(n^2)$.
    \item \textbf{Check if }$T_1$\textbf{ is invertible. If invertible, compute }
    $T_1' = T_1^{-1}$.
    \newline
    This can be done in time at most $O(n^3)$. (Faster algorithms exist \cite{10.5555/2512973} but this bound is enough since it is not the most expensive step of the algorithm.)
    \item \textbf{Checking commutativity of }$T_1'T_2$ and $T_1'T_3$.
    \newline
    Here we compute the product $T_1'T_2T_1'T_3$ and $T_1'T_3T_1'T_2$ and check if their difference is $0$. This can be done in time $O(n^{\omega})$. 
    \item \textbf{Checking the diagonalisability of }$T_1'T_2$:
\newline
Here we use Lemma \ref{lem:compdiag}. Hence there are four steps:
\begin{itemize}
    \item Compute the characteristic polynomial of $M$ i.e. $\chi_M$. This can be done in time $O(n^{\omega}\log n)$. \cite{KELLERGEHRIG1985309}
    \item Compute gcd$(\chi_M,\chi_M',)$. Since, $deg(\chi_M) \leq n$, this can be done in $O(n^2)$ using Euclidean Algorithm. \cite{10.5555/2512973}
    \item Compute $P_M = \frac{\chi_M}{\text{gcd}(\chi_M,\chi_M' )}$. This is can be computed in $O(n^2)$ using the standard long-division algorithm.
    \item Check if $P_M(M) = 0$. Using Horner's Method, we can evaluate the polynomial at $M$ using $n$ many matrix multiplications only. Hence, computing $P_M(M)$ takes $O(n^{\omega+1})$ time. 
\end{itemize}
Hence, we can conclude that the diagonalisability of $T_1'T_2$ can be checked in time $O(n^{\omega+1})$. Note that this is the most expensive step of the algorithm!
\end{enumerate}
So we conclude that if the polynomial is given as an input in the dense representation model, then the algorithm runs in time $O(n^{\omega+1})$. 
\newline
If the polynomial is given as a blackbox, then the algorithm makes $O(n^2)$ many oracle calls to the blackbox and takes $O(n^{\omega+1})$ many arithmetic operations.
\end{proof}
\subsection{Computing the complexity of the randomized algorithm in \cite{koiran2020derandomization} and comparing it with our algorithm :}
Recall the \cite{koiran2020derandomization} equivalence from Section \ref{sec:cubes}. The algorithm proceeds as follows:
\begin{enumerate}
    \item Pick a random matrix $R \in M_n (\mathbb{K})$ and set $h(x) = f(Rx)$.
    \item Let $T_1 ,..., T_n$ be the slices of $h$. If $T_1$ is singular, reject. Otherwise,
compute $T_1' = T_1^{-1}$ .
    \item If the matrices $T_1'T_k$ commute and are all diagonalisable over $\mathbb{K}$, accept. Otherwise, reject.
\end{enumerate}
The following are the different stages of computation required in this algorithm:
\begin{enumerate}
    \item   If the polynomial is input in dense representation, here we have to compute all the slices $T_1,...,T_n$ and following the proof of Theorem~\ref{thm:complexitycubes}, this takes $O(n^4)$ many arithmetic operations
    \newline
    If the polynomial is given as a blackbox, we compute $x' = Rx$. And we call the blackbox on this input. 
    \item \textbf{Compute } $T_{1}, T_{2}, ... , T_{n}$
        We know 
        \begin{align*}
            (T_{k})_{ij} = \frac{1}{3!}\partial_{x_ix_jx_k}(h). 
        \end{align*}
         So we can extract each entry of $T_{k}$ using constant many calls to the blackbox and constantly many arithemtic operations using Lemma~ \ref{lem:circuitcomplexity}. There are in total $n^3$ such entries that we need to compute. So the total number of calls to the blackbox is $O(n^3)$ and the number of arithmetic operations is $O(n^3)$
    \item \textbf{Check if }$T_1$\textbf{ is invertible. If invertible, compute }
    $T_1' = T_1^{-1}$.
    \newline
    This can be done in time at most $O(n^3)$.
    \item \textbf{Checking pairwise commutativity of }$\{T_1'T_j\}_{j \in [n]}$ 
    \newline
    Here we compute the product $T_1'T_iT_1'T_j$ and $T_1'T_jT_1'T_i$ and check if their difference is $0$. For each pair, this can be done in time $O(n^{\omega})$. Since there are $\binom{n}{2}$ many pairs, this can be done in time $O(n^{\omega+2})$.
    \item \textbf{Checking the diagonalisability of }$T_1'T_j$ \textbf{ for all } $j \in [n]$:
    \newline
    As we showed that the diagonalisability of each $T_1'T_j$ can be checked in time $O(n^{\omega+1})$. So the total time taken for checking diagonalisability of $n$ such matrices is $O(n^{\omega+2})$.
\end{enumerate}
So we conclude that if the polynomial is given as an input in the dense representation model, then the algorithm runs in time $O(n^{\omega+2})$.
\newline
If the polynomial is given as a blackbox, then the algorithm makes $O(n^3)$ many calls to the blackbox and requires $O(n^{\omega+2})$ many arithmetic operations.
\par
So we manage to shave a factor of~$n$ from \cite{koiran2020derandomization} in both cases: when the polynomial is given in dense representation as well as when it is input as an arithmetic circuit.
\section{Complexity analysis for equivalence to some polynomial in $\mathcal{P}_d$}\label{app:dthpowers}
\subsection{Complexity Analysis in the algebraic model}\label{app:pdalgebraic}
In this section, we provide a detailed complexity analysis of Algorithm \ref{alg:pd}. We show that if a degree $d$ polynomial in $n$ variables over $\mathbb{C}$ is given as a blackbox, the algorithm makes $\text{poly}(n,d)$ many calls to the blackbox and performs $\text{poly}(n,d)$ many arithmetic operations to decide if $f$ is equivalent to some polynomial in $\mathcal{P}_d$.
\begin{theorem}\label{thm:pdcomplexity}
If a degree-d form $f \in \mathbb{C}[x_1,...,x_n]$ is given as a blackbox, then Algorithm \ref{alg:pd} makes $O(n^2d)$ many calls to the blackbox and requires $O(n^2d \log^2d \log\log d + n^{\omega+1})$ many arithmetic operations.
\end{theorem}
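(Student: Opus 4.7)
The plan is to bound the cost of each stage of Algorithm~\ref{alg:pd} separately, following the same blueprint as the analysis in Theorem~\ref{thm:complexitycubes} but adapted to handle higher-degree slices. The stages are: (1)~extracting the three special slices $T_{\bar 1},T_{\bar 2},T_{\bar 3}$ of $h(x)=f(Rx)$ from blackbox access to $f$; (2)~testing whether $T_{\bar 1}$ is invertible and, if so, computing $T'_{\bar 1}=T_{\bar 1}^{-1}$; (3)~forming the two products $T'_{\bar 1}T_{\bar 2}$ and $T'_{\bar 1}T_{\bar 3}$ and checking that they commute; and (4)~verifying that $T'_{\bar 1}T_{\bar 2}$ is diagonalisable over $\mathbb{C}$.

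I expect stage~(1) to dominate the blackbox-call count and to contribute the first term of the arithmetic bound. Each entry $(T_{\bar k})_{ij}$ of the three slices equals, up to a fixed multinomial factor, the coefficient of $x_k^{d-2}x_i x_j$ in $h$. This is exactly the kind of coefficient addressed by Lemma~\ref{lem:circuitcomplexity}, so computing one such entry costs $O(d)$ blackbox calls and $O(M(d)\log d)$ arithmetic operations; with the fast multiplication bound $M(d)=O(d\log d\log\log d)$, this is $O(d\log^2 d\log\log d)$ operations per entry. Summed over the $O(n^2)$ distinct entries of the three symmetric $n\times n$ matrices, this contributes $O(n^2 d)$ blackbox queries and $O(n^2 d\log^2 d\log\log d)$ arithmetic operations. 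It is important here that the algorithm needs only the three slices $T_{\bar k}$, whose entries are repeated partial derivatives along a single axis, rather than arbitrary slices of the tensor; this is what keeps the blackbox cost $\mathrm{poly}(n,d)$ rather than $n^{\Theta(d)}$.

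Stages~(2) and~(3) are routine. Inversion of $T_{\bar 1}$ costs $O(n^\omega)$, and the four matrix products required to form and compare $T'_{\bar 1}T_{\bar 2}\cdot T'_{\bar 1}T_{\bar 3}$ with $T'_{\bar 1}T_{\bar 3}\cdot T'_{\bar 1}T_{\bar 2}$ cost $O(n^\omega)$ each. The main obstacle on the arithmetic side is therefore stage~(4). Here I would invoke Lemma~\ref{lem:compdiag}: diagonalisability of $M=T'_{\bar 1}T_{\bar 2}$ can be tested by computing the squarefree part $P_M$ of the characteristic polynomial $\chi_M$ and checking whether $P_M(M)=0$. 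Following the breakdown already given in the proof of Theorem~\ref{thm:complexitycubes}, computing $\chi_M$ costs $O(n^\omega\log n)$, extracting $P_M$ via the Euclidean algorithm on $\chi_M$ and $\chi_M'$ costs $O(n^2)$, and evaluating the degree-$\le n$ polynomial $P_M$ at the matrix $M$ by Horner's method requires $O(n)$ matrix multiplications, i.e.\ $O(n^{\omega+1})$ operations; this last bound dominates and produces the $n^{\omega+1}$ term in the final count.

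Summing the contributions from the four stages yields the claimed $O(n^2 d)$ blackbox calls and $O(n^2 d\log^2 d\log\log d + n^{\omega+1})$ arithmetic operations. No new probabilistic or algebraic ingredients are needed beyond Lemma~\ref{lem:circuitcomplexity}, standard fast linear-algebra subroutines (matrix inversion, multiplication, and characteristic polynomial), and the diagonalisability test from Lemma~\ref{lem:compdiag}; the argument amounts to carefully totalling these costs.
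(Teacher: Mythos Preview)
Your proposal is correct and follows essentially the same approach as the paper's proof: both analyses break Algorithm~\ref{alg:pd} into the same stages, invoke Lemma~\ref{lem:circuitcomplexity} to bound the cost of extracting the three slices $T_{\bar 1},T_{\bar 2},T_{\bar 3}$, and then defer the cost of the invertibility, commutativity, and diagonalisability checks to the analysis already carried out in Theorem~\ref{thm:complexitycubes} (with Lemma~\ref{lem:compdiag} driving the $O(n^{\omega+1})$ bottleneck). The paper's own write-up is in fact terser than yours, simply citing Theorem~\ref{thm:complexitycubes} for stages (2)--(4) rather than re-deriving the breakdown of the diagonalisability test.
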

\begin{proof}
The following are the different stages of computation required in this algorithm:
\begin{enumerate}
    \item  If the polynomial is given as a blackbox, we compute $x' = Rx$. And we call the blackbox on this input.  
    
    \item Compute $T_{\bar{1}}, T_{\bar{2}}, T_{\bar{3}}$. We know that $(T_{\bar{k}})_{ij} = \frac{1}{d!}\partial_{x_ix_jx_k^{d-2}}(h)$.So we can extract each entry of $T_{\bar{i}}$ using $O(d)$ many oracle calls to $C'$ and $O(M(d)\log d)$ many arithmetic operations using Lemma (\ref{lem:circuitcomplexity}). There are in total $3n^2$ such entries that we need to compute. So this entire operation can be done using $O(n^2d)$ many oracle calls to the blackbox and $O(n^2M(d)\log d)$ many arithmetic operations.
    \item \textbf{Check if }$T_{\bar{1}}$\textbf{ is invertible. If invertible, compute }
        $T'_{\bar{1}}= (T_{\bar{1}})^{-1}$
%    \newline
%    This can be done in time at most $O(n^3)$
    \item \textbf{Checking commutativity of }$T'_{\bar{1}}T_{\bar{2}}$ and $T'_{\bar{1}}T_{\bar{3}}$.
%    \newline
%    Here we compute the product $T'_{\bar{1}}T_{\bar{2}}T'_{\bar{1}}T_{\bar{3}}$ and $T'_{\bar{1}}T_{\bar{3}}T'_{\bar{1}}T_{\bar{2}}$ and check if their difference is $0$. This can be done in time $O(n^{\omega})$. 
    \item \textbf{Checking the diagonalisability of }$T'_{\bar{1}}T_{\bar{2}}$:
%\newline
%Here we use Lemma (\ref{lem:compdiag}). Hence there are four steps:
%\begin{itemize}
%    \item Compute the characteristic polynomial of $M$ i.e. $\chi_M$. This can be done in time $O(n^{\omega}\log n)$ (\cite{KELLERGEHRIG1985309})
%    \item Compute gcd$(\chi_M,\chi_M',)$. Since, $deg(\chi_M) \leq n$, this can be done in $O(n^2)$ using Euclidean Algorithm. (\cite{10.5555/2512973})
%    \item Compute $P_M = \frac{\chi_M}{\text{gcd}(\chi_M,\chi_M' )}$. This is can be computed in $O(n^2)$ using the standard long-division algorithm.
%    \item Check if $P_M(M) = 0$. Using Horner's Method, we can evaluate the polynomial at $M$ using $n$ many matrix multiplications only. Hence, computing $P_M(M)$ takes $O(n^{\omega+1})$ time. 
%\end{itemize}
%Hence, we can conclude that the diagonalisability of $T_{\bar{1}}'T_{\bar{2}}$ can be checked in time $O(n^{\omega+1})$
\end{enumerate}
Steps (3), (4) and (5) are exactly the same as Theorem \ref{thm:complexitycubes}. This is because they don't require any assumptions on $T_{\bar{1}},T_{\bar{2}},T_{\bar{3}}$ except for the fact that they are $n \times n$ matrices. Thus, from the proof of Theorem \ref{thm:complexitycubes}, we get that these steps can be checked in $O(n^{\omega+1})$ many arithmetic operations.
\par
So we conclude that if the polynomial is given as a blackbox, then the algorithm makes $O(n^2d)$ many calls to the blackbox; the number of arithmetic operations required is $O(n^2M(d)\log d + n^{\omega+1})$.
\newline
Since $M(d) = O(d\log d \log\log d)$, the number of arithmetic operations is $O(n^2d \log^2d \log\log d + n^{\omega+1})$.
\subsection{Complexity analysis for the bit model}\label{sec:bitcomplexity}
We look at the case where $f \in \mathbb{Q}[x_1,...,x_n]$ is a degree-$d$ form and we want to check if it can be written as a linear combination of $d$th powers of linear forms over $\mathbb{R}$ or $\mathbb{C}$. Our algebraic algorithms run in polynomial time in the standard bit model of computation, i.e., they are “strongly polynomial” algorithms (this is not automatic due to the issue of coefficient growth during the computation). For a detailed discussion of how the previous algorithms fail to give a polynomial time algorithm for this problem in this model, refer to Section 1.1 from \cite{koiran2020derandomization}. 
\par
We try to estimate the complexity of each step of the algorithm in the standard bit model of computation, following the proof of Theorem \ref{thm:pdcomplexity}. In Step (1), we take a matrix $R$ such that its entries $r_{i,j}$ are picked uniformly and independently at random from a finite set $S$.Hence the bit size of the entries of $R$ are bounded by $\log(|S|) + 1$. We define $h = f(Rx)$.  Recall from Theorem \ref{thm:Pdslicestucture}, that the slices $T_{\bar{i}}$ of $h$ can be written as $R^T(\sum_{i_1,...,i_{d-2} \in [n]} (\prod_{m \in [d]}r_{i_m,i})S_{i_1...i_{d-2}})R$. The entries of the slices $S_{i_1...i_{d-2}}$ are essentially the coefficients of $f$. So they are bounded by the bit size of the maximum coefficient in $f$ which we define to be $b_f$. Therefore, the bit size of each element of $T_{\bar{i}}$ is $b := \text{poly}(\log(|S|),\log(n),d,b_f)$ for all $i$. Now the elements of $T_{\bar{1}},T_{\bar{2}},T_{\bar{3}}$ are computed using Lemma \ref{lem:circuitcomplexity} that uses polynomial interpolation and hence, computing these matrices takes time $\text{poly}(n,d,b)$. In Step (3), we check if the slice $T_{\bar{1}}$ is invertible and if invertible, it is inverted. Since the bit-size of the inputs of $T_{\bar{1}}$ are bounded by $b$, the matrix can be inverted using Bareiss' Algorithm \cite{Bar68} in time $\text{poly}(n,d,b)$. In Step (4), testing commutativity of $T_{\bar{1}}'T_{\bar{2}}$ and $T_{\bar{1}}'T_{\bar{3}}$ requires only matrix multiplication which does not blow up the entry of the matrices and hence, this step can be done in time $\text{poly}(n,d,b)$. In Step (5), we need to check the diagonalisability of $M = T_{\bar{1}}'T_{\bar{2}}$. Over the field of complex numbers it therefore suffices to check that $P_M(M) = 0$ which can be done in time $\text{poly}(n,d,b)$. For the discussion of how the same can be executed over $\mathbb{R}$, refer to Section 4 of \cite{koiran2020derandomization}.
\par
So the total time required for the entire computation in the bit model of complexity is $\text{poly}(n,d,log(|S|),b_f)$ where $b_f$ is the bit size of the maximum coefficient in $f$ and $S$ is the set from which the entries of $R$ are picked uniformly and independently at random. 
%%%%%%%%%%%%%%%
%we need to check additionally that all the roots of $P_M$
%are real. This can be done for instance with the help of Sturm sequences,
%which can be used to compute the number of roots of a real polynomial on
%any real (possibly unbounded) interval. Alternatively, the number of real roots of a real polynomial can be obtained through Hurwitz determinants
%(\cite{RS02}, Corollary 10.6.12), and is given by the signature of the Hermite quadratic form (\cite{Basu06}, Theorem 4.48). The arithmetic cost of these methods is polynomially
%bounded, and they can also be implemented to run in polynomial time in
%the bit model. \footnote{For Sturm sequences this is not obvious because in a naive implementation, the bit size of the numbers involved may grow exponentially. There is however an efficient implementation based on subresultants \cite{Basu06}. The same issue of coefficient growth already occurs in the computation of the gcd of two polynomials, and can also be solved with subresultants}
%%%%%%%%%%%%%%%%%%%%%%%%%%%%%%%%%%%%%%%%%%
\end{proof}

\section{Complexity analysis for variable minimization}\label{app:minvars}
In this section, we provide a detailed complexity analysis of Algorithm \ref{alg:minvars}. We show that if a degree $d$ polynomial in $n$ variables over $\mathbb{C}$ is given as a blackbox, the algorithm performs $\text{poly}(n,d)$ many calls to the blackbox and performs $\text{poly}(n,d)$ many arithmetic operations to check if the polynomial can be written as a linear combination of $t$ many linearly independent linear forms for some $t \leq n$ (see Theorem \ref{thm:minvarscomp}).
\par
In the next lemma, we show that given a blackbox computing the polynomial, we can compute the partial derivative with respect to a single variable at a given point in $O(d)$ many oracle calls and $\text{poly}(d)$ arithmetic operations.
\begin{lemma}\label{lem:pardermatrixcomp}
Given a blackbox computing $P \in \mathbb{K}[x_1,...,x_n]$ of degree at most $ d$, and given points $\alpha_1,...,\alpha_n \in \mathbb{K}^n$, we can  compute the matrix $M$ such that  $M_{ij} = \partial_{x_i}(P)(\alpha_j)$ using $O(n^2d)$ many oracle calls to the blackbox and $O(n^2M(d)\log d)$ many arithmetic operations.
\end{lemma}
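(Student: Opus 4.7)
The plan is to compute each entry $M_{ij} = \partial_{x_i}(P)(\alpha_j)$ by a single univariate polynomial interpolation, following the same template as Lemma~\ref{lem:circuitcomplexity}. Fix an index pair $(i,j)$ and define the univariate polynomial
\[
  Q_{i,j}(t) := P(\alpha_j + t\, e_i),
\]
where $e_i$ is the $i$-th standard basis vector. Since $P$ has total degree at most $d$ and only the $i$-th coordinate is being perturbed, $Q_{i,j}$ has degree at most $d$ in $t$. Moreover, by the chain rule, the coefficient of $t$ in $Q_{i,j}$ is exactly $\partial_{x_i}(P)(\alpha_j)$, so recovering $Q_{i,j}$ as a dense univariate polynomial suffices to read off $M_{ij}$.

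To recover $Q_{i,j}$, the plan is to pick $d+1$ pairwise distinct interpolation points $t_0,\dots,t_d \in \mathbb{K}$ (which exist since we work in a field of characteristic zero), query the blackbox on the inputs $\alpha_j + t_k e_i$ for $k=0,\dots,d$, and then run a fast univariate polynomial interpolation on the resulting $(d+1)$ value pairs. Forming each evaluation point costs only $O(1)$ arithmetic operations (one vector differs from $\alpha_j$ in a single coordinate), and classical fast interpolation (e.g.\ the subproduct-tree algorithm from \cite{10.5555/2512973}, Section~10) requires $O(M(d)\log d)$ arithmetic operations. Hence the cost per entry is $O(d)$ blackbox calls plus $O(M(d)\log d)$ arithmetic operations.

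Summing over the $n^2$ entries of $M$ yields the claimed bounds of $O(n^2 d)$ blackbox queries and $O(n^2 M(d)\log d)$ arithmetic operations. The only subtle point is ensuring that the interpolation points can be chosen from $\mathbb{K}$, but since $\mathbb{K} \in \{\mathbb{R},\mathbb{C}\}$ (or, more generally, any infinite field or a sufficiently large finite set) this is immediate; fixing, say, $t_k = k$ for $k=0,\dots,d$ works. There is no real obstacle in this proof: the main content is the observation that a single directional restriction $t \mapsto P(\alpha_j + t e_i)$ packages the desired partial derivative into the degree-one coefficient of a univariate polynomial to which standard fast interpolation applies.
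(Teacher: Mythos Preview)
Your proof is correct and follows essentially the same approach as the paper: both reduce each entry $M_{ij}$ to a single univariate interpolation along the $i$-th coordinate line through $\alpha_j$, at a cost of $d+1$ blackbox calls and $O(M(d)\log d)$ arithmetic operations, then sum over the $n^2$ entries. The only cosmetic difference is that the paper interpolates $x\mapsto P(x,a_2,\dots,a_n)$, differentiates, and evaluates at $a_1$, whereas you interpolate $t\mapsto P(\alpha_j+te_i)$ and read off the linear coefficient directly.
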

\begin{proof}
We assume $i=1$.
Given points $a_1,...,a_n \in \mathbb{K}$, $\partial_{x_1}(P)(a_1,...,a_n)$ can be computed using the following steps of computation:
\begin{itemize}
    \item Compute the polynomial $P(x,a_2,...,a_n)$ explicitly using polynomial interpolation. This can be done using $O(M(d)\log d)$ many arithmetic operations and $d+1$ many calls to the blackbox. \cite{10.5555/2512973} (Section 10.2)
    \item Compute $\partial_x(P(x,a_2,...,a_n))$ using $O(d)$ many arithmetic operations.
    \item evaluate it at $x = a_1$, which requires $O(d)$ many arithmetic operations.
\end{itemize}
We do this for every $(a_1,...,a_n) = (\alpha_j^{(1)},...,\alpha_j^{(n)})$ for all $j \in [n]$. Thus each row of the matrix $M$ can be computed using $O(nM(d)\log d)$ many operations. For each $x_i$ such that $i \in [n]$, a similar algorithm works and the matrix $M$ can be computed using $O(n^2M(d)\log d)$ arithmetic operations and $O(n^2d)$ many calls to the blackbox.
\end{proof}
Now we are in a position to analyse the complexity of Algorithm \ref{alg:minvars}
\begin{theorem}\label{thm:minvarscomp}
Given a blackbox computing a polynomial $P \in \mathbb{K}[x_1,...,x_n]_d$ with $t$ essential variables, the algorithm makes $O(n^2d)$ many calls to the blackbox and 
requires $O(n^{\omega}+ n^2M(d)\log d + t^{\omega+1})$ many arithmetic operations.
\end{theorem}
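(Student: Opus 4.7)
The plan is to bound the cost of each line of Algorithm~\ref{alg:minvars} separately and sum the contributions. The work splits naturally into three parts: building the matrix $M$ of evaluated partial derivatives, standard linear-algebra bookkeeping to produce the change of variables $A$, and the subroutine call to Algorithm~\ref{alg:pd} on the resulting $t$-variable polynomial.

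For Step~2, I would invoke Lemma~\ref{lem:pardermatrixcomp} directly: it yields $O(n^2 d)$ blackbox calls to $P$ and $O(n^2 M(d) \log d)$ arithmetic operations. This single step accounts for the entire blackbox-call budget claimed in the theorem and for the $n^2 M(d) \log d$ term of the arithmetic bound.

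For Steps~3--6, only standard linear algebra is required: Gaussian elimination on an $n \times n$ matrix (to obtain a basis $\{v_1,\ldots,v_{n-t}\}$ of $\ker M$), extension of that kernel basis to a basis of $\mathbb{K}^n$ (e.g. by a second Gaussian-elimination pass on the $n\times(2n-t)$ augmented matrix), and the explicit assembly of $A$. Each of these tasks can be carried out in $O(n^\omega)$ arithmetic operations, which produces the $n^\omega$ term. Defining the blackbox for $f(x) = P(Ax)$ restricted to the first $t$ variables is essentially free at this point; the per-query cost will be absorbed into the analysis of Step~7.

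Finally, Step~7 runs Algorithm~\ref{alg:pd} on $f(x_1,\ldots,x_t,0,\ldots,0)$. By Theorem~\ref{thm:carlini}, after the change of variables $A$ the polynomial genuinely depends on $t$ variables only, so this is effectively a degree-$d$ form in $t$ variables accessed via a blackbox, each query of which reduces to one call to $P$ preceded by a cheap linear substitution. Applying Theorem~\ref{thm:pdcomplexity} with $n \to t$ gives $O(t^2 d)$ blackbox calls and $O(t^2 M(d) \log d + t^{\omega+1})$ arithmetic operations. Since $t \le n$, the extra blackbox cost is absorbed into $O(n^2 d)$ and the $t^2 M(d) \log d$ term is absorbed into $O(n^2 M(d) \log d)$, leaving $t^{\omega+1}$ as the only new contribution; summing the three parts yields exactly the claimed bounds. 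The analysis is mostly routine bookkeeping, so there is no real obstacle; the one point that requires some care is recognizing that Step~7 must be analyzed with $t$ in place of $n$ (and not trivially with $n$), which is exactly what the variable-minimization preprocessing in Steps~2--6 makes possible and what keeps the final dependence at $t^{\omega+1}$ rather than $n^{\omega+1}$.
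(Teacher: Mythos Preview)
Your proposal is correct and follows essentially the same approach as the paper: both invoke Lemma~\ref{lem:pardermatrixcomp} for the matrix $M$, bound the kernel/basis computations by $O(n^{\omega})$, and then apply Theorem~\ref{thm:pdcomplexity} with $n$ replaced by $t$ for the call to Algorithm~\ref{alg:pd}, absorbing the resulting $O(t^2d)$ and $O(t^2M(d)\log d)$ terms into the corresponding $n$-terms. Your write-up is in fact slightly more explicit than the paper's in noting that each blackbox query to $f$ costs one call to $P$ plus a linear substitution, but the overall structure and the cited lemmas are identical.
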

\begin{proof}
The following are the steps of the algorithm:
\begin{itemize}
    \item Using Lemma \ref{lem:pardermatrixcomp}, we know that the matrix $M$ can be computed in time $O(n^2M(d)\log d)$ many arithmetic operations and $O(n^2d)$ many calls to the blackbox.
    \item The basis of $\text{ker}(M)$ can be computed and the basis can be completed using $O(n^{\omega})$ many arithmetic operations.
    \item For $n = t$, using Theorem \ref{thm:pdcomplexity} the algorithm checks equivalence using $O(t^2d)$ many calls to the blackbox and $O(t^2M(d)\log d + t^{\omega+1})$ many arithmetic operations.
\end{itemize}
So we conclude that if the polynomial is given as a blackbox, then the algorithm makes $O(n^2d)$ many calls to the blackbox and the number of arithmetic operations required is $O(n^{\omega}+ n^2M(d)\log d + t^{\omega+1})$.
\par
\end{proof}
Using Section \ref{sec:bitcomplexity} and the fact that we just need to do polynomial interpolation and completion of the basis, we conclude that Algorithm \ref{alg:minvars} takes time $\text{poly}(n,d,\log(|S|),b_f)$ in the bit model of computation. Here $b_f$ is the bit size of the maximum coefficient in $f$ and $S$ is the set from which the entries of $R$ and $\alpha_j^{(i)}$s are picked uniformly and independently at random.
\section*{Acknowledgements}
We would like to thank Mateusz Skomra for useful discussions in the early stages of this work, and Frédéric Magniez for discussions on commutativity testing.

%\bibliographystyle{alpha}
%\bibliography{KoiranSkomra,other}

\newcommand{\etalchar}[1]{$^{#1}$}

%\begin{comment}
%\newpage
%\section{(Wishful) Theorem for extending from $P_3$ to $P_4$}
%\subfile{extendingtheorem}
%\end{comment}

\end{document}